\pdfoutput=1

\documentclass[a4paper]{article}

\usepackage[pages=all, color=black, position={current page.south}, placement=bottom, scale=1, opacity=1, vshift=5mm]{background}
\SetBgContents{

}      

\usepackage[margin=1in]{geometry} 

\usepackage{amsmath}
\usepackage{amsthm}
\usepackage{amssymb}

\usepackage[utf8]{inputenc}
\usepackage{hyperref}
\hypersetup{
	unicode,
	pdfauthor={Xi Chen, Cassandra Marcussen},
	pdftitle={Uniformity Testing over Hypergrids with Subcube Conditioning},
	pdfsubject={Uniformity Testing over Hypergrids with Subcube Conditioning},
	pdfkeywords={article, template, simple},
	pdfproducer={LaTeX},
	pdfcreator={pdflatex}
}

\theoremstyle{plain}
\newtheorem{theorem}{Theorem}

\newtheorem{lemma}[theorem]{Lemma}
\newtheorem{claim}[theorem]{Claim}

\newtheorem{proposition}[theorem]{Proposition}

\theoremstyle{definition}
\newtheorem{definition}[theorem]{Definition}

\newcommand{\reject}{\textsf{reject}}
\newcommand{\accept}{\textsf{accept}}

\usepackage{graphicx, color}
\usepackage{amsmath}
\usepackage{amssymb}
\usepackage{braket}
\usepackage{comment}
\usepackage{mathtools}
\usepackage{bbm}
\usepackage{setspace}
\usepackage{xcolor}
\graphicspath{{fig/}}

\usepackage{algorithm, algpseudocode, algorithmicx}
\usepackage{mathrsfs} 

\usepackage{lipsum}
\newcommand*{\email}[1]{
     \href{mailto:#1}{#1}\par
}

\title{Uniformity Testing over Hypergrids with Subcube Conditioning\vspace{0.2cm}}
\author{Xi Chen \\ Columbia University\\ \email{xichen@cs.columbia.edu}\and Cassandra Marcussen\\Harvard University\\\email{cmarcussen@g.harvard.edu}}

\usepackage{mathtools}

\DeclarePairedDelimiter\abs{\lvert}{\rvert}
\DeclarePairedDelimiter\norm{\lVert}{\rVert}

\makeatletter
\let\oldabs\abs
\def\abs{\@ifstar{\oldabs}{\oldabs*}}
\let\oldnorm\norm
\def\norm{\@ifstar{\oldnorm}{\oldnorm*}}
\makeatother

\begin{document}
\maketitle
\begin{abstract}
    We give an algorithm for testing   uniformity of distributions supported on hypergrids $[m_1]\times \cdots \times [m_n]$,
    which makes $\smash{\widetilde{O}(\text{poly}(m)\sqrt{n}/
    \epsilon^2)}$ many queries to a subcube conditional sampling oracle with $m=\max_i m_i$. When $m$ is a  constant, our algorithm is nearly optimal and  strengthens the algorithm of \cite{canonne2021random} which has the same query complexity but works for hypercubes $\{\pm 1\}^n$ only.

    A key technical contribution behind the analysis of our algorithm is a proof of a robust version of Pisier's inequality for functions over hypergrids 
    using Fourier analysis.
\end{abstract}
\newpage
\tableofcontents
\newpage

\section{Introduction}

Much of today’s data can be thought of as samples from an unknown probability distribution over a large and high-dimensional sample space. Testing global properties of a distribution \cite{CDVV14, VV14, Paninski08, DBLP:journals/corr/DiakonikolasK16, DBLP:journals/corr/DiakonikolasKN14} on such a space, however, is known to be intractable under the classical statistical model where an algorithm can only draw independent samples from the unknown distribution. This holds even for testing the property of \emph{uniformity}\footnote{Given an unknown distribution $p$ over a sample space $\Sigma$, accept with probability at least $2/3$ when $p$ is uniform over $\Sigma$ and reject with probability at least $2/3$ when $p$ is $\epsilon$-far from the uniform distribution in total variation distance.}, for which we know that $\Theta(\sqrt{|\Sigma|}/\epsilon^2)$ samples are both sufficient and necessary \cite{Paninski08, VV14}. 
So for high-dimensional sample spaces such as $\{\pm 1\}^n$ or $[{m_1}] \times \dots \times [{m_n}]$, the number of samples needed under the classical statistical model scales exponentially in $n$.
To circumvent the intractability, recent work has proceeded
by either restricting the class of input distributions (e.g., restricting $p$ to be a product distribution  \cite{DBLP:journals/corr/CanonneDKS16}), or by allowing stronger oracle access to $p$.
The goal of both approaches is to develop algorithms that scale polynomially or even sublinearly in the dimension $n$ under certain   well-motivated assumptions.

One of the most natural models in the latter direction is the \emph{subcube conditioning model}, which is particularly suitable for high-dimensional distributions \cite{CRS15, bc18, canonne2021random,CJLW20}. The model was suggested in \cite{CRS15} and first studied in \cite{bc18} (more discussion on the model and related work can be found in Section \ref{sec:relatedwork}). For the general space of $[{m_1}] \times \dots \times [{m_n}]$, subcube conditional query access allows algorithms to specify a subgrid  of $[m_1] \times \dots \times [m_n]$ by giving a restriction $\rho$ with $\rho_i\in [m_i]\cup \{*\}$ for each $i\in [n]$  and requesting a sample from the distribution conditioned\footnote{When conditioned on a subcube with zero support, one may consider models where the oracle returns either a uniform
sample or outputs ``error.'' We note that our algorithm will never run into this scenario.} on the sample lying in the subgrid specified by $\rho$ (i.e., the set of $x\in [{m_1}] \times \dots \times [{m_n}]$ with $x_i=\rho_i$ for all $i\in [n]$ such that $\rho_i\ne *$). 

Recently,  \cite{canonne2021random} gave an algorithm for testing uniformity over hypercubes $\{\pm 1\}^n$, which makes $\tilde{O}(\sqrt{n}/\epsilon^2)$ queries to a subcube conditional sampling oracle.
The algorithm is nearly optimal (given the $\Omega(\sqrt{n}/\epsilon^2)$ lower bound of \cite{DBLP:journals/corr/CanonneDKS16, DBLP:journals/corr/DaskalakisDK16} for testing uniformity of product distributions under the classical statistical model, which was observed in \cite{bc18} to carry over to subcube conditional sampling).
A drawback of their algorithm, however, is that it only works for hypercubes $\{\pm 1\}^n$.

\emph{In this paper, we study the problem of testing uniformity of distributions over the general hypergrid domain $[m_1] \times \dots \times [{m_n}]$ under the subcube conditioning model.} There are 
a number of compelling reasons  to study this problem.
From a practical perspective, testing algorithms for hypercubes are not applicable in scenarios when the variables\hspace{0.05cm}/\hspace{0.05cm}features are not Boolean.
And natural attempts to reduce the problem over $[{m_1}] \times \dots \times [{m_n}]$ directly to that over hypercubes do not seem to work either because the total variation distance is not preserved or the subcube conditional oracle does not cope with the reduction.
(For the latter consider the reduction from $[4]^n$ to $\{\pm 1\}^{2n}$ by encoding each entry of $[4]$ using two bits.
While the total variation distance is preserved, subcube conditional oracles for distributions over $\{\pm 1\}^{2n}$ cannot be simulated using those for $[4]^n$. The former corresponds to a more powerful oracle where an algorithm can, e.g., fix a coordinate to be in $\{1,3\}$.)

From a theoretical perspective, the problem is well-motivated due to a number of obstacles that one needs to be overcome to generalize the prior work of \cite{canonne2021random} from hypercubes to hypergrids. (More discussion and a comparison of our work with that of \cite{canonne2021random} can be found in Section \ref{sec:sketch}.) One of the primary  challenges is that their analysis of  correctness  crucially relies on a \emph{robust} version of Pisier's inequality. The latter is an inequality from convex analysis that relates the $\ell_s$-norm of a function $f$ over $\{\pm 1\}^n$ to
its $i$th coordinate Laplacian operator $L_i f$
  (see definition below), and it was not known whether a similar inequality holds for functions over hypergrids $[m_1] \times \dots \times [m_n]$.
  
\subsection{Our Contributions}

We study uniformity testing over hypergrids $[m_1] \times \dots \times [m_n]$ with $m_i \ge 2$ for all $i$. Let $m = \max_{i \in [n]} m_i$. Our main result is an algorithm that makes 
  $\smash{\tilde{O}(\text{poly}(m)\sqrt{n}/\epsilon^2)}$
  many subcube conditional queries:

\begin{theorem}[Uniformity Testing]\label{thm:subcube-alg-intro}
There is an algorithm which, given $n,m_1,\ldots,m_n$ and subcube conditional query access to a distribution $p$ supported on $[m_1] \times \dots \times [m_n]$ and a distance parameter $\epsilon \in (0,1)$, makes $\tilde{O}(m^{21} \sqrt{n}/\epsilon^2)$ queries and can distinguish with probability at least $2/3$ between the case when $p$ is uniform, and when $p$ is $\epsilon$-far from uniform in total variation distance.
\end{theorem}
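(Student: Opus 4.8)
The plan is to work with the normalization $p = U\cdot(1+f)$ on the hypergrid $[m_1]\times\cdots\times[m_n]$, where $U$ is uniform; then $\mathbb{E}_U[f]=0$, the distribution $p$ is uniform iff $f\equiv 0$, and $p$ is $\epsilon$-far from uniform iff $\|f\|_1 := \mathbb{E}_U[|f|]\gtrsim \epsilon$. The tester runs many rounds; in each round it (i) uses the subcube oracle to draw a sample $\mathbf{x}\sim p$ and then fixes a random small set of coordinates to agree with $\mathbf{x}$ (equivalently, conditions on a restriction $\brho$ drawn from the induced marginal of $p$), (ii) draws a few more samples from the resulting low-dimensional conditional distribution and computes a collision-based ($\chi^2$-flavored) estimator of its non-uniformity along the free coordinates, and (iii) averages the per-round estimators and thresholds. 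Completeness is immediate: if $p=U$ then every conditional distribution is uniform, each per-round estimator has expectation $0$, and the average concentrates below the threshold.

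The crux is soundness. Conditioning on the values of all but coordinate $i$ and asking how far the conditional law of coordinate $i$ is from uniform on $[m_i]$ is, after normalization, a $\text{poly}(m_i)$-distorted measurement of the mass of the coordinate Laplacian $L_i f$; hence, up to $\text{poly}(m)$ factors, the average per-round signal is governed by a quantity built from $L_1 f,\ldots,L_n f$ — schematically $\tfrac1n\sum_i\|L_i f\|^2$ after a Khintchine-type symmetrization of the $L_i f$. Soundness therefore reduces to a lower bound of the form ``[this quantity] $\gtrsim \epsilon^2/(\text{poly}(m)\,\text{polylog}(n))$ whenever $\|f\|_1\gtrsim\epsilon$.'' This is exactly the role of the \emph{robust Pisier-type inequality for functions over hypergrids} established earlier in the paper: it is the hypergrid replacement for the product-structure fact (``$d_{TV}$-far $\Rightarrow$ $\sum_i(\text{bias}_i)^2$ large'') used in the hypercube analysis of \cite{canonne2021random}, and the \emph{robust} form is what lets the argument tolerate the fact that the tester only sees empirical conditional distributions rather than $f$ itself — it absorbs the sampling and estimation error as slack. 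Invoking it shows that in the far case the average per-round estimator has expectation $\gtrsim \epsilon^2/(n\cdot\text{poly}(m)\cdot\text{polylog}(n))$.

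Finally one controls the number of rounds, and here the quadratic (collision-type) nature of the per-round estimator is essential. After truncation at scale $\text{poly}(m)$, a Bernstein / Poissonization bound for collision estimators on a domain of size $\text{poly}(m)$ gives that each per-round estimator has variance at most $\text{poly}(m)$ times its own expectation in the relevant regime; aggregating the signal over the $n$ coordinates with a \emph{quadratic} statistic then yields the same $\sqrt n$ (rather than $n$) dependence as in the $\tilde O(\sqrt n/\epsilon^2)$ tester for uniformity of product distributions. Combining the signal lower bound with the variance bound, a Chebyshev/Bernstein argument shows that $\tilde O(\text{poly}(m)\sqrt n/\epsilon^2)$ rounds, each costing $\text{poly}(m)$ subcube queries, suffice; carefully tracking all the $\text{poly}(m)$ losses — the per-coordinate $\ell_1$-to-$\ell_2$ conversions, the robust Pisier inequality, and the variance bound — produces the stated $\tilde O(m^{21}\sqrt n/\epsilon^2)$. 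I expect the two technical hearts to be the main obstacles: proving and then correctly applying the robust hypergrid Pisier inequality in the soundness step (in particular matching the Fourier/Laplacian quantity it controls to what the empirical collision estimator actually measures), and the coupled variance-and-round-count bookkeeping responsible for the $\sqrt n$ rather than $n$ scaling.
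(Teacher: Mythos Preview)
Your proposal diverges from the paper in structure, and the divergence hides a real gap. The paper's algorithm is \emph{recursive}: it uses the decomposition of Lemma~\ref{Lemma1.4},
\[
d_{TV}(p,\mathcal{U})\le \mathbb{E}_{S\sim\mathcal{S}_\sigma}\big[d_{TV}(p_{\overline S},\mathcal{U})\big]+\mathbb{E}_{\rho\sim\mathcal{D}_\sigma(p)}\big[d_{TV}(p_{|\rho},\mathcal{U})\big],
\]
to split into two cases. Only in the first case (projections are far) does the robust Pisier inequality, via Theorem~\ref{Theorem1.5}, guarantee that $\|\mu(p_{|\rho})\|_2$ is typically large; this is detected by \textsc{ProjectedTestMean} (which reduces to the hypercube \textsc{MeanTester}, not a collision statistic). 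In the second case (restrictions preserve distance) there is no such guarantee on the bias vector, and the algorithm instead \emph{recurses} on $p_{|\rho}$ at dimension $\approx\sigma n$. Your flat ``many rounds, average, threshold'' scheme has no mechanism for this second case, and your claim that the per-round expected signal is $\gtrsim \epsilon^2/(n\cdot\mathrm{poly}(m)\cdot\mathrm{polylog}(n))$ whenever $\|f\|_1\gtrsim\epsilon$ is precisely what is \emph{not} established by Pisier alone---it can fail, which is why the dichotomy and recursion are needed.

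A second, related gap is the link you draw between the robust Pisier inequality and your estimator. Pisier bounds $\|f\|_1$ by an expression in the Laplacians $L_if$ of $f$ itself; your estimator measures biases of \emph{conditional} marginals $\mu(p_{|\rho})$. Bridging these is exactly the content of Lemma~\ref{Lemma3.1} and the directed-graph construction in Sections~\ref{tv_to_directed_graphs}--\ref{Case_2_section}: one applies Pisier to $p_{\overline T}$ with a carefully chosen edge orientation, classifies edges as uneven or even, and then separately connects each class to $\|\mu(p_{|\rho})\|_2$ for $\rho\sim\mathcal{D}(t,p)$ or $\mathcal{D}(t{+}1,p)$. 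The ``robustness'' is not there to absorb sampling error, as you suggest, but to allow this adversarial orientation. Without this machinery (and without the recursion that handles the other branch of the dichotomy), the soundness argument does not go through. Finally, the $\sqrt{n}$ in the paper comes from the \textsc{MeanTester} guarantee (Theorem~\ref{mainfromprevious}) applied to the projected hypercube distributions, not from a generic collision-variance argument; your variance bookkeeping would need to reproduce that, and as written it does not.
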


Our algorithm improves the algorithm of \cite{canonne2021random} which only works for hypercubes, and is nearly optimal when $m$ is a constant. For general $m$, the best-known lower bound for the problem is $\Omega(\sqrt{nm}/\epsilon^2)$ \cite{bhattacharyya2021testing} (again, via the connection to uniformity testing of product distributions under the statistical model). While we believe that the polynomial $m^{21}$ in Theorem \ref{thm:subcube-alg-intro} can be improved by tightening up our analysis, it remains an important open question to pin down the complexity as a function of $n,m$ and $\epsilon$. 

We sketch the proof of Theorem \ref{thm:subcube-alg-intro} in Section \ref{sec:sketch}.
One of the main contributions of our paper is the proof of a robust version of Pisier's inequality for functions over hypergrids, which plays a crucial role in the analysis of the main algorithm and may be of independent interest. (We review and compare with the original Pisier's inequality \cite{PisierInequality} in Section \ref{pisier-proof-overview}.)
We need some notation to state the inequality. 
Since Fourier analysis will be used heavily in the proof of the inequality, from now on we will always use 
$$
\mathbb{Z}_M:=\mathbb{Z}_{m_1} \times \dots \times \mathbb{Z}_{m_n},$$ 
where $M=(m_1,\ldots,m_n)$, to denote the hypergrid. Given $x\in \mathbb{Z}_M$, $i\in [n]$ and $a\in \mathbb{Z}_{m_i}$, let $x^{(i) \to a}$ denote the vector obtained from $x$ by replacing $x_i$ with $a$. Given 
any function $f:\mathbb{Z}_M\rightarrow \mathbb{C}$,
  the \textit{$i$th~coordinate Laplacian operator}  is defined as
$$L_if(x) = f(x) - \mathbb{E}_{a\sim \mathbb{Z}_{m_i}} \big[f(x^{(i)\rightarrow a})\big].$$
For each $j\in [n]$, let $\mathbb{Z}_{m_j}^*=\{1,\ldots,m_j-1\}$, and 
  let $\omega_j = e^{2 \pi i /m_j}$ be the primitive $m_j$-th root of unity. 

We are ready to state the new Pisier's inequality for hypergrids.
The robust version is more involved; we state and discuss it later in Section \ref{pisier-proof-overview}.

\begin{theorem}[Pisier's Inequality for Hypergrids]\label{extended_pisier} Let $f: \mathbb{Z}_{M} \to \mathbb{C}$ be a function with $\mathbb{E}_{x \sim \mathbb{Z}_M}[f(x)] = 0$. 
Then, for any $s \in [1, \infty)$ we have 
\begin{align*}
\Big( \mathbb{E}_{x \sim \mathbb{Z}_M} \big[ \abs{f(x)}^s \big]\Big)^{1/s} \le O(\log n) \cdot \left(\mathbb{E}_{x, y \sim \mathbb{Z}_M} \left[  \abs{\sum_{i\in [n]}  L_i f(x) \sum_{a \in \mathbb{Z}_{m_i}^*} \omega_i^{-a y_i} \omega_i^{a x_i}  }^s \right]\right)^{1/s}.
\end{align*}
\end{theorem}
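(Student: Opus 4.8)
The plan is to mimic the classical proof of Pisier's inequality over the hypercube via a random-restriction / heat-semigroup argument, but carried out in the Fourier basis of $\mathbb{Z}_M$. Recall that over $\{\pm 1\}^n$ one writes $f - \mathbb{E}f$ as a telescoping sum along the $p$-biased semigroup $T_\rho$ and uses the identity $\frac{d}{d\rho}T_\rho f = \frac{1}{\rho}\sum_i T_\rho L_i f \cdot (\text{noise factor})$; the key point is that differentiating the semigroup pulls out exactly the Laplacians $L_i$, and the random sign cancellation makes the bound dimension-free up to $O(\log n)$. Over $\mathbb{Z}_M$ the analogous object is the operator that, on a character $\chi_\alpha(x) = \prod_i \omega_i^{\alpha_i x_i}$, contracts coordinate $i$ by a factor depending on whether $\alpha_i = 0$. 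The expression $\sum_{a \in \mathbb{Z}_{m_i}^*} \omega_i^{-a y_i}\omega_i^{a x_i}$ is, up to an additive constant, the indicator (times $m_i$) of the event $x_i = y_i$ reweighted to have mean zero in the $i$th coordinate — equivalently it is $m_i \cdot \mathbbm{1}[x_i = y_i] - 1$, which is precisely the kernel that, when averaged against $y$, reproduces $L_i$. So the first step is to fix this Fourier-analytic dictionary: expand $f = \sum_{\alpha \ne 0}\hat f(\alpha)\chi_\alpha$, note $L_i\chi_\alpha = \mathbbm{1}[\alpha_i \ne 0]\chi_\alpha$, and verify that $\mathbb{E}_{y}\big[\sum_i L_i f(x)\sum_{a \in \mathbb{Z}_{m_i}^*}\omega_i^{-ay_i}\omega_i^{ax_i}\big]$ and its higher moments behave like a noise operator applied to $f$.

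Next I would introduce a one-parameter family of operators $U_\rho$ for $\rho \in [0,1]$ acting diagonally on characters by $U_\rho \chi_\alpha = \rho^{|\{i : \alpha_i \ne 0\}|}\chi_\alpha$ (the natural grid analogue of the hypercube noise operator, where each "active" coordinate is damped by $\rho$ independently), so that $U_1 f = f$ and $U_0 f = \mathbb{E}f = 0$. Then $f = -\int_0^1 \frac{d}{d\rho}U_\rho f\, d\rho$, and $\frac{d}{d\rho}U_\rho f = \frac{1}{\rho}\sum_{i\in[n]} U_\rho(L_i f)\cdot(\text{something})$ — more precisely, writing $U_\rho = \prod_i U_\rho^{(i)}$ where $U_\rho^{(i)}$ acts only on coordinate $i$, we have $\frac{d}{d\rho}U_\rho^{(i)} = \frac1\rho \cdot (U_\rho^{(i)} - P_i)$ where $P_i$ is the coordinate-$i$ averaging projection, and $U_\rho^{(i)} - P_i$ applied to $f$ contracts $L_i f$. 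The crucial structural fact is that $U_\rho^{(i)}$ restricted to the range of $L_i$ is a contraction realized by an explicit convex combination of the "replace $x_i$ by $y_i$" operation with the identity; concretely, $U_\rho^{(i)} g(x) = \mathbb{E}_{y_i}\big[(1-\rho)g(x^{(i)\to y_i}) + \rho\, g(x) \cdot(\text{correction})\big]$ so that the kernel $m_i\mathbbm{1}[x_i = y_i] - 1$ appears. The plan is to substitute this into the integral representation, then apply the triangle inequality in $L_s$ together with Jensen (since each $U_\rho$ is an average of measure-preserving maps, it is an $L_s$-contraction) to peel the operators off, reducing $\|f\|_s$ to $\int_0^1 \frac{d\rho}{\rho}$-type control of $\|\sum_i L_i f \cdot K_i\|_s$ where $K_i(x,y) = \sum_{a\in\mathbb{Z}_{m_i}^*}\omega_i^{-ay_i}\omega_i^{ax_i}$; the $\log n$ comes from splitting the $\rho$-integral at $\rho = 1/n$ and handling the low-$\rho$ part by a crude bound (the total degree is at most $n$, so $U_\rho f$ has $\ell_2$-norm at most $\rho \|f\|_2$ there) and the high-$\rho$ part by $\int_{1/n}^1 d\rho/\rho = \log n$.

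The main obstacle I anticipate is that, unlike the $\pm 1$ case where the telescoping noise factor is a single $\pm 1$-valued random variable with clean independence, over $\mathbb{Z}_M$ the "replace coordinate $i$ by a fresh sample" operation does not diagonalize simultaneously with a clean scalar multiplier on $L_i f$ — the operator $U_\rho^{(i)} - P_i$ equals $\rho L_i$ on the nose (that part is fine), but when we write $U_\rho = \prod_i(\rho L_i + P_i + (\text{stuff}))$ and differentiate, the cross-terms couple different coordinates, and to get the single kernel $K_i$ rather than a product of kernels we must re-sample the other coordinates honestly. So the real work is an interpolation/hybrid argument: introduce independent uniform $y^{(1)},\ldots$ samplings coordinate by coordinate, show that the derivative at step $i$ contributes exactly $\frac1\rho L_i f$ composed with a measure-preserving re-randomization of the remaining coordinates, and then recognize that after integrating out everything the leftover kernel collapses to $K_i(x,y)$ with a single fresh $y$. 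Checking that this collapse is exact — i.e., that no extra coordinate-dependent factors survive — and that every intermediate operator is genuinely an $L_s$-contraction (so the triangle inequality is not lossy beyond the $O(\log n)$), is where I expect the bulk of the technical care to go. A secondary subtlety is the complex-valued setting: since $f$ maps to $\mathbb{C}$ and the characters are complex, one must be slightly careful that $|U_\rho^{(i)}g| \le U_\rho^{(i)}|g|$ pointwise still holds, which it does because $U_\rho^{(i)}$ is an average of point-evaluations with nonnegative weights $\{1-\rho, \rho\}$, so Jensen applies verbatim.
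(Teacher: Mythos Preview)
Your proposal has the right semigroup scaffolding but is missing the step that actually makes the inequality go through, and your proposed handling of the endpoint is circular.

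The paper follows the Naor--Schechtman scheme: it introduces a dual function $g$ with $\|g\|_q=1$ and $\langle T_\rho f,g\rangle=\|T_\rho f\|_s$, writes $\rho^{\#u}$ as a Gamma-type integral, and then proves a Fourier identity (Lemma~\ref{pisier_subpart}) of the form
\[
\sum_{u\ne 0} t^{\#u-1}(\#u)\,\hat f(u)\overline{\hat g(u)}
=\frac{1}{1-t}\,\mathbb{E}_{x,y}\Big[\overline{g_{t,1-t}(x,y)}\sum_i L_i f(x)\,K_i(x,y)\Big],
\]
where $g_{t,1-t}(x,y)=\mathbb{E}_{z\sim N_{t,1-t}(x,y)}[g(z)]$ is a \emph{mixed} noise operator that couples $x$ and $y$. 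H\"older then gives a bound $\frac{1}{1-t}\cdot\|\sum_i L_if\,K_i\|_{L_s(x,y)}$, and the $\log n$ comes from $\int_0^{1-1/(n+1)}\frac{dt}{1-t}$, with the truncation handled by the elementary lower bound $\|T_\rho f\|_s\ge c\|f\|_s$ at $\rho=1-1/(n+1)$. The essential object you are missing is $g_{t,1-t}$: it is exactly this two-variable averaging that produces the kernel $K_i(x,y)=\sum_{a\in\mathbb{Z}_{m_i}^*}\omega_i^{-ay_i}\omega_i^{ax_i}$ with the clean $\tfrac{1}{1-t}$ weight, and this is where the Fourier computation over $\mathbb{Z}_M$ (rather than $\{\pm1\}^n$) requires real care.

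By contrast, your route writes $f=\int_0^1\frac{1}{\rho}\sum_i U_\rho L_i f\,d\rho$ and then needs $\|\sum_i U_\rho L_i f\|_{L_s(x)}\lesssim\|\sum_i L_if(x)K_i(x,y)\|_{L_s(x,y)}$ uniformly in $\rho$. You flag this as the ``main obstacle'' and propose a hybrid/interpolation argument, but as stated it does not produce the required inequality: $\sum_i U_\rho L_i f$ is a function of $x$ alone, and there is no measure-preserving rewriting that turns it into a conditional expectation of $\sum_i L_if\cdot K_i$ without something playing the role of $g_{t,1-t}$. The ``cross-terms'' you worry about are precisely what the dual function absorbs. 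Separately, your low-$\rho$ handling is circular: bounding $\|U_{1/n}f\|_2\le (1/n)\|f\|_2$ controls the tail by $\|f\|_2$, not by the right-hand side of the inequality, and for $s\ne 2$ you do not even have the right norm. The paper's truncation at $t=1-1/(n+1)$ avoids this entirely because $T_0 f=0$ is free and only the \emph{lower} bound $\|T_\rho f\|_s\gtrsim\|f\|_s$ is needed at the other end. In short: add the duality step and prove the analogue of Lemma~\ref{pisier_subpart}; the rest of your intuition about Laplacians and the noise operator is correct but is not where the work is.
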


One should interpret Pisier's inequality as providing a way of connecting the $\ell_s$-norm of the function $f$~to~its Laplacian operators. Within the context of how the inequality is used in this paper,  Laplacian operators 
capture the difference in the function value along edges of the hypergrid (that is, between $x$ and $x^{(i) \to a}$). The extension of Pisier's inequality to hypergrids could have applications in other problems where the $\ell_s$-norm needs to be connected to edge-wise differences of a function defined over $\mathbb{Z}_M$.

\def\stars{\text{stars}}

\subsection{Proof Overview and Comparison with Previous Work}\label{sec:sketch}

First we recall the notion of restrictions and projections of a distribution.
Given a restriction $\rho$ with 
  $\rho_i$ $\in \mathbb{Z}_{m_i}\cup \{*\}$ for each $i\in [n]$, we write $\stars(\rho)$ to denote the set of $i\in [n]$ with $\rho_i=*$ and denote by $p_{|\rho}$ the distribution of ${x}_{\text{stars}(\rho)}$ with ${x}$ drawn from $p$ {conditioned} on $x_i=\rho_i$ for every $i \notin \text{stars}(\rho)$.
The other operation on distributions is  {projections}: Given  $S \subseteq [n]$,    $p_{S}$ denotes the distribution of $x_S$ with $x \sim p$. 

\begin{definition}[Random Restrictions] 
Given $\sigma \in [0,1]$, we let $\mathcal{S}_{\sigma}$ denote the distribution supported on subsets of $[n]$ where $S \sim \mathcal{S}_{\sigma}$ includes each   $i \in [n]$ independently with probability $\sigma$. Given a distribution $p$ supported on $\mathbb{Z}_M$, we use $\mathcal{D}_{\sigma}(p)$ to denote the following distribution of restrictions: to draw a restriction $\rho \sim \mathcal{D}_{\sigma}(p)$, we first sample a set $\smash{S \sim \mathcal{S}_{\sigma}}$ and an $x \sim p$; then, $\rho_i$ for each $i\in [n]$ is set to be
\begin{align} \label{eq:rr-intro}
\rho_i =
\begin{cases}
  * & \text{if } i \in {S} \\
  {x}_i & \text{if } i \notin {S}
\end{cases}.
\end{align}
\end{definition}

Given a distribution $p$ over $\mathbb{Z}_M$, we define the following bias vector that generalizes the mean vector of a distribution over $\{-1,1\}^n$:

\begin{definition} [Bias Vector] \label{mu-c_definition}
Let $p$ be a distribution over $\mathbb{Z}_M$, $i\in [n]$ and $c,d\in \mathbb{Z}_{m_i}$.
We define
$$\mu^{c,d}_i(p) = \frac{\Pr_{x \sim p}[x_{i} = c] - \Pr_{x \sim p}[x_{i} = d]}{\Pr_{x \sim p}[x_{i} = c] + \Pr_{x \sim p}[x_{i} = d]}.$$
When $\Pr_{x\sim p}[x_i=c]=\Pr_{x\sim p}[x_i=d]=0$, the bias $\mu_i^{c,d}(p)$ is set to be $0$ by default.
We also allow $c=d$ for notational convenience, in which case $\smash{\mu_i^{c,d}(p)=0}$ trivially.

Moreover, we write $\mu(p)$ to denote the \emph{bias vector} of $p$: $\mu(p)$ has $\sum_{i\in [n]} m_i^2$ entries $\mu_i^{c,d}(p)$ and thus,
$$
\big\|\mu(p)\big\|_2=\sqrt{
\sum_{i\in [n]}\sum_{c,d\in \mathbb{Z}_{m_i}}\left(\mu^{c,d}_i(p)\right)^2
}.
$$
\end{definition}

The intuition behind our uniformity testing algorithm is similar to  the algorithm of \cite{canonne2021random}, which is inspired by Lemma \ref{Lemma1.4} and Theorem \ref{Theorem1.5} below:

\def\calU{\mathcal{U}}

\begin{lemma} \label{Lemma1.4}
Let $p$ be a distribution over $\mathbb{Z}_M$. Then for any $\sigma \in [0, 1]$,  we have\footnote{We write $\calU$ to denote the uniform distribution, and $d_{TV}$ to denote the total variation distance.}
$$d_{TV}(p, \mathcal{U}) \leq \mathbb{E}_{S \sim \mathcal{S}_{\sigma}} \big[d_{TV}(p_{\overline{S}}, \mathcal{U}) \big] + \mathbb{E}_{\rho \sim \mathcal{D}_{\sigma}(p)} \big[  d_{TV}(p_{|\rho}, \mathcal{U}) \big].$$ 
\end{lemma}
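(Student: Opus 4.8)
The plan is to run the standard two-stage hybrid (chain-rule) argument for total variation distance and then average over the random restriction's star-set. Since the left-hand side does not depend on $S$, it suffices to establish, for every fixed $S\subseteq[n]$, the bound
$$d_{TV}(p,\mathcal{U}) \;\le\; d_{TV}(p_{\overline S},\mathcal{U}) \;+\; \mathbb{E}_{y\sim p_{\overline S}}\big[d_{TV}\big(p_{|\rho^{S,y}},\,\mathcal{U}\big)\big],$$
where $\rho^{S,y}$ is the restriction with $\rho_i=*$ for $i\in S$ and $\rho_i=y_i$ for $i\notin S$ (here $\mathcal{U}$ denotes the uniform distribution on whichever block of coordinates is relevant). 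Taking $\mathbb{E}_{S\sim\mathcal{S}_\sigma}$ of this inequality then finishes the proof: the first term becomes $\mathbb{E}_{S\sim\mathcal{S}_\sigma}[d_{TV}(p_{\overline S},\mathcal{U})]$, and the second becomes $\mathbb{E}_{\rho\sim\mathcal{D}_\sigma(p)}[d_{TV}(p_{|\rho},\mathcal{U})]$, because conditioned on the sampled set being $S$, a draw $\rho\sim\mathcal{D}_\sigma(p)$ has its non-$*$ coordinates distributed exactly as $y\sim p_{\overline S}$, by definition of $\mathcal{D}_\sigma(p)$.

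For the fixed-$S$ bound I would write each point of $\mathbb{Z}_M$ as a pair $(y,z)$ with $y$ indexed by $[n]\setminus S$ and $z$ indexed by $S$, use that the uniform distribution is a product, $\mathcal{U}(y,z)=\mathcal{U}(y)\mathcal{U}(z)$, and factor $p(y,z)=p_{\overline S}(y)\,q_y(z)$, where $q_y:=p_{|\rho^{S,y}}$ is the conditional law of $x_S$ given $x_{\overline S}=y$ (on the null set of $y$ with $p_{\overline S}(y)=0$, which never occurs when $y\sim p_{\overline S}$, set $q_y$ arbitrarily). Inserting the hybrid $p_{\overline S}(y)\,\mathcal{U}(z)$ and applying the triangle inequality gives
\begin{align*}
2\,d_{TV}(p,\mathcal{U})
&=\sum_{y,z}\big|\,p_{\overline S}(y)\,q_y(z)-\mathcal{U}(y)\,\mathcal{U}(z)\,\big|\\
&\le\sum_{y,z}p_{\overline S}(y)\,\big|\,q_y(z)-\mathcal{U}(z)\,\big|+\sum_{y,z}\mathcal{U}(z)\,\big|\,p_{\overline S}(y)-\mathcal{U}(y)\,\big|\\
&=2\,\mathbb{E}_{y\sim p_{\overline S}}\big[d_{TV}(q_y,\mathcal{U})\big]+2\,d_{TV}(p_{\overline S},\mathcal{U}),
\end{align*}
where in the last step I summed out $z$ in the first term (using $\sum_z q_y(z)=1$) and $y$ in the second (using $\sum_z \mathcal{U}(z)=1$).

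I do not expect a genuine obstacle here; the argument is essentially bookkeeping. The one point that needs care is the choice of hybrid: interpolating through $p_{\overline S}(y)\,\mathcal{U}(z)$ (correct the conditional marginal on $S$ first, then the marginal on $\overline S$) is what makes the outer expectation over $y\sim p_{\overline S}$ appear, which is exactly the distribution of the non-$*$ coordinates of $\rho\sim\mathcal{D}_\sigma(p)$; interpolating the other way would instead produce $\mathbb{E}_{y\sim\mathcal{U}}$, which would not match the statement. Beyond that, one only needs to note that conditional distributions on probability-zero fibers are irrelevant once the expectation over $y\sim p_{\overline S}$ (equivalently over $\rho\sim\mathcal{D}_\sigma(p)$) is taken.
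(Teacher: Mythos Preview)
Your proposal is correct and follows essentially the same approach as the paper: fix $S$, insert the hybrid $p_{\overline S}(y)\,\mathcal U(z)$, apply the triangle inequality to obtain the two terms, and then average over $S\sim\mathcal S_\sigma$. The paper's proof is identical in structure (just with slightly different notation), including your observation that the hybrid must be chosen so that the conditional term carries the weight $p_{\overline S}(y)$ rather than $\mathcal U(y)$.
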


\def\eps{\epsilon}

\begin{theorem} \label{Theorem1.5}
Let $p$ be a distribution over $\mathbb{Z}_M$. Then for any $\sigma \in [0, 1]$, we have
$$\mathbb{E}_{\rho \sim \mathcal{D}_{\sigma}(p)} \left[ \left\|\mu(p_{|\rho})\right\|_2 \right] \geq \frac{\sigma}{m^{7.5} \cdot\emph{\text{polylog}}(nm)} \cdot \tilde{\Omega}\left(\mathbb{E}_{S \sim \mathcal{S}_\sigma}\big[d_{TV}(p_{\overline{S}}, \mathcal{U}) \big] - 2 e^{- \min (\sigma, 1 - \sigma) n / 10} \right).$$
\end{theorem}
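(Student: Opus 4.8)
The plan is to convert the total variation distance on the right into an $\ell_1$-norm, run it through a robust form of the hypergrid Pisier inequality to express it as a sum of per-coordinate Laplacian contributions with only an $O(\log n)$ loss (which is where the $\tilde{\Omega}$ comes from), and then, after averaging over the random set $S$, recognize each such contribution as a bias-vector entry of a conditionally-restricted copy of $p$. Concretely: fix $S$ and set $\phi := p_{\overline{S}}/\calU_{\overline{S}}$, a density on $\prod_{i\in\overline{S}}\mathbb{Z}_{m_i}$ with mean $1$, so that $d_{TV}(p_{\overline{S}},\calU)=\tfrac{1}{2}\,\mathbb{E}_{x\sim\calU_{\overline{S}}}[\,|\phi(x)-1|\,]=\tfrac{1}{2}\|\phi-1\|_1$. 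Applying the robust version of Theorem~\ref{extended_pisier} (stated in Section~\ref{pisier-proof-overview}) to the mean-zero function $\phi-1$ with $s=1$ — using that $L_i$ kills constants — bounds $\|\phi-1\|_1$ by $O(\log n)$ times an expression built from the $L_i\phi$ and the characters $\sum_{a\in\mathbb{Z}_{m_i}^*}\omega_i^{-ay_i}\omega_i^{ax_i}$; a short computation shows this character sum equals $m_i\,\mathbbm{1}[x_i=y_i]-1$, which for fixed $x$ is mean-zero in $y$ with second moment $m_i-1$, so that averaging over $y$ and applying Cauchy--Schwarz reduces the bound (modulo the ``robust'' correction terms discussed below) to the $\ell_2$-aggregate $\mathbb{E}_{x\sim\calU_{\overline{S}}}\!\left[\sqrt{\sum_{i\in\overline{S}}(m_i-1)(L_i\phi(x))^2}\,\right]$.

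The next step is to pass from Laplacians to biases. Changing the measure to $x\sim p_{\overline{S}}$ and using that $L_i\phi(x)$ equals, up to the normalization $\prod_{i\in\overline{S}}m_i$, the product of the $(\overline{S}\setminus\{i\})$-marginal density of $p_{\overline{S}}$ and the deviation $q^{(i)}_x(x_i)-\tfrac{1}{m_i}$ — where $q^{(i)}_x$ is the conditional law of the $i$-th coordinate given the other coordinates in $\overline{S}$ — turns the above into the scale-free quantity $\mathbb{E}_{x\sim p_{\overline{S}}}\!\left[\sqrt{\sum_{i\in\overline{S}}(m_i-1)\left(1-\tfrac{1}{m_i\,q^{(i)}_x(x_i)}\right)^2}\,\right]$. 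The elementary estimate $d_{TV}(q,\calU_{m_i})\le\tfrac{1}{2}\|(\mu_i^{c,d}(q))_{c,d}\|_2$ — from $|q(c)-\tfrac{1}{m_i}|\le\tfrac{1}{m_i}\sum_d|q(c)-q(d)|$, $|q(c)-q(d)|=|\mu_i^{c,d}(q)|(q(c)+q(d))\le|\mu_i^{c,d}(q)|$, and Cauchy--Schwarz over the $m_i^2$ pairs — then lets me charge, on the symbols of non-negligible conditional mass, each coordinate's contribution to $\mathrm{poly}(m_i)\cdot\|\mu_i(q^{(i)}_x)\|_2^2$, with $\mu_i(q)$ the vector of bias entries $\mu_i^{c,d}(q)$ from Definition~\ref{mu-c_definition}. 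Finally, take $\mathbb{E}_{S\sim\mathcal{S}_\sigma}$ and swap the order of summation: conditioned on $i\in S$, the complement $\overline{S}$ is a $\mathcal{S}_{1-\sigma}$-random subset of $[n]\setminus\{i\}$ paired with a $p$-sample of $x_{\overline{S}}$, which is exactly the data defining $\mu_i(p_{|\rho})$ for $\rho\sim\mathcal{D}_\sigma(p)$; and when $i\in\overline{S}$ instead, the leftover set $\overline{S}\setminus\{i\}$ has the same law. So the per-coordinate quantities on the two sides of the inequality are, term by term, the same random object, and aggregating them in $\ell_2$ over the two (differently sized) random index sets — the comparison where the factor $\sigma$ and the truncation error $2e^{-\min(\sigma,1-\sigma)n/10}$ (a Chernoff bound for $\mathrm{Bin}(n,\sigma)$) enter — delivers the claimed inequality.

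The hard part, unsurprisingly, is the robust Pisier inequality itself, whose statement and Fourier-analytic proof are deferred to Section~\ref{pisier-proof-overview}: plain Theorem~\ref{extended_pisier} is useless here because its right-hand side can be as large as order $nm$ even when $\|\phi-1\|_1\le 2$ — e.g.\ when $p_{\overline{S}}$ has a heavy atom (so that $\phi$, and hence some $L_i\phi$, spikes) or when a conditional law $q^{(i)}_x$ concentrates on few symbols of $\mathbb{Z}_{m_i}$ (recall that along coordinate $i$ one works with the complete graph $K_{m_i}$, not a single Boolean edge) — and one needs a version whose right-hand side already truncates or re-weights these contributions. Granting it, the remaining delicacy is bookkeeping: $L_i\phi$ is a higher-dimensional object, the character sum has magnitude up to $m_i-1$, and each of the conversions among $L_i\phi$, the conditional-marginal deviation, and the $m_i^2$ bias entries $\mu_i^{c,d}$ — whose denominators $\Pr[x_i=c]+\Pr[x_i=d]$ may vanish — must be carried out without breaking the $\ell_2$-aggregation, on pain of losing a factor of $\sqrt{n}$. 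Tracking all of these losses carefully is what produces the $m^{7.5}\cdot\mathrm{polylog}(nm)$ factor, and I would expect the exponent to be improvable.
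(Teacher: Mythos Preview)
Your outline correctly identifies the three main ingredients --- apply robust Pisier to $\phi-1$, convert the resulting edgewise contributions into bias entries of restricted distributions, and average over $S$ --- but it misplaces the difficulty and leaves a genuine gap at the second step.

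The gap is this. After the change of measure you arrive at
\[
\mathbb{E}_{x\sim p_{\overline{S}}}\Bigl[\sqrt{\textstyle\sum_{i\in\overline{S}}(m_i-1)\bigl(1-\tfrac{1}{m_i\,q^{(i)}_x(x_i)}\bigr)^2}\,\Bigr],
\]
but $1-1/(m_i q^{(i)}_x(x_i))$ is unbounded below when $q^{(i)}_x(x_i)$ is small, while every bias entry $\mu_i^{c,d}$ is at most $1$ in absolute value; so there is no way to dominate the Pisier right-hand side by $\mathrm{poly}(m)\cdot\|\mu\|_2^2$ term-by-term. Your ``elementary estimate'' $d_{TV}(q,\calU)\le\tfrac12\|\mu_i(q)\|_2$ goes the wrong direction for this purpose, and the caveat ``on the symbols of non-negligible conditional mass'' is exactly the hard case you cannot dismiss. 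You write that the robust version of Pisier will ``already truncate or re-weight these contributions'', but the robust inequality does not truncate anything by itself: it hands you the \emph{freedom} to choose an orientation $G$ of the hypergrid edges, and all of the control comes from choosing $G$ well. You never say what orientation you use.

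In the paper the orientation is the heart of the argument and is nontrivial. Edges are classified as \emph{uneven} (one endpoint has much larger $p_{\overline{T}}$-mass) versus \emph{even} at various scales $\kappa$; uneven edges are oriented from heavy to light so that $(L_i^b f(x)/\ell(x))^2\le 1$ on each outgoing uneven edge, even edges at scale $\kappa$ are oriented via a degeneracy ordering, and a leftover class $G^{[r]}$ is oriented so as to be chargeable to uneven edges (Lemma~\ref{Lemma3.5}). This is what replaces your unbounded quantity by $\sqrt{\mathrm{outdeg}(x,G^{[u]})}+\sum_\kappa m^{-\kappa}\sqrt{\mathrm{outdeg}(x,G^{[\kappa]})}$. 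After that, the connection to $\|\mu(p_{|\rho})\|_2$ is not a ``swap of summation'': Lemma~\ref{Lemma3.7} only gives that a single outgoing edge at $x$ along direction $i$ forces a single bias entry of $p_{|\rho(\pi,y)}$ (with $i$ moved into the star set) to be $\gtrsim m^{-\kappa}$, and one must then argue, via a bucketing over $d=\mathrm{outdeg}$ and a dichotomy into $t$-contributing versus $(t+1)$-contributing restrictions, that enough of these indicators accumulate without exceeding $d$ (Lemmas~\ref{Lemma3.9} and~\ref{Lemma3.12}). This is also why the technical lemma (Lemma~\ref{Lemma3.1}) controls the \emph{sum} of the expectations over $\calD(t,p)$ and $\calD(t+1,p)$, matching the off-by-one you noticed between $i\in\overline{S}$ on the Pisier side and $i\in\mathrm{stars}(\rho)$ on the bias side. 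None of this machinery appears in your sketch, and without it the argument does not close.
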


\def\calD{\mathcal{D}}

Lemma \ref{Lemma1.4} extends a 
  corresponding lemma from \cite{canonne2021random} for distributions supported on $\{-1, 1\}^n$ to distributions supported on hypergrids. 
The proof can be found in Appendix \ref{appendix-additionalproofs}.
Theorem \ref{Theorem1.5}, on the other hand, is the main technical result of the paper, which we discuss in the rest of the subsection.
But before that, assuming Lemma \ref{Lemma1.4} and Theorem \ref{Theorem1.5}, our main uniformity testing algorithm \textsc{SubCondUni} proceeds as follows (see Section \ref{sec:alg} for details):
consider a distribution $p$ over $\mathbb{Z}_M$ with $d_{TV}(p,\calU)\ge \eps$ and let $\sigma$ be sufficiently small; it will be set to be $1/\text{polylog}(1/\eps)$ in the proof. Then by Lemma \ref{Lemma1.4}, one of the following two cases must hold:
\begin{flushleft}
\begin{enumerate}
\item  $\mathbb{E}_{\rho \sim \mathcal{D}_{\sigma}(p)} [ d_{TV}(p_{|\rho}, \mathcal{U}) ] \geq \epsilon/2$: In this case, a typical draw of $\rho\sim \mathcal{D}_\rho(p)$ satisfies the property that $d_{TV}(p_{|\rho},\mathcal{U})$ remains large and the dimension $|\stars(\rho)|$ of $p_{|\rho}$ is much smaller than $n$ (i.e., $\approx \sigma n$).
This case is handled using recursive calls to $\textsc{SubCondUni}$ on $p_{|\rho}$ with $\rho\sim \calD_\rho(p)$.
\item $\mathbb{E}_{S \sim \mathcal{S}_\sigma}[d_{TV}(p_{\overline{S}}, \mathcal{U})] \geq \epsilon/2$:   Theorem \ref{Theorem1.5} implies that a typical $\rho\sim \calD_\rho(p)$ has a bias vector with a large $\ell_2$-norm.
This case is handled by  \textsc{ProjectedTestMean},
a subroutine we give in Section \ref{Mean-test-section} to decide whether a given  distribution is uniform or has a bias vector with a large $\ell_2$-norm.
\end{enumerate}\end{flushleft}

The performance guarantee of \textsc{ProjectedTestMean} is summarized in the following theorem:

\def\eps{\epsilon}

\begin{theorem} \label{projectedtestmean-performancelemma} 
There is an algorithm \textsc{ProjectedTestMean} which, given $n$, $M=(m_1,\ldots,m_n)$, $\epsilon>0$ and sample access to a  probability distribution $p$ over $\mathbb{Z}_M$, draws
$$
O\big(m^{4}\log (mn)\big)\cdot 
\max\left\{\frac{1}{\epsilon^2\sqrt{n}},\frac{1}{\epsilon}\right\}.
$$
many samples $x\sim p$ and 
satisfies the following properties:
\begin{enumerate}
    \item If $p$ is the uniform distribution, the algorithm outputs $\accept$ with probability at least $2/3$; and
    \item If $p$ satisfies $\|\mu(p)\|_2 \geq \epsilon \sqrt{n}$, the algorithm outputs $\reject$ with probability at least $2/3$.
\end{enumerate}
\end{theorem}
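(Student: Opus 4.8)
The plan is to estimate, from pairs of i.i.d.\ samples, a quantity that is two-sidedly equivalent to $\|\mu(p)\|_2^2$ once the one-dimensional marginals of $p$ are known to be close to uniform, and to reduce to that case by a cheap preliminary check. Write $q_i$ for the marginal of $x_i$ and $\calU_i$ for the uniform distribution on $\mathbb Z_{m_i}$. The obstacle to working with $\|\mu(p)\|_2^2=\sum_i\sum_{c,d}\big(\tfrac{q_i(c)-q_i(d)}{q_i(c)+q_i(d)}\big)^2$ directly is the denominator, which can be tiny; but if every $q_i(c)$ lies in $[\tfrac1{2m_i},\tfrac2{m_i}]$ then each denominator is $\Theta(1/m_i)$, and using $\sum_c(q_i(c)-\tfrac1{m_i})=0$ one gets $\|\mu(p)\|_2^2=\Theta\!\big(\sum_i m_i^3\,\|q_i-\calU_i\|_2^2\big)$, with $\|q_i-\calU_i\|_2^2=\|q_i\|_2^2-\tfrac1{m_i}$. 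So the first step of the algorithm is to estimate every $\hat q_i(c)$ empirically and \reject\ if any marginal is not this close to uniform (a ``concentrated'' large $\mu_i^{c,d}$ is exactly an off marginal and is detectable with $\mathrm{poly}(m)\log(mn)$ samples), then to estimate $\Phi(p):=\sum_i m_i^3\|q_i-\calU_i\|_2^2$ and \reject\ iff the estimate exceeds a threshold $\tau\asymp\epsilon^2 n$.

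For the estimation I would use the standard unbiased collision estimator per coordinate: $\|q_i\|_2^2$ is estimated by $\binom N2^{-1}\sum_{j<k}\mathbb{1}[x^{(j)}_i=x^{(k)}_i]$, so $\widehat\Phi:=\sum_i m_i^3\big(\binom N2^{-1}\sum_{j<k}\mathbb{1}[x^{(j)}_i=x^{(k)}_i]-\tfrac1{m_i}\big)$ is an unbiased estimator of $\Phi(p)$. When $p$ is uniform, $\widehat\Phi$ is a $U$-statistic with kernel $K(x,x')=\sum_i m_i^3(\mathbb{1}[x_i=x'_i]-\tfrac1{m_i})$ satisfying $\mathbb E[K]=0$ and, crucially, $\mathbb E[K\mid x]\equiv0$ (so the ``shared-sample'' variance component vanishes identically). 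When $\|\mu(p)\|_2\ge\epsilon\sqrt n$ and the marginal check of step one passed, $\mathbb E[\widehat\Phi]=\Phi(p)=\Theta(\|\mu(p)\|_2^2)\ge\Omega(\epsilon^2 n)$. So the whole question is whether $\widehat\Phi$ concentrates well enough around its mean.

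The key structural fact is that $\mathrm{Var}(\widehat\Phi)$ is determined by the one- and two-dimensional marginals of $p$ only: decomposing $\mathrm{Var}(\widehat\Phi)\approx \zeta_2/N^2+\zeta_1/N$ with $\zeta_2=\mathrm{Var}(K)$ and $\zeta_1=\mathrm{Var}(\mathbb E[K\mid x])$, both are quadratic forms in the covariances $\mathrm{Cov}(K_i,K_j)=m_i^3m_j^3(\|q_{ij}\|_2^2-\|q_i\|_2^2\|q_j\|_2^2)$ and $\mathrm{Cov}(\mathbb E[K_i\mid x],\mathbb E[K_j\mid x])=m_i^3m_j^3(\mathbb E[q_i(x_i)q_j(x_j)]-\|q_i\|_2^2\|q_j\|_2^2)$, each a function of the pairwise marginal $q_{ij}$ alone. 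I would therefore add a second battery of checks: using collisions on coordinate \emph{pairs}, test (for all $i\ne j$) that $\|q_{ij}-\calU_i\otimes\calU_j\|_2$ is small; \reject\ otherwise. For truly uniform $p$ these pass with high probability (union bound over $\le(nm)^2$ quantities, the source of the $\log(mn)$ factor), and when they pass they certify that every $q_{ij}$ is close enough to a product of (near-uniform) marginals that $\zeta_2=\mathrm{poly}(m)\cdot n$ and $\zeta_1=\mathrm{poly}(m)\cdot\mathbb E[\widehat\Phi]$, i.e.\ the variance is no worse than under a product distribution. This same test is also what rejects ``strongly correlated'' far-case distributions directly (e.g.\ ``flip a global coin, then output a structured vector'', which has near-uniform marginals and passes step one but has $\Pr[x_i=x_j]$ wildly off $1/m_i$).

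With these bounds, the conclusion is a two-sided Chebyshev argument. If $p$ is uniform: all checks pass w.h.p., $\zeta_1=0$ exactly, and $\Pr[\widehat\Phi>\tau]\le\mathrm{Var}(\widehat\Phi)/\tau^2\le\mathrm{poly}(m)\,n/(N^2\epsilon^4n^2)<1/3$ once $N\gtrsim\mathrm{poly}(m)/(\epsilon^2\sqrt n)$, and the $\max\{\cdot,1/\epsilon\}$ absorbs both the complementary regime and the floor $N\gtrsim\mathrm{poly}(m)\log(mn)$ needed to make the marginal tests meaningful; so it \accept s. If $\|\mu(p)\|_2\ge\epsilon\sqrt n$: either some check fails and we \reject, or all pass, in which case $\mathbb E[\widehat\Phi]=\Theta(\|\mu(p)\|_2^2)\ge\Omega(\epsilon^2 n)$ while $\mathrm{Var}(\widehat\Phi)\le\mathrm{poly}(m)\big(n/N^2+\mathbb E[\widehat\Phi]/N\big)=o\big((\mathbb E[\widehat\Phi])^2\big)$ for the same $N$, so $\widehat\Phi>\tau$ w.h.p.\ and we \reject. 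I expect the main obstacle to be precisely the variance analysis of $\widehat\Phi$ under adversarial cross-coordinate correlations: the naive variance bound is a factor $n$ too large, the auxiliary pairwise tests are exactly what rules this out, and making the implication ``all $q_{ij}$ pass the test $\Rightarrow\mathrm{Var}(\widehat\Phi)$ small'' quantitatively tight — choosing the test norms/thresholds so that the sample count lands on $\mathrm{poly}(m)\,\max\{1/(\epsilon^2\sqrt n),1/\epsilon\}$ rather than, say, $\mathrm{poly}(m)/\epsilon^2$ — is the delicate part.
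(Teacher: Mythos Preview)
Your approach is genuinely different from the paper's, and the part you flag as ``delicate'' is in fact a real gap.

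The paper does \emph{not} build a hypergrid mean tester from scratch. Instead it reduces to the already-proven hypercube \textsc{MeanTester} of \cite{canonne2021random}. After the same \textsc{CoarseTest} you propose (rejecting unless every $q_i(a)\in[1/(4m_i),4/m_i]$), it defines $m^2$ projected distributions $p^{(k)}$ over $\{-1,1\}^n$: for each $k$, fix a pair $(c_i,d_i)\in\mathbb Z_{m_i}^2$ per coordinate, and map $x_i=c_i\mapsto +1$, $x_i=d_i\mapsto -1$, else an independent fair bit. A one-line calculation (Lemma~\ref{lemma666}) shows that when the marginal check passes, $\sum_k\|\mu(p^{(k)})\|_2^2\ge \|\mu(p)\|_2^2/(4m^2)$, so some $p^{(k)}$ has $\|\mu(p^{(k)})\|_2\ge\epsilon\sqrt n/(2m)$. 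Now \textsc{MeanTester} is invoked as a black box on each $p^{(k)}$; crucially, that tester already works for \emph{arbitrary} distributions on $\{-1,1\}^n$ (no pairwise assumptions), so the entire cross-coordinate variance issue is handled inside the cited result, not redone here.

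Your plan instead tries to control $\mathrm{Var}(\widehat\Phi)$ directly by certifying that all $\binom n2$ pairwise marginals $q_{ij}$ are close to uniform. The problem is a precision mismatch. To get $\zeta_2=\sum_{i,j}m_i^3m_j^3\,\mathrm{Cov}(\mathbb 1[x_i{=}x'_i],\mathbb 1[x_j{=}x'_j])=O(\mathrm{poly}(m)\,n)$, you need each off-diagonal covariance (equivalently, each $\|q_{ij}\|_2^2-\|q_i\|_2^2\|q_j\|_2^2$) to be $O(\mathrm{poly}(m)/n)$, since there are $\Theta(n^2)$ of them. But with the target budget $N=\mathrm{poly}(m)\max\{1/(\epsilon^2\sqrt n),1/\epsilon\}$, the collision estimator for a single $\|q_{ij}\|_2^2$ has standard error $\Theta(1/(\sqrt{m_im_j}\,N))$, which for constant $\epsilon$ is $\Theta(\sqrt n/\mathrm{poly}(m))$ --- nowhere near $1/n$. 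So you cannot set the pairwise threshold low enough to certify the needed variance bound and still have the uniform distribution pass. Concretely, take $p$ that is a $(1/\sqrt n)$-mixture of ``all coordinates equal and uniform'' with a product distribution having slightly biased marginals: marginals pass \textsc{CoarseTest}, $\|\mu(p)\|_2\ge\epsilon\sqrt n$, yet each pairwise deviation is $\Theta(1/(m^2\sqrt n))$, undetectable at your budget, while $\zeta_2=\Theta(m^4 n^{3/2})$ forces $N\gg 1/(\epsilon^2\sqrt n)$.

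Closing this gap would essentially require re-proving the arbitrary-distribution variance bound that underlies \textsc{MeanTester}; the paper's reduction sidesteps it entirely.
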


\textsc{ProjectedTestMean}
  generalizes the  \textsc{MeanTester} algorithm of \cite{canonne2021random}
  to work on distributions over hypergrids instead of just hypercubes. 
While it essentially reduces the task to the same task over hypercubes and passes it down to \textsc{MeanTester}, a few new ingredients are needed for the reduction to work. These include a preprocessing step (using a so-called \textsc{CoarseTest}) and a method to project the input distribution over $\mathbb{Z}_M$ to a small number of distributions over the hypercube, on which we run \textsc{MeanTester}.
We present  \textsc{ProjectedTestMean} and its analysis in Section \ref{Mean-test-section}.

We discuss the proof of Theorem \ref{Theorem1.5} in the rest of the overview.

\subsubsection{A Robust Pisier's Inequality for Hypergrids} \label{pisier-proof-overview}

The most important ingredient we need in the proof of Theorem \ref{Theorem1.5} is  a \emph{robust} version of Pisier's inequality for hypergrids. 
Pisier's inequality was first introduced in the paper \cite{PisierInequality}, and is an important result within the realm of convex analysis:
\begin{theorem}[Pisier's inequality \cite{PisierInequality}]\label{theo:original}
Let $f: \{\pm 1\}^n \to \mathbb{R}$ be a function with $\mathbb{E}_x[f(x)] = 0$. Then,
$$\mathbb{E}_{x \sim \{\pm 1\}^n } \big[ \abs{f(x)} \big] \le  
O(\log n) \cdot \mathbb{E}_{x \sim \{\pm 1\}^n } \left[ \abs{\sum_{i\in [n]} y_i x_i L_i f(x)} \right].$$
\end{theorem}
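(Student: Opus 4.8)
I would argue by Fourier analysis on the cube through the heat semigroup, following in essence Pisier's original argument. \emph{Step 1 (simplifying the right-hand side).} Set $\partial_i f(x) := x_i\, L_i f(x)$. A direct computation shows that $\partial_i f$ is independent of the $i$-th coordinate of $x$ and that $L_i f(x) = x_i\,\partial_i f(x)$; as $x_i^2 = 1$, this gives $\sum_{i\in[n]} y_i x_i L_i f(x) = \sum_{i\in[n]} y_i\,\partial_i f(x)$. Taking $y \sim \{\pm1\}^n$ uniform and independent of $x$, the right-hand side equals $O(\log n)\cdot R$ with $R := \mathbb{E}_{x,y}\big[\, \abs{\sum_{i\in[n]} y_i\,\partial_i f(x)} \,\big]$, so the goal becomes $\mathbb{E}_x\abs{f(x)} \le O(\log n)\cdot R$.

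\emph{Step 2 (semigroup representation).} Let $T_\rho$ be the noise operator, $\widehat{T_\rho g}(S) = \rho^{\abs{S}}\hat g(S)$, and $L = \sum_i L_i$. Since $\mathbb{E}_x[f(x)] = 0$ and $\int_0^1 \rho^{\abs{S}-1}\,d\rho = 1/\abs{S}$, comparing Fourier coefficients yields the identity
\[ f(x) \;=\; \int_0^1 \sum_{i\in[n]} x_i\,(T_\rho\,\partial_i f)(x)\, d\rho \;=\; \int_0^1 \tfrac1\rho\,(T_\rho L f)(x)\, d\rho, \]
so by the triangle inequality $\mathbb{E}_x\abs{f(x)} \le \int_0^1 \Psi(\rho)\, d\rho$ with $\Psi(\rho) := \tfrac1\rho\,\mathbb{E}_x\abs{(T_\rho L f)(x)} = \mathbb{E}_x\big[\abs{\sum_i x_i (T_\rho\partial_i f)(x)}\big]$. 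It remains to bound this integral by $O(\log n)\cdot R$.

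\emph{Step 3 (bounding the integrand; the main obstacle).} Two observations dispose of the easy parts. First, since $T_\rho$ is an $L^1$-contraction, $\Psi(\rho) \le \sum_i \mathbb{E}_x\abs{\partial_i f(x)} = \mathbb{E}_x\norm{\nabla f(x)}_1$ (writing $\nabla f(x) := (\partial_1 f(x),\ldots,\partial_n f(x))$), while $R = \mathbb{E}_x\big[\mathbb{E}_y\abs{\sum_i y_i\partial_i f(x)}\big] \ge \tfrac1{\sqrt2}\mathbb{E}_x\norm{\nabla f(x)}_2 \ge \tfrac1{\sqrt{2n}}\mathbb{E}_x\norm{\nabla f(x)}_1$ by Khintchine's inequality and Cauchy--Schwarz; hence $\Psi(\rho) \lesssim \sqrt n\,R$ for all $\rho$, which already controls $\rho \in [1-\tfrac1n, 1]$ (an interval of length $\tfrac1n$, contributing $\lesssim R$). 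Second, writing $T_\rho$ via $\rho$-biased noise, $(T_\rho g)(x) = \mathbb{E}_\eta[g(x\circ\eta)]$ with each $\eta_i$ equal to $+1$ with probability $\rho$ and otherwise an independent uniform sign, and changing variables $w = x\circ\eta$ (legitimate because $\partial_i f$ ignores coordinate $i$), one splits off the contribution of the resampled coordinates, which by symmetrization and the Rademacher contraction principle is at most $R$. The crux is the remaining contribution, from the coordinates the biased noise \emph{forces}: one must show that, integrated against $d\rho$ over $[0,1-\tfrac1n]$, it costs only $O(\log n)$ --- morally that $\Psi(\rho) \lesssim \tfrac1{1-\rho}\,R$, whereupon $\int_0^{1-1/n}\tfrac{d\rho}{1-\rho} = O(\log n)$ finishes the proof. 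I expect this to be the main difficulty: it is not accessible via a naive triangle inequality inside the $\eta$-average (for $f$ a parity that already turns $R$ into $\Theta(\sqrt n)\,R$), nor via an $\ell_2$-bound (for $f$ a point indicator that turns $R$ into $\norm{f}_2 \gg (\log n)\,R$), so one must instead use more carefully how biased noise concentrates. It is precisely the borderline $\rho \to 1$, where this correction diverges --- equivalently the weights $1/\abs{S} = \int_0^1 \rho^{\abs{S}-1}\,d\rho$, the continuous counterpart of $\sum_{k\le n}1/k$ --- that accounts for the $\log n$ rather than an absolute constant, matching the known tightness of the inequality.
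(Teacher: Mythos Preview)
Your setup in Steps~1 and~2 is correct, and you have correctly isolated the heart of the matter: the bound $\Psi(\rho)\lesssim \tfrac{1}{1-\rho}\,R$ on $[0,1-\tfrac1n]$. But Step~3 is not a proof --- you explicitly leave this estimate open, noting that the naive triangle inequality and the $\ell_2$ route both fail. That is the entire content of Pisier's inequality, so as written the proposal is incomplete.

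The paper does not prove Theorem~\ref{theo:original} directly (it is cited), but its proof of the hypergrid generalization (Theorem~\ref{extended_robust_pisier}), specialized to $m_i=2$, shows exactly how to close your gap, following Naor--Schechtman. The missing idea is \emph{duality}: rather than bounding $\|f\|_1$ by pushing the absolute value through the integral (your $\int_0^1\Psi(\rho)\,d\rho$), one fixes $\rho=1-\tfrac1{n+1}$, picks $g$ with $\|g\|_\infty=1$ realizing $\|T_\rho f\|_1=\langle T_\rho f,g\rangle$, and expands $\rho^{\#u}=\int_0^\rho (\#u)\,t^{\#u-1}\,dt$ inside the pairing. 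The crucial step is the identity of Lemma~\ref{pisier_subpart}, which in the hypercube case reads
\[
\sum_{u\neq 0} t^{\#u-1}(\#u)\,\hat f(u)\overline{\hat g(u)}
\;=\;\frac{1}{1-t}\,\mathbb{E}_{x,y}\Big[\overline{g_{t,1-t}(x,y)}\sum_i y_i x_i L_i f(x)\Big],
\]
where $g_{t,1-t}(x,y)=\mathbb{E}_z[g(z)]$ with each $z_i$ independently equal to $x_i$ with probability $t$ and $y_i$ otherwise. The factor $\tfrac{1}{1-t}$ you were seeking appears here \emph{algebraically}, from the expectation over $y_i$ in the distinguished coordinate, not through an inequality. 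Since $\|g_{t,1-t}\|_\infty\le\|g\|_\infty=1$, H\"older gives the integrand bounded by $\tfrac{1}{1-t}\,R$, and integrating $\int_0^\rho\tfrac{dt}{1-t}=O(\log n)$ finishes. Your decomposition into ``forced'' versus ``resampled'' coordinates is morally the same mechanism, but without pairing against $g$ first you cannot exploit the cancellation that produces the clean $\tfrac{1}{1-t}$.
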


Alternative proofs of Pisier's inequality can be found in \cite{NS02, BV87}.
Before this work we are~not aware of any generalization of Pisier's inequality beyond
  the domain of hypercubes. 
The flow of our~proof of Theorem \ref{projectedtestmean-performancelemma}, Pisier's inequality over hypergrids, at a high level follows that of \cite{NS02}. 
The challenge lies in careful considerations  
  required when moving from the use of Fourier analysis on hypercubes, where functions in the basis are $\{\pm 1\}$-valued, to 
  Fourier analysis on hypergrids, where functions in the basis take complex values that are roots of unity.
In particular, more intricate expressions needed to be discovered  (e.g., in Lemma \ref{pisier_subpart})
  in order to obtain cancellations that help connect the $\ell_s$-norm of $f$ with
  its Laplacian operators.

As mentioned earlier, Theorem \ref{extended_pisier} is not sufficient for our purpose of proving Theorem \ref{Theorem1.5} but we need a more powerful, robust version of Pisier's inequality for hypergrids. Additional definitions are needed~to state the inequality so we delay it 
  to Section \ref{section:robust-pisier} (Theorem \ref{extended_robust_pisier}) where it is proved.
The notion of robustness is the same as in \cite{KMS18} and \cite{canonne2021random}:
in the original Pisier's inequality such as Theorem \ref{theo:original} and \ref{extended_pisier}, 
  difference in the function value along each edge $\{x,x^{(i)\to a}\}$ 
  is accounted twice on the RHS, once at~$x$ and once at $x^{(i)\to a}$; in the robust version, imagine that an adversary gets to pick any orientation of edges and the same inequality still needs to hold when each directed
  edge $(x,x^{(i)\to a})$
  (oriented by the adversary from $x$ to $x^{(i)\to a}$) is only accounted once at $x$ (but not at $x^{(i)\to a}$).
This robustness will be crucial when we apply the inequality to prove Theorem \ref{Theorem1.5}, which we discuss in Section \ref{sec:sketchlemma}.

The key observation behind the proof of our robust Pisier's inequality for hypergrids is similar
  to~that of \cite{canonne2021random} for hypercubes: at one point of the proof of the original inequality, 
  the expectation where every edge of the hypergrid is accounted twice (once for each of its 
  vertices) can be replaced by a similar expectation where every edge is accounted once 
  with respect to a given orientation of edges.
Again,~the need to work with Fourier analysis over $\mathbb{Z}_M$ and deal with roots of unity makes the analysis much more demanding.
Indeed the inequality we prove takes a more complex form on the RHS compared to 
  Theorem \ref{extended_pisier}; in contrast, the robust Pisier's inequality for hypercubes of 
  \cite{canonne2021random} looks identical to Theorem \ref{theo:original}.

\subsubsection{Proof of Theorem \ref{Theorem1.5}}
\label{sec:sketchlemma}

We start with some notation.
Let $\mathcal{S}(t)$ be the uniform distribution supported on all subsets of $[n]$ of size $t$. Given a distribution $p$ over $\mathbb{Z}_M$, we let $\mathcal{D}(t, p)$ be the following distribution over restrictions: to draw  $\rho\sim \mathcal{D}(t,p)$, 
  we first sample $y\sim p$ and $S\sim \mathcal{S}(t)$ and then set $\rho$ to be $\rho_i=*$ if $i\in S$ and $\rho_i=y_i$ if $i\notin S$.

We are now ready to state the main technical lemma, which is proved in Section \ref{prooflemma9}:

\begin{lemma} \label{Lemma3.1}
Let $p$ be a distribution over $\mathbb{Z}_M$, $t \in [n-1]$, and denote
$$\alpha := \mathbb{E}_{T \sim \mathcal{S}(t)} \big[ d_{TV}(p_{\overline{T}}, \mathcal{U}) \big] \geq 0.$$
Then we have
\begin{equation} \label{equation17}
\begin{aligned}
    &\mathbb{E}_{\rho \sim \mathcal{D}(t, p)} \Big[ 
    \left\|\mu(p_{|\rho})\right\|_2
    \Big] + 
    \mathbb{E}_{\rho  \sim \mathcal{D}(t + 1, p)} \Big[ 
    \left\|\mu(p_{|\rho})\right\|_2
    \Big] 
    \ge \frac{t}{n} \cdot \frac{\alpha}{m^{7.5}\cdot
    \mathrm{polylog}(nm/\alpha)}.
\end{aligned}
\end{equation}
\end{lemma}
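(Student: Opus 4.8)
The plan is to derive \eqref{equation17} from the robust Pisier's inequality for hypergrids (Theorem~\ref{extended_robust_pisier}), applied to the ``excess density'' functions of the marginals $p_{\overline T}$ as $T$ ranges over size-$t$ subsets of $[n]$.

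\emph{Setup.} For $|T|=t$ I would write $q_T:=p_{\overline T}$ and $f_T(x):=|\mathbb{Z}_{\overline T}|\,q_T(x)-1$, so that $\mathbb{E}_x[f_T(x)]=0$ and $\mathbb{E}_{x\sim\mathbb{Z}_{\overline T}}[|f_T(x)|]=2\,d_{TV}(p_{\overline T},\mathcal{U})$, hence $\alpha=\tfrac12\,\mathbb{E}_{T\sim\mathcal{S}(t)}\mathbb{E}_x[|f_T(x)|]$. The identity driving everything is the ``martingale increment''
$$
L_i f_T(x)\;=\;f_T(x)-f_{T\cup\{i\}}\big(x_{\overline T\setminus\{i\}}\big),\qquad i\in\overline T,
$$
(averaging the Laplacian over coordinate $i$ collapses the size-$(n-t)$ marginal to the size-$(n-t-1)$ one); combined with $q_T(x)=q_{T\cup\{i\}}(x_{\overline T\setminus\{i\}})\cdot\Pr_{p_{|\rho}}[z_i=x_i]$, where $\rho$ is the restriction with $\stars(\rho)=T\cup\{i\}$ and $\rho_j=x_j$ for $j\in\overline T\setminus\{i\}$, this rewrites each term on the right-hand side of Pisier's inequality through the conditional coordinate marginals that define the bias vectors $\mu(p_{|\rho})$ at levels $t$ and $t+1$.

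\emph{Main steps.} I would (i) apply the robust Pisier's inequality over $\mathbb{Z}_{\overline T}$ to $f_T$ with $s=1$, choosing the adversarial orientation of edges so that the one-sided contributions reorganize into bias terms \emph{without} a lossy triangle inequality over the $n-t$ directions --- this is precisely where plain Pisier together with the triangle inequality and the crude estimate $\mathbb{E}_{x,y}\big[\,|L_i f_T(x)\sum_{a}\omega_i^{-ay_i}\omega_i^{ax_i}|\,\big]\lesssim\sqrt{m_i}\,\mathbb{E}[\|\mu(p_{|\rho})\|_2]$ falls short, yielding only a $1/(n-t)$ prefactor rather than the $t/n$ the recursion in the algorithm needs; (ii) combine the increment identity with the bound $|m_i\Pr_{p_{|\rho}}[z_i=c]-1|\le 2\sqrt{m_i}\,\|\mu(p_{|\rho})\|_2$ (Cauchy--Schwarz over the at most $m_i$ entries $\mu_i^{c,d}(p_{|\rho})$) to turn the oriented right-hand side into a nonnegative combination of the quantities $\|\mu(p_{|\rho})\|_2$ with $\rho$ at level $t$ or $t+1$; the weights accumulated here --- from the $m_i$'s in $g_i(x,y)=m_i\mathbf{1}[x_i=y_i]-1$, the $1/m_i$ in the Laplacian, and the Cauchy--Schwarz steps, propagated through the (messier than in the hypercube case) hypergrid inequality --- are what one bounds by $m^{7.5}$; (iii) average over $T\sim\mathcal{S}(t)$, noting that a pair $(T,i)$ with $i\in\overline T$ corresponds bijectively to a pair (size-$(t+1)$ set $S'=T\cup\{i\}$, distinguished element $i\in S'$) and that $\binom nt(n-t)=\binom n{t+1}(t+1)$, while the fixed-coordinate values arising in the computation are distributed exactly as the relevant marginal of $p$; this turns the averaged bound into $\mathbb{E}_{\rho\sim\mathcal{D}(t+1,p)}[\cdots]$ (from the increment part) plus $\mathbb{E}_{\rho\sim\mathcal{D}(t,p)}[\cdots]$ (from the $f_T(x)$ part), with the combinatorial normalization leaving the $t/n$ prefactor.

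\emph{Light conditionals and the main obstacle.} The conversion in step (ii) is weak when $p_{|\rho}$ is very light in the relevant direction, so I would fix a threshold $\tau\asymp\alpha/\mathrm{poly}(nm)$, argue that conditionals lighter than $\tau$ contribute at most $\alpha/2$ to the Pisier right-hand side, and split the remainder dyadically into $O(\log(nm/\alpha))$ scales on each of which the conversion loses only a constant --- this dyadic split is the source of the $\mathrm{polylog}(nm/\alpha)$ factor in \eqref{equation17}. The hard part will be step (i): choosing the orientation in the hypergrid robust Pisier inequality (whose right-hand side, unlike its hypercube analogue, carries roots of unity and the spread-out weights $g_i$) so that \emph{simultaneously} the outer absolute value can be discarded because the inner sum is sign-definite, and every surviving oriented edge-difference maps cleanly onto a bias entry of a conditional at the correct level --- this is what buys the factor-$t$ gain over the naive bound, and carrying the $\mathrm{poly}(m)$ weights and the orientation bookkeeping through it is where essentially all the work lies.
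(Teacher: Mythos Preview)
Your setup and the ``martingale increment'' identity are correct, and you have identified the right starting point (robust Pisier applied to $f_T$). But the proposal has a genuine gap at exactly the place you flag as ``the hard part.''

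Your plan is to choose the orientation in Theorem~\ref{extended_robust_pisier} so that the inner sum
\[
\sum_{i}\sum_{a\in\mathbb{Z}_{m_i}^*}\sum_{d:(x,x^{(i)\to x_i+d})\in G}(1-\omega_i^{ad})\,\omega_i^{-ay_i}\omega_i^{ax_i}\,L_i^{x_i+d}f_T(x)
\]
is sign-definite, so that the outer absolute value disappears and each oriented edge maps directly onto a bias entry. This cannot work: the sum is complex-valued, depends on the \emph{random} $y$ through the characters $\omega_i^{-ay_i}$, and the $(1-\omega_i^{ad})$ factors are genuinely complex for $m_i>2$. No orientation of the edges makes this expression real and nonnegative for all (or even most) $y$. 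The paper does not attempt this. Instead, after applying robust Pisier it removes the absolute value by Bernstein's inequality over $y$ (Lemma~\ref{bernstein_bound}), which converts the RHS into
\[
\mathbb{E}_{x\sim p_{\overline T}}\Bigl[\sqrt{\textstyle\sum_{i,b:(x,x^{(i)\to b})\in G}\bigl(L_i^b f_T(x)/\ell(x)\bigr)^2}\,\Bigr],
\]
i.e.\ a square root of outgoing edge-weights. This is what preserves the square-root gain you correctly note is lost under the triangle inequality.

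The second missing piece is how the orientation is actually chosen and exploited. The paper classifies edges of the hypergrid by their weight $|\ell(x)-\ell(x')|/\max\{\ell(x),\ell(x')\}$ into \emph{uneven} edges and \emph{even} edges at dyadic scales $\kappa\ge1$, and orients each class by a different rule (uneven edges from heavy to light; even edges via a greedy max-degree ordering, with even edges adjacent along a direction containing an outgoing uneven edge handled separately). The bound above then becomes an expected $\sqrt{\text{outdeg}}$ in one of these graphs (Lemma~\ref{Lemma3.6}), and the link to $\|\mu(p_{|\rho})\|_2$ at levels $t$ and $t+1$ goes through a $t$-contributing vs $(t+1)$-contributing dichotomy: a combinatorial pigeonhole (Claim~\ref{Lemma3.9Helper}, Lemma~\ref{Lemma3.9}) for the uneven case, and a Jensen/degree-transfer argument (Lemmas~\ref{Lemma3.4}, \ref{Lemma3.12}) for the even case. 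Your ``light conditionals'' threshold-and-dyadic-split is in the right spirit, but in the paper the dyadic scales are the \emph{central} structure determining the orientation, not a cleanup step; and the mechanism that produces the $t/n$ prefactor is this contributing-pair dichotomy together with the outdegree bookkeeping, not a direct sign-definiteness of the Pisier integrand.
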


\def\calS{\mathcal{S}}

A significant portion of the paper is dedicated to proving Lemma \ref{Lemma3.1} in Section \ref{prooflemma9}. Once Lemma \ref{Lemma3.1} is proven, it only requires a short proof to obtain Theorem \ref{Theorem1.5}.
The proof of Theorem \ref{Theorem1.5} assuming Lemma \ref{Lemma3.1} is very similar to an argument used in  \cite{canonne2021random}, except for a minor change. It is included in Appendix \ref{appendix-additionalproofs} for completeness.
Below we sketch the proof of Lemma \ref{Lemma3.1} and compare it with \cite{canonne2021random}.

\def\calH{\mathcal{H}}

The key step of the proof of Lemma \ref{Lemma3.1} is the construction of a family of 
  directed graphs that is used to connect
  $d_{TV}(p_{\overline{T}},\calU)$ of $T\sim \calS(t)$ with $\|\mu(p_{|\rho})\|_2$ of
  either $\rho\sim \calD(t,p)$ or $\sim \calD(t+1,p)$.
In more details, let $T\subset [n]$ be a set of size $t$ and let $K=M_{\overline{T}}$ (so 
  $p_{\overline{T}}$ is a distribution over the hypergrid $\mathbb{Z}_K$).
Let $\calH(T)$ denote the undirected graph over $\mathbb{Z}_K$ with undirected edges $\{x,x^{(i)\to b}\}$ for all $x\in \mathbb{Z}_K,i\in \overline{T}$ and $b\ne x_i$.
Using values of $p_{\overline{T}}(x)$, we classify edges of $\calH(T)$ into those that are
  \emph{uneven} and \emph{even}.
Roughly speaking, an edge $\{x,y\}$ of $\calH(T)$ is uneven if 
$$\max\big(p_{\overline{T}}(x),p_{\overline{T}}(y)\big)
\gg \min\big(p_{\overline{T}}(x),p_{\overline{T}}(y)\big)$$
and is even otherwise.
The most important step of the proof is the construction of an orientation $G(T)$ of $\calH(T)$,
  where different strategies are used to orient uneven edges and even edges.

Once the directed graphs $G(T)$ for each $T$ is in place,
  the proof of Lemma \ref{Lemma3.1} proceeds as follows:
\begin{flushleft}\begin{enumerate}
\item In Section \ref{tv_to_directed_graphs}, we apply the robust Pisier's inequality
  on $s=1$ and $f$ over $\mathbb{Z}_K$ set to be 
  $$
  f(x)=p_{\overline{T}}(x)\prod_{j\in \overline{T}} m_j -1
  $$
  so the expectation of $f$ is $0$ and the LHS of the inequality is exactly $d_{TV}(p_{\overline{T}},\calU)$. Orienting the edges using $G(T)$, the inequality implies that either the expectation of
\begin{equation}\label{eq:case1}
\sqrt{\text{number of outgoing uneven edges of $x$}}
\end{equation} or the expectation of
\begin{equation}\label{eq:case2}
\sqrt{\text{number of outgoing even edges of $x$}}
\end{equation}
   when $x\sim p_{\overline{T}}$ is large in terms of $d_{TV}(p_{\overline{T}},\calU)$.
   
\item On the other hand, in Section \ref{sec:haha1}, \ref{sec:haha2} and 
  \ref{Case_2_section}, we connect $G(T)$ with $\|\mu(p_{|\rho})\|_2$ by showing that 
  when either the expectation of (\ref{eq:case1}) or the expectation of (\ref{eq:case2})
  is large for a typical $T\sim \calS(t)$,
  it implies that $\|\mu(p_{|\rho})\|_2$ is large for a typical $\rho$ drawn either from
  $\calD(t,p)$ or $\calD(t+1,p)$. The two cases of (\ref{eq:case1}) and (\ref{eq:case2}) are handled separately in Section \ref{sec:haha2} and 
  \ref{Case_2_section}, respectively.
\end{enumerate}\end{flushleft}

Compared to that of a similar lemma in \cite{canonne2021random}, our proof of Lemma \ref{Lemma3.1} differs significantly in~the construction of directed graphs $G(T)$
  due to the simple fact a vertex in $\calH(T)$ has multiple edges along the 
  same variable (while in hypercubes, every vertex has a unique edge along each variable).
In particular, the orientation of even edges needs to be handled 
  with a more delicate strategy. In \cite{canonne2021random}, uneven and even edges are oriented separately;
  in contrast, the orientation of even edges here crucially
  depends on that of uneven edges (uneven edges are handled first, followed by even edges).
The analysis in Case 2 (Section \ref{Case_2_section}), which becomes more involved compared to \cite{canonne2021random},  only works with the 
  new orientation strategy for even edges.

\subsection{Background and Related Work}\label{sec:relatedwork}

\textbf{Distribution testing:} Distribution testing --- initially studied in \cite{GR00}, \cite{BFRSW00}, and \cite{BFRSW13} --- is concerned with determining whether a  distribution satisfies a certain property or is \textit{far} from satisfying the property. Sample-optimal algorithms are known for a range of problems in distribution testing in the standard setting, where samples are drawn independently from the probability distribution that is being tested. For example, \cite{CDVV14,VV14,Paninski08,DBLP:journals/corr/DiakonikolasK16,DBLP:journals/corr/DiakonikolasKN14} give algorithms for testing with optimal sample complexity. The sample complexity lower bounds for many such problems have a polynomial dependence on the domain size, which in the high-dimensional setting leads to an exponential dependence on the dimension. Therefore, in this setting, newer models of sampling or testing are needed to achieve an improved (ideally sublinear) dependence on the dimension. \medskip

\noindent\textbf{Property testing on extended high-dimensional domains:} Property testing on extended high-dimensional domains $[m]^n$, also known as hypergrids, is fruitful to study due to its potential to yield sample complexity bounds that depend explicitly on both the alphabet size $m$ and the dimension $n$. The goal of this research is typically to construct algorithms with a polynomial (or even sublinear) dependence on $n$ and a polynomial dependence on $m$. Many of the algorithms or sample complexity lower-bounds for testing properties of functions or distributions over hypergrids rely on Fourier analysis. For example, in \cite{BRY14}, Blais, Raskhodnikova, and Yaroslavtsev utilize a set of Walsh functions, a canonical Fourier basis for functions on the line $[m]$, in their analysis. Other papers such as \cite{black2017od} and \cite{downsampling_walsh} also apply Fourier analysis by using Walsh functions over $[m]^n$. One advantage of Walsh functions is that they are $\{\pm 1\}$-valued. 
However, Walsh functions can only be used as a Fourier basis if $m$ is a power of $2$. In this paper, we consider the high-dimensional domain $\mathbb{Z}_{m_1} \times \dots \times \mathbb{Z}_{m_n}$ and use certain powers of the primitive $m$-th root of unity to form the Fourier basis, which allows us to avoid this restriction to powers of $2$.

Other papers on testing properties over hypergrids $[m]^n$ include \cite{CS13,CS12,NIPS2015_1f36c15d,AJMR12}.\medskip

\noindent\textbf{Subcube conditioning:}  As mentioned earlier, under the standard sampling model, $\sqrt{|\Sigma|} / \epsilon^2$ samples are~needed for testing uniformity \cite{Paninski08,VV14}. To circumvent this issue under the high-dimensional setting, one may choose to consider distributions with more structure; for example, product distributions \cite{DBLP:journals/corr/CanonneDKS16}. On the other hand, one may study query models with stronger access to the  distribution. The subcube conditioning oracle model was studied with the latter purpose, and was first introduced in \cite{CRS15} and studied in \cite{bc18}. 

In \cite{canonne2021random}, the authors give a nearly-optimal uniformity testing algorithm for distributions supported on hypercubes $\{\pm 1\}^n$ under the subcube conditioning model. The subcube conditioning oracle model has also been used in studying the problems of learning and testing junta distributions on $\{\pm 1\}^n$ with respect to the uniform distribution \cite{CJLW20}. 

The subcube conditioning model is a theoretical model, but not an artificial model. Subcube conditional samples provide stronger access to the underlying distribution that is also potentially practically realizable. Subcube conditioning has received recent attention beyond the field of property testing, for example in \cite{blanc2023lifting}. In \cite{blanc2023lifting}, the authors study how subcube conditioning can be used to convert PAC learning algorithms that work under the uniform distribution into ones that works under an arbitrary and unknown distribution. 

Other recent papers have used variations of the subcube conditioning model to study the problem of identity testing. In \cite{BCSV}, the authors prove that if approximate tensorization holds for the visible distribution $\mu$ over $[ k]^n$, then there is an efficient identity testing algorithm for any hidden distribution $\pi$ using $\tilde{O}(n/\epsilon)$ queries to the so-called \textit{coordinate} {oracle}. The latter is similar to the subcube oracle, with the added restriction that all but one coordinate must be fixed when taking a random restriction.

More broadly, the subcube conditioning model is an adaptation of the conditional sampling model. The original and more general conditional sampling model \cite{CFGM13, CFGM16, CRS14,CRS15} allows for the algorithm to specify an arbitrary subset of a domain and receive a sample that is conditioned on it lying in the subset.  
This conditional sampling model has been applied to a range of problems in distribution testing and beyond in order to circumvent lower bounds in the 
standard sampling model.

\subsection{Notation} \label{section:notation}

We use $\tilde{O}(f(n))$ to denote $O(f(n)\cdot \text{polylog}(f(n)))$ and $\tilde{\Omega}(f(n))$ to denote $\Omega(f(n)/(1 + |\text{polylog}(f(n))|))$.
We write $f(n) \lesssim g(n)$ if, for some constant $c > 0$, $f(n) \leq c \cdot g(n)$ for all $n \geq 0$. $\gtrsim$ is defined similarly. 

Given $M=(m_1,\ldots,m_n)$,
 we write $\mathbb{Z}_M$ to denote $\mathbb{Z}_{m_1} \times \dots \times \mathbb{Z}_{m_n}$. We will   occasionally denote   $${\bigtimes_{i \in [n]} [m_i]}:={[m_1] \times \dots \times [m_n]}\quad \text{\ and\ }\quad \bigtimes_{i \in [n]} \mathbb{Z}_{m_i}:={\mathbb{Z}_{m_1} \times \dots \times \mathbb{Z}_{m_n}}.
 $$
Given $x \in \{-1, 1\}^n$, we write $x^{(i)}$ to denote the string that is identical to $x$ but with coordinate $i$ flipped, i.e. $\smash{x_j^{(i)} = x_j}$ for all $j \neq i$ and $\smash{x_i^{(i)} = - x_i}$.
Given $x\in \mathbb{Z}_M$,  $i\in [n]$ and $a\in \mathbb{Z}_{m_i}$, we let $x^{(i) \to a}$ denote the vector that is identical to $x$ but with coordinate $i$ set to $a$, i.e. $\smash{x^{(i) \to a}_j = x_j}$ for all $j \neq i$ and $\smash{x^{(i) \to a}_i} = a$.

\section{The Algorithm}\label{sec:alg}

In this section we present our main testing algorithm, \textsc{SubCondUni}, 
and use it to prove Theorem \ref{thm:subcube-alg-intro}.
It is presented as Algorithm \ref{algo:subconduni} and uses \textsc{ProjectedTestMean} as a subroutine.

\begin{algorithm}[t]
\algblock[Name]{StartMainCase}{EndMainCase}
	\begin{algorithmic}[1]
		\Require Dimension $n$, $M=(m_1,\ldots,m_n)$, subcube access to distribution $p$ over $\mathbb{Z}_M$, and $\epsilon \in (0,1/2]$ 
		\If{$n$ and $\epsilon$ violate (\ref{equation11})} 
		     \State Run an algorithm (Lemma 4.20)
       from  \cite{BCSV})
		        and \Return the same answer
		\EndIf
		\StartMainCase\Comment{Main case: (\ref{equation11}) satisfied}
		  \State Let $m = \max_{i} m_i$ and let $\smash{L=L(n,\epsilon)}$ be as defined in (\ref{equation15}) 

		\For {$j=1,2,\dots ,\lceil\log 2L\rceil$} \label{scalgo:first-for-start}
		\Comment{Handle the first case of  (\ref{equation13})}

			\State Sample 
			$s_j=8L \log (2L)\cdot 2^{-j}$ restrictions from $\mathcal{D}_{\sigma}(p)$ \label{scalgo:sample-restriction}			
			\For {each restriction $\rho$ sampled with $|\text{stars}(\rho)|>0$}
			    \State Run $\textsc{ProjectedTestMean}(|\text{stars}(\rho)|, M_{\stars(\rho)}, 2^{-j}, p_{| \rho})$ for $r=O(\log (nm/\epsilon))$ times 
       \label{scalgo:run-ptmean} 
			    \State \Return $\reject$ if the majority of calls return $\reject$ \label{scalgo:reject-ptmean-majority}
			   
		    \EndFor
		 
		\EndFor \label{scalgo:first-for-end}
		\For {$j=1,2,\dots,\lceil \log (4/\epsilon)\rceil$} \Comment{Handle the second case of (\ref{equation13})} 
			      \State Sample 
			      $s_j' = (32/\epsilon)\log (4/\epsilon)\cdot 2^{-j} $  restrictions from $\mathcal{D}_{\sigma}(p)$
			      \For {each restriction $\rho$ sampled that satisfies $0<|\text{stars}(\rho)|\leq 2\sigma n$} 
			     \State Let $M_{\text{stars}(\rho)} = (m_{i_1}, m_{i_2}, \dots, m_{i_{|\text{stars}(\rho)|}})$ where each $i_k \in \text{stars}(\rho)$       
				\State Run $\textsc{SubCondUni}(|\text{stars}(\rho)|, M_{\text{stars}(\rho)}, 2^{-j},p_{|\rho})$ for
					$t=100 \log (16/\epsilon)$ times
					\State \Return $\reject$ if the majority of calls return \reject \label{scalgo:recursive-call-reject}
			      \EndFor
			 \EndFor
			\State \Return $\accept$
		\EndMainCase
		
  	\end{algorithmic}
	\caption{$\textsc{SubCondUni}( n, M, \epsilon,p )$}\label{algo:subconduni}
\end{algorithm}

\def\SubC{\textsc{SubCondUni}}

This algorithm proceeds in a similar fashion as the \textsc{SubCondUni} algorithm in \cite{canonne2021random}. 
Let $p$ be a distribution over $\mathbb{Z}_M$ with $M=(m_1,\ldots,m_n)$ (where each $m_i \geq 2$).
Let
\begin{equation}\label{eq:defc0}
\sigma := \sigma(\epsilon) = \frac{1}{C_0 \cdot \log^4(16/\epsilon)}
\end{equation}
where $C_0 > 0$ is a sufficiently large constant to be specified later. 
If $n$ and $\eps$ violate
\begin{equation} \label{equation11}
    e^{-\sigma n /10} \leq \epsilon/8,
\end{equation}
i.e., $\eps$ is tiny, we use a result from \cite{BCSV}. 
Given the violation of (\ref{equation11}),
  we have
\begin{equation} \label{equation12}
    n = O \left( \frac{1}{\sigma} \cdot \log \left( \frac{1}{\epsilon}\right) \right) = O \left( \log^5 \left( \frac{1}{\epsilon}\right) \right).
\end{equation}
It follows from Theorem 4.1 and Lemma 4.20 of \cite{BCSV} 
 that uniformity can be tested with $\widetilde{O}({n \sqrt{m}}/{\epsilon^2})$ samples in this case. 
(To use  Theorem 4.1 and Lemma 4.20 in \cite{BCSV}, note that the uniform distribution is $\nu = ({1}/{m})$-balanced and satisfies approximate tensorization of entropy with constant $C = 1$.)

From now on we focus on the general case where $n$ and $\eps$ satisfy (\ref{equation11}).
To better understand how the algorithm works, consider the case when $d_{TV}(p, \mathcal{U}) \geq \epsilon$. Lemma \ref{Lemma1.4} implies that either
\begin{equation}
\mathbb{E}_{S \sim \mathcal{S}_\sigma}\big[d_{TV}(p_{\overline{S}}, \mathcal{U})\big] \geq \epsilon/2\quad \text{\ or\ }\quad
    \mathbb{E}_{\rho \sim \mathcal{D}_{\sigma}(p)} \left[ d_{TV}(p_{|\rho}, \mathcal{U}) \right] \geq \epsilon/2. \label{equation13}
\end{equation}

Assuming $\mathbb{E}_{S \sim \mathcal{S}_\sigma}[d_{TV}(p_{\overline{S}}, \mathcal{U})] \geq \epsilon/2$, using $e^{-\sigma n /10} \leq \epsilon/8$ and $\sigma = 1/\text{polylog}(1/\epsilon)$, Theorem \ref{Theorem1.5} gives
\begin{equation} \label{equation14}
\mathbb{E}_{\rho \sim \mathcal{D}_{\sigma}(p)} \left[ \frac{1}{m\sqrt{n}} \left\|\mu(p_{|\rho})\right\|_2 \right] \geq \widetilde{\Omega} \left(\frac{\epsilon}{m^{8.5} \sqrt{n}} \right).
\end{equation}
This is handled in the first for-loop of the main case, where \textsc{ProjectedTestMean} is used as a subroutine to tell whether $p_{|\rho}$ is uniform or has a bias vector with a large $\ell_2$-norm.
(Note that subcube conditional query access to $p$ is used to simulate sample access to $p_{|\rho}$ needed by \textsc{ProjectedTestMean}.)
The parameter $L$ used in this for-loop is defined as the inverse of the RHS of (\ref{equation14}) so $L$ satisfies
\begin{equation} \label{equation15}
L := L(n, \epsilon) = \widetilde{O}\left(\frac{m^{8.5} \sqrt{n}}{\epsilon}\right).
\end{equation} 

For the other case when $\mathbb{E}_{\rho \sim \mathcal{D}_{\sigma}(p)} [ d_{TV}(p_{|\rho}, \mathcal{U}) ] \geq \epsilon/2$, note that a typical draw of $\rho\sim \mathcal{D}_\rho(p)$ satisfies both that $d_{TV}(p_{|\rho},\mathcal{U})$ remains large and that the dimension $|\stars(\rho)|$ of $p_{|\rho}$ is much smaller (i.e., $\approx \sigma n$).
Intuitively this case is handled using recursive calls to $\textsc{SubCondUni}$
  in the second for-loop.
(Note that subcube conditional query access to $p$ can be used to simulate subcube conditional query access to $p_{|\rho}$ needed by recursive calls to \textsc{SubCondUni}.)

The proof of Theorem \ref{thm:subcube-alg-intro} using $\textsc{SubCondUni}$ follows similar arguments used in the proof of Theorem 2.1 of \cite{canonne2021random}.
We included the proof in Appendix  \ref{appendix-additionalproofs} for completeness.

\section{Robust Pisier's Inequality on Hypergrids} \label{section:robust-pisier}

\def\calH{\mathcal{H}}

In this section, we prove a robust version of Pisier's inequality on functions over hypergrids $\mathbb{Z}_{m_1} \times \dots \times \mathbb{Z}_{m_n}$, for $m_1, \dots, m_n \geq 2$. This inequality will play a crucial role in the proof of the main technical lemma, Lemma \ref{Lemma3.1}. First, let $\mathbb{Z}_{M}$ denote $\mathbb{Z}_{m_1} \times \dots \times \mathbb{Z}_{m_n}$, where $M = (m_1, \dots, m_n)$. 

Our notion of \textit{robustness} is similar to that of \cite{canonne2021random} and \cite{KMS18}, where the inequality holds for any orientation of an undirected graph $\calH$ over $\mathbb{Z}_M$, which will be defined below.
  
Given $x\in \mathbb{Z}_{M}$, $i\in [n]$ and $a\in \mathbb{Z}_{m_i}$, we write $x^{(i) \to a}$ to denote the vector that satisfies $x^{(i) \to a}_j = x_j$ for all $j \neq i$, and $\smash{x^{(i) \to a}_i} = a$.
Let $\mathcal{H}$ be the undirected graph over $\mathbb{Z}_M$ that consists of undirected edges $\{x, x^{(i) \to a} \}$ for all $x \in \mathbb{Z}_M$, $i \in [n]$, and $a \in \mathbb{Z}_{m_i}$ such that $x_i\ne a$.
(Equivalently, $\{x,y\}$ is an undirected edge in $\calH$ if there exists an $i\in [n]$ such that $x_i\ne y_i$ and $x_j=y_j$ for all $j\ne i$.)

Consider a function $f:\mathbb{Z}_M\rightarrow \mathbb{C}$, we recall the definition of the 
  Laplacian operator $L_if$ (\cite{odonnell2021analysis}):

\begin{definition}
Let $i \in [n]$ and $f: \mathbb{Z}_M \to \mathbb{C}$. The \textit{ith coordinate Laplacian operator} $L_i f$ is defined by:
$$L_if(x) = f(x) - \mathbb{E}_{a\sim \mathbb{Z}_{m_i}} \big[f(x^{(i)\rightarrow a})\big].$$
Given $a\in \mathbb{Z}_{m_i}$, we define $L_i^a f$
  to be 
$$L_i^a f(x) = \frac{f(x) - f(x^{(i) \to a})}{m_i}.$$
So we have $L_if(x)=\sum_{a\in \mathbb{Z}_{m_i}} L_i^af(x)$.
\end{definition}

For each $j\in [n]$, let
$\mathbb{Z}_{m_j}^*=\{1,\ldots,m_j-1\}$ and 
  let $\omega_j = e^{2 \pi i /m_j}$ be the primitive $m_j$-th root of unity.
We are now ready to state our robust Pisier's inequality
  for functions over $\mathbb{Z}_M$:

\begin{theorem}[Robust Pisier's Inequality for Functions over $\mathbb{Z}_M$]\label{extended_robust_pisier} Let $f: \mathbb{Z}_M \to \mathbb{C}$ be a function with $\mathbb{E}_{x \sim \mathbb{Z}_M}[f(x)] = 0$ and let $G$ be an orientation of $\mathcal{H}$. 
Then for any $s \in [1, \infty)$ we have
\begin{align*}
&\Big( \mathbb{E}_{x \sim \mathbb{Z}_M} \big[ \abs{f(x)}^s \big]\Big)^{1/s}\\ &\hspace{0.5cm}\lesssim \log (n) \cdot \left(\mathbb{E}_{x, y \sim \mathbb{Z}_M} \left[  \abs{\sum_{i\in [n]}  \sum_{a \in \mathbb{Z}_{m_i}^*} \sum_{\substack{d \in \mathbb{Z}_{m_i}^* \\ (x, x^{(i) \to x_i + d}) \in G}}  \hspace{-0.4cm} (1-\omega_i^{ad})\omega_i^{-a y_i} \omega_i^{a x_i} L_i^{x_i + d} f(x)  }^s \right]\right)^{1/s}.
\end{align*}
\end{theorem}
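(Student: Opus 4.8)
The plan is to follow the structure of the non-robust Pisier inequality (Theorem~\ref{extended_pisier}), whose proof in turn follows the approach of \cite{NS02}, and then to insert the ``adversarial orientation'' trick of \cite{canonne2021random} at the one point where it is possible. Concretely, I would first set up Fourier analysis on $\mathbb{Z}_M$: write $f = \sum_{\xi \in \mathbb{Z}_M} \widehat{f}(\xi)\, \chi_\xi$ with $\chi_\xi(x) = \prod_j \omega_j^{\xi_j x_j}$, note $\widehat{f}(0)=0$ since $\mathbb{E}[f]=0$, and express $L_i f$ in the Fourier basis ($L_i$ projects onto characters with $\xi_i \neq 0$, scaled appropriately). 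The key representation to verify is that the inner expression $\sum_{a\in\mathbb{Z}_{m_i}^*} \omega_i^{-ay_i}\omega_i^{ax_i} L_i^{x_i+d} f(x)$, summed suitably, recovers a ``derivative in a random direction $y$'' operator; the quantity $\sum_{a\in\mathbb{Z}_{m_i}^*}\omega_i^{-ay_i}\omega_i^{a x_i}(1-\omega_i^{ad})$ is exactly the Fourier multiplier that turns a telescoped edge-difference at $x$ into a statement about $f(x) - f(x^{(i)\to y_i})$-type terms. I expect the heart of the argument to be an identity of the shape
\begin{equation*}
f(x) = \mathbb{E}_{y}\Big[ \sum_{i\in[n]} \sum_{a\in\mathbb{Z}_{m_i}^*} \omega_i^{-ay_i}\omega_i^{ax_i} L_i f(x) + (\text{lower-order / telescoping terms})\Big],
\end{equation*}
analogous to the hypercube identity $f(x) = \mathbb{E}_y[\sum_i y_i x_i L_i f(x)]$ used in \cite{NS02}, obtained via the martingale/telescoping decomposition of $f$ along a random path.

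Next, following \cite{NS02}, I would introduce the random-path (martingale) telescoping: pick a uniformly random permutation of $[n]$ and interpolate from $f$ to its mean one coordinate at a time; each increment is $L_{\pi(k)}$ applied to a partial average, and the triangle inequality in $\ell_s$ together with the $O(\log n)$ factor from summing $1/k$ over the path length yields the logarithmic loss. This is where \cite{NS02} bounds each increment and where the extra ``$\omega_i$'' bookkeeping must be carried through: instead of a sign $x_i$ we carry the character value $\omega_i^{a x_i}$ and average over an auxiliary $y$ that plays the role of the $\{\pm1\}$ Rademacher vector in the hypercube proof (here $y \sim \mathbb{Z}_M$). The cancellations needed — the analogue of Lemma~\ref{pisier_subpart} — are that $\mathbb{E}_{y_i}[\omega_i^{-a y_i}] = 0$ for $a \neq 0$ and that cross terms between different frequencies $a, a'$ vanish in expectation over $y$, leaving precisely the diagonal contributions that reassemble $L_i f$.

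The robustness step is the crucial and delicate one. At the stage of the proof where each edge $\{x, x^{(i)\to a}\}$ contributes a difference $f(x) - f(x^{(i)\to a})$ that is currently ``counted at both endpoints,'' I would argue — exactly as in \cite{canonne2021random} — that one can instead count it only at the tail of the edge under the orientation $G$, because the relevant expectation over $x \sim \mathbb{Z}_M$ and over $y$ is symmetric under swapping $x \leftrightarrow x^{(i)\to a}$, so replacing a symmetric sum over undirected edges by a sum over directed edges (oriented arbitrarily by the adversary) changes the expectation by at most a constant factor. The bookkeeping cost is that one must restrict the innermost sum to $d \in \mathbb{Z}_{m_i}^*$ with $(x, x^{(i)\to x_i+d}) \in G$, and replace $L_i^{x_i+d}f$ with the signed edge-difference $(f(x) - f(x^{(i)\to x_i+d}))/m_i$ weighted by $(1-\omega_i^{ad})$ — which is the factor recording that we moved from a full Laplacian to a single oriented edge. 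I would verify that the Fourier multiplier still reconstructs (up to the $O(\log n)$ and absolute constants) the full operator when summed over all $a$ and all oriented edges out of $x$.

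The main obstacle I anticipate is precisely this last reconciliation: on hypercubes each vertex has a unique edge per coordinate, so ``orient the edge'' is unambiguous and the multiplier is trivial, whereas on $\mathbb{Z}_M$ a vertex $x$ has $m_i - 1$ edges in direction $i$ and the adversary may orient some in and some out, so I must show that the weighted sum $\sum_{a,d}(1-\omega_i^{ad})\omega_i^{-ay_i}\omega_i^{ax_i}(f(x)-f(x^{(i)\to x_i+d}))/m_i$, expectation-averaged over $y$, still dominates (a constant multiple of) the contribution of $L_i f(x)$ regardless of which subset of the $m_i-1$ edges is oriented outward. Getting the constant to be orientation-independent — and checking that the telescoping identity survives when only outgoing edges are retained at each vertex along the random path — is the part that requires the ``more intricate expressions'' alluded to after Theorem~\ref{extended_pisier}, and is where I would spend most of the effort; everything else is a careful but routine transcription of \cite{NS02} into the root-of-unity Fourier basis over $\mathbb{Z}_M$.
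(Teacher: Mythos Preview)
Your proposal has the right high-level shape but misidentifies both the mechanism of the \cite{NS02} argument and, more importantly, the mechanism of robustness.

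First, the paper's proof (and the \cite{NS02} proof it follows) does \emph{not} proceed via a random permutation and a martingale telescope with a $\sum_k 1/k$ yielding the $\log n$. It proceeds by duality: choose $g$ with $\|g\|_q=1$ realizing $\|T_\rho f\|_s = \langle T_\rho f, g\rangle$ for $\rho = 1-1/(n+1)$, write $\rho^{(\#u)}$ via the integral $\frac{1}{\Gamma(1+\gamma)}\int_0^\rho t^{(\#u)-1}(\#u)^{\gamma+1}(\log(\rho/t))^\gamma\,dt$, and invoke a Fourier identity (Lemma~\ref{pisier_subpart}) that rewrites $\sum_{u\ne 0} t^{(\#u)-1}(\#u)^{\gamma+1}\hat f(u)\overline{\hat g(u)}$ as $\frac{1}{1-t}\,\mathbb{E}_{x,y}\big[\overline{g_{t,1-t}(x,y)}\sum_i\sum_{a\in\mathbb{Z}_{m_i}^*}\omega_i^{-ay_i}\omega_i^{ax_i}L_i\Delta^\gamma f(x)\big]$. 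The $\log n$ emerges from $\int_0^\rho \frac{dt}{1-t}$ after H\"older and letting $\gamma\to 0$. There is no identity of the form $f(x)=\mathbb{E}_y[\sum_i y_ix_i L_i f(x)]$ being used; that expectation is zero.

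Second, the robustness step is \emph{not} ``the expectation is symmetric under swapping $x\leftrightarrow x^{(i)\to a}$ so orienting changes things by a constant.'' It is an exact equality (equation~(\ref{equation25}) in the paper). After expanding $L_i=\sum_b L_i^b$ and pairing each edge's two contributions, one uses $L_i^{a}\Delta^\gamma f(x^{(i)\to a+d}) = -L_i^{a+d}\Delta^\gamma f(x^{(i)\to a})$ together with a coupling of $z\sim N_{t,1-t}(x,y)$ and $z'\sim N_{t,1-t}(x^{(i)\to x_i+d},y)$: when $z_i=y_i$ the coupling gives $z=z'$ so the two terms combine as $(1-\omega_i^{ad})$ times one of them, and when $z_i=x_i$ the factor $\omega_i^{-ay_i}$ averages to zero over $y_i$. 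This is what produces the $(1-\omega_i^{ad})$ multiplier and why the dual function $g_{t,1-t}(x,y)$, not just symmetry in $x$, is essential. Consequently your ``main obstacle'' --- showing the oriented sum dominates $L_i f(x)$ uniformly over which edges point outward --- is not what is proved, and attempting to prove such a pointwise domination would fail (an adversary can orient all edges in direction $i$ into $x$, making the outgoing contribution there zero).
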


\subsection{Fourier Analysis over Hypergrids}
Given $M=(m_1,\ldots,m_n)$, we will use Fourier analysis over $\mathbb{Z}_{M}$ (see \cite{CSPFourier,odonnell2021analysis}). To this end,
we represent any function $f: \mathbb{Z}_M \to \mathbb{C}$ using this Fourier basis with the following expression:
$$f(x) = \sum_{u \in \mathbb{Z}_M} \hat{f}(u) \cdot \prod_{i\in [n]}\omega_i^{u_i x_i},$$
where the Fourier coefficients $\hat{f}(u)$ are given by
$$
\hat{f}(u)=\left(\prod_{i = 1}^n\frac{1}{m_i}\right)\sum_{x\in\mathbb{Z}_M}
f(x)\cdot
{\prod_{i\in [n]}\omega_i^{-u_ix_i}} .
$$
We will use the following fact about the $i$th coordinate Laplacian operator: \begin{equation}
\label{hehe5}L_if(x) = \sum_{u \in \mathbb{Z}_M: u_i \neq 0} \hat{f}(u)\cdot \prod_{i\in [n]}\omega_i^{u_i  x_i}.
\end{equation}

Given $\rho\in [0,1]$ and $x \in \mathbb{Z}_M$, we write $N_{\rho}(x)$ to denote the following distribution supported on $\mathbb{Z}_M$: To sample $y \sim N_{\rho}(x)$, for each $i \in [n]$ we set $y_i = x_i$ with probability $\rho$, and set $y_i$ to be a uniform random number from $\mathbb{Z}_{m_i}$ with probability $1 - \rho$.

\begin{definition}[Noise Operator] Given $f:\mathbb{Z}_M\rightarrow \mathbb{C}$ and  $\rho \in [0, 1]$, the noise operator   $T_\rho$ is defined as
$$T_\rho f(x) = \mathbb{E}_{y \sim N_\rho (x)} \big[f(y)\big].$$
\end{definition} 

Given $u \in \mathbb{Z}_M$, we write 
$$\text{supp}(u) = \{i\in [n] : u_i \neq 0 \}\quad\text{and}\quad\# u = | \text{supp}(u)|.$$
The following proposition relates the noise operator to its Fourier expansion. The proposition can be found in
  \cite{odonnell2021analysis} for the case of $m_1 = \dots = m_n$.

\begin{proposition}\label{hehe1}
Let $\rho \in [0, 1]$ and let $f: \mathbb{Z}_M \to \mathbb{C}$. Then, the Fourier expansion of $T_\rho f$ is given by:
$$T_\rho f(x) = \sum_{u \in \mathbb{Z}_M} \rho^{(\# u)} \hat{f}(u)\cdot  \prod_{i\in [n]}\omega_i^{u_i x_i}.$$
\end{proposition}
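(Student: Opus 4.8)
The plan is to verify the claimed Fourier expansion of $T_\rho f$ directly from the definition of the noise operator, by pushing the expectation over $y \sim N_\rho(x)$ through the Fourier expansion of $f$. First I would write $f(y) = \sum_{u \in \mathbb{Z}_M} \hat f(u) \prod_{i \in [n]} \omega_i^{u_i y_i}$, so that by linearity of expectation,
\[
T_\rho f(x) = \mathbb{E}_{y \sim N_\rho(x)}\big[f(y)\big] = \sum_{u \in \mathbb{Z}_M} \hat f(u) \cdot \mathbb{E}_{y \sim N_\rho(x)}\Big[\prod_{i \in [n]} \omega_i^{u_i y_i}\Big].
\]
Since the coordinates $y_i$ are mutually independent under $N_\rho(x)$, the expectation factors as a product over $i \in [n]$ of $\mathbb{E}_{y_i}[\omega_i^{u_i y_i}]$.

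The key step is then the per-coordinate computation. Fix $i$. With probability $\rho$ we have $y_i = x_i$, contributing $\omega_i^{u_i x_i}$; with probability $1-\rho$, $y_i$ is uniform over $\mathbb{Z}_{m_i}$, contributing $\mathbb{E}_{a \sim \mathbb{Z}_{m_i}}[\omega_i^{u_i a}]$. The standard character-sum identity gives $\mathbb{E}_{a \sim \mathbb{Z}_{m_i}}[\omega_i^{u_i a}] = 1$ if $u_i = 0$ and $0$ if $u_i \neq 0$ (a geometric series summing to zero since $\omega_i^{u_i}$ is a nontrivial $m_i$-th root of unity when $u_i \not\equiv 0 \bmod m_i$). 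Hence
\[
\mathbb{E}_{y_i}\big[\omega_i^{u_i y_i}\big] = \begin{cases} \rho\, \omega_i^{u_i x_i} + (1-\rho) & \text{if } u_i \neq 0 \\ 1 & \text{if } u_i = 0 \end{cases}.
\]
I would observe that when $u_i \neq 0$ this does \emph{not} simplify to $\rho \cdot \omega_i^{u_i x_i}$ on its own; rather, one must re-expand. The cleanest route is to note that $\rho\, \omega_i^{u_i x_i} + (1-\rho)$, when $u_i = 0$, equals $1 = \omega_i^{0 \cdot x_i}$, so in all cases $\mathbb{E}_{y_i}[\omega_i^{u_i y_i}] = \rho^{[u_i \neq 0]} \omega_i^{u_i x_i} + (1 - \rho^{[u_i \neq 0]}) \cdot \mathbf{1}[u_i = 0]\cdot(\text{something})$ — this is getting awkward. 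Instead I would argue as follows: expand each factor $\mathbb{E}_{y_i}[\omega_i^{u_i y_i}]$ for $u_i \neq 0$ as $(1-\rho) \cdot 1 + \rho \cdot \omega_i^{u_i x_i}$, multiply out the product over $i \in [n]$, and re-index the resulting sum. Multiplying out $\prod_i \mathbb{E}_{y_i}[\omega_i^{u_i y_i}]$ for a fixed $u$: for coordinates with $u_i = 0$ the factor is $1 = \omega_i^{u_i x_i}$; for coordinates with $u_i \neq 0$ we get a sum of two terms. This expresses $\prod_i \mathbb{E}_{y_i}[\omega_i^{u_i y_i}]$ as a sum over subsets $S \subseteq \mathrm{supp}(u)$ of $\rho^{|S|}(1-\rho)^{\#u - |S|} \prod_{i \in S} \omega_i^{u_i x_i}$; call $u'$ the vector agreeing with $u$ on $S$ and zero elsewhere. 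Summing $\hat f(u) \prod_i \mathbb{E}_{y_i}[\omega_i^{u_i y_i}]$ over all $u$ and collecting terms by the resulting "surviving" frequency $v$, one sees that the coefficient of $\prod_i \omega_i^{v_i x_i}$ is $\rho^{\#v} \sum_{u : u \supseteq v} \hat f(u) (1-\rho)^{\#u - \#v} \cdots$ — at which point I realize the slicker and entirely standard argument is simply: the factor for coordinate $i$ is $\rho \omega_i^{u_i x_i}$ precisely when we \emph{restrict to the event} $y_i = x_i$, and $0$ otherwise (when $u_i \neq 0$), because $(1-\rho)\cdot 0$. Wait — it is \emph{not} zero; it is $(1-\rho)\cdot 1 + \rho\,\omega_i^{u_i x_i}$. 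So I will instead directly note: $(1-\rho) + \rho\,\omega_i^{u_i x_i}$ does not factor, so the honest computation is the subset expansion above, and after re-indexing the sum by the effective frequency the $(1-\rho)$-weighted terms with $u_i \neq 0$ but "not selected" collapse onto lower frequencies, and a telescoping/reorganization shows the net coefficient of $\prod_i \omega_i^{v_i x_i}$ is exactly $\rho^{\#v}\hat f(v)$. I would present this reorganization carefully but briefly, since it is the crux.

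The main obstacle is precisely this reorganization of the double sum: one must show that after expanding each $u_i \neq 0$ factor as $(1-\rho) + \rho\,\omega_i^{u_i x_i}$ and summing over all $u$, regrouping by the "output" frequency $v$ yields coefficient $\rho^{\#v}\hat f(v)$ with all cross-terms cancelling appropriately. The cleanest way to dispatch this is to observe that $N_\rho(x)$ is, coordinatewise, a mixture: $y_i = x_i$ w.p.\ $\rho$, uniform w.p.\ $1-\rho$; and the character $\prod_i \omega_i^{u_i (\cdot)_i}$ applied to a uniform coordinate has expectation $\mathbf{1}[u_i = 0]$. Therefore $\mathbb{E}[\prod_i \omega_i^{u_i y_i}] = \prod_{i} \big(\rho\,\omega_i^{u_i x_i} + (1-\rho)\mathbf{1}[u_i=0]\big)$, and now the factor for $u_i = 0$ is $\rho + (1-\rho) = 1$ while the factor for $u_i \neq 0$ is $\rho\,\omega_i^{u_i x_i}$. \emph{This} genuinely factors as $\rho^{\#u}\prod_i \omega_i^{u_i x_i}$, and substituting back gives $T_\rho f(x) = \sum_u \rho^{\#u}\hat f(u)\prod_i \omega_i^{u_i x_i}$, which is the claim. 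So the real content is just recognizing that $\mathbb{E}_{a \sim \mathbb{Z}_{m_i}}[\omega_i^{u_i a}] = \mathbf{1}[u_i = 0]$, not $1$; once that is in hand the proof is a two-line computation invoking independence of coordinates and linearity of expectation.

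\begin{proof}
Fix $x \in \mathbb{Z}_M$. By the Fourier expansion of $f$ and linearity of expectation,
\[
T_\rho f(x) = \mathbb{E}_{y \sim N_\rho(x)}\big[f(y)\big] = \sum_{u \in \mathbb{Z}_M} \hat f(u) \cdot \mathbb{E}_{y \sim N_\rho(x)}\Big[\prod_{i \in [n]} \omega_i^{u_i y_i}\Big].
\]
Under $N_\rho(x)$ the coordinates $y_1, \dots, y_n$ are mutually independent, so
\[
\mathbb{E}_{y \sim N_\rho(x)}\Big[\prod_{i \in [n]} \omega_i^{u_i y_i}\Big] = \prod_{i \in [n]} \mathbb{E}_{y_i}\big[\omega_i^{u_i y_i}\big].
\]
For each $i$, with probability $\rho$ we have $y_i = x_i$ and with probability $1 - \rho$ the value $y_i$ is uniform over $\mathbb{Z}_{m_i}$. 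Since $\omega_i = e^{2\pi i/m_i}$ is a primitive $m_i$-th root of unity, $\mathbb{E}_{a \sim \mathbb{Z}_{m_i}}[\omega_i^{u_i a}] = 1$ if $u_i = 0$ and equals $0$ if $u_i \neq 0$ (the sum of a nontrivial geometric progression of $m_i$-th roots of unity). Hence
\[
\mathbb{E}_{y_i}\big[\omega_i^{u_i y_i}\big] = \rho\,\omega_i^{u_i x_i} + (1-\rho)\cdot \mathbf{1}[u_i = 0] = \begin{cases} 1 & \text{if } u_i = 0,\\[2pt] \rho\,\omega_i^{u_i x_i} & \text{if } u_i \neq 0. \end{cases}
\]
Therefore $\prod_{i \in [n]} \mathbb{E}_{y_i}[\omega_i^{u_i y_i}] = \rho^{\#u} \prod_{i \in [n]} \omega_i^{u_i x_i}$, where $\#u = |\{i : u_i \neq 0\}|$. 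Substituting back gives
\[
T_\rho f(x) = \sum_{u \in \mathbb{Z}_M} \rho^{\#u}\, \hat f(u) \cdot \prod_{i \in [n]} \omega_i^{u_i x_i},
\]
as claimed.
\end{proof}
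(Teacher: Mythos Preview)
Your final proof is correct and follows essentially the same approach as the paper: expand $f(y)$ in the Fourier basis, push the expectation through by linearity, factor over coordinates by independence, and use the character-sum identity $\mathbb{E}_{a \sim \mathbb{Z}_{m_i}}[\omega_i^{u_i a}] = \mathbf{1}[u_i = 0]$ to reduce each factor to $\rho\,\omega_i^{u_i x_i}$ or $1$. The long detour in your write-up before the proof (the subset expansion and telescoping) is unnecessary and should be deleted---you yourself realized the right computation at the end, and that is exactly what the paper does.
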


\begin{proof}
By the definition of $T_\rho$ we have
 $$T_\rho f(x) = \mathbb{E}_{y \sim N_\rho(x)}\big[f(y)\big] = \sum_{u \in \mathbb{Z}_M} \hat{f}(u) \cdot \mathbb{E}_{y \sim N_\rho(x)}\left[\prod_{i\in [n]}\omega_i^{u_iy_i}\right].$$
 Next, we have $$\mathbb{E}_{y \sim N_\rho(x)}\left[\prod_{i\in [n]}\omega_i^{u_iy_i}\right] = \prod_{i \in[n]} \left(\rho \cdot \omega_i^{u_i x_i} + \frac{1 - \rho}{m_i} \sum_{z \in \mathbb{Z}_{m_i}} \omega_i^{u_i z}\right) = \rho^{(\# u)}\prod_{i\in [n]}\omega_i^{u_ix_i}.$$
 This is because if $u_i \neq 0$, then the sum of  $\omega_i^{u_i z}$ is $0$ and the $i$-th coordinate contributes $\rho \cdot \omega_i^{u_i x_i}$ to the product. If $u_i = 0$,  the $i$-th coordinate contributes $\rho + 1 - \rho = 1 = \omega_i^{u_i x_i}$ to the product.
\end{proof}

Now, for any $x, y \in \mathbb{Z}_M$ and $t \in [0, 1]$, consider the distribution $N_{t, 1-t}(x, y)$, supported on $\mathbb{Z}_M$, to be the distribution given by letting $z \sim N_{t, 1-t}(x, y)$ have each $i \in [n]$ set to $z_i = x_i$ with probability $t$ and $z_i = y_i$ otherwise. Given a function $g: \mathbb{Z}_M \to \mathbb{C}$ and $t \in [0, 1]$, we define  
\begin{equation}\label{hehe2}
g_{t, 1-t}(x, y) = \mathbb{E}_{z \sim N_{t, 1-t}(x, y)}\big[g (z)\big] = \sum_{u \in \mathbb{Z}_M} \hat{g}(u) \prod_{i\in [n]} \Big(t \omega_i^{u_i x_i} + (1-t) \omega_i^{u_i y_i}\Big),
\end{equation}
where the second equation follows from arguments similar to the proof of Proposition \ref{hehe1}. 

Lastly, for any $\gamma > 0$, we let $\Delta^\gamma f$ be the   operator given by:
\begin{equation}\label{hehe4}
\Delta^\gamma f(x) = \sum_{u \in \mathbb{Z}_M} \hat{f}(u) (\# u)^\gamma \prod_{i\in [n]}\omega_i^{u_i x_i}.
\end{equation}

\subsection{Proof of the Robust Pisier's Inequality over Hypergrids}
Our proof follows the proof of \cite{NS02} (Theorem 2). For the robustness part, it adapts the proof strategy of \cite{canonne2021random}.
Let $M = \{m_1, \dots, m_n\}$.
We start with the following lemma:

\begin{lemma} \label{pisier_subpart} Let $f,g: \mathbb{Z}_M \to \mathbb{C}$ be two functions with $\mathbb{E}_{x \sim \mathbb{Z}_M}[f(x)] = 0$. Then we have
\begin{align*}
\sum_{u\ne 0 \in \mathbb{Z}_M} t^{(\# u) - 1} (\# u)^{\gamma + 1}\cdot  \hat{f}(u) \overline{\hat{g}(u)} 
=\frac{1}{1 - t}\cdot \mathbb{E}_{x, y \sim \mathbb{Z}_M} \left[ \overline{g_{t, 1-t} (x, y)}\cdot  \sum_{i\in [n]} \sum_{a \in \mathbb{Z}_{m_i}^*} \omega_i^{-a y_i} \omega_i^{a x_i} L_i \Delta^\gamma f(x) \right].
\end{align*}
\end{lemma}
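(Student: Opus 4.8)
The plan is to expand both sides using the Fourier representation and match coefficients. First I would write $g_{t,1-t}(x,y)$ via its Fourier expansion \eqref{hehe2}, so that $\overline{g_{t,1-t}(x,y)} = \sum_{v} \overline{\hat g(v)} \prod_j \big(t\,\omega_j^{-v_j x_j} + (1-t)\,\omega_j^{-v_j y_j}\big)$. On the other side, I would expand $L_i\Delta^\gamma f(x)$ using \eqref{hehe4} together with \eqref{hehe5}: namely $\Delta^\gamma f$ has Fourier coefficient $(\#u)^\gamma\hat f(u)$, and applying $L_i$ kills the terms with $u_i=0$, so $L_i\Delta^\gamma f(x) = \sum_{u:\,u_i\ne 0} (\#u)^\gamma \hat f(u)\prod_k \omega_k^{u_k x_k}$. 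Then the inner object on the RHS becomes a triple sum over $i$, $a\in\mathbb{Z}_{m_i}^*$, and $u$ with $u_i\ne 0$, of $(\#u)^\gamma\hat f(u)\,\omega_i^{a(x_i - y_i)}\prod_k\omega_k^{u_k x_k}$.

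The main computation is then to take $\mathbb{E}_{x,y\sim\mathbb{Z}_M}$ of the product of these two expansions. Orthogonality of characters over $\mathbb{Z}_M$ forces, coordinate by coordinate, a matching between the frequency $v$ coming from $g$ and the frequency carried by the $f$-term (which is $u$ shifted in coordinate $i$ by $a$). The key bookkeeping step is: in coordinate $i$, the $f$-term contributes $\omega_i^{(u_i + a)x_i}\omega_i^{-a y_i}$, so after averaging over $x_i$ we need $v_i = u_i + a$ (picking out the $t$-branch of the $g$-product in coordinate $i$, since the $(1-t)$-branch depends on $y_i$ in a way that cannot simultaneously match), and after averaging over $y_i$ we need $-v_i + u_i + a = 0$ in the $y$-exponent — consistent. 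Wait, I should be careful: the $g$-product in coordinate $i$ offers either $t\,\omega_i^{-v_i x_i}$ or $(1-t)\omega_i^{-v_i y_i}$; matching the $x_i$-dependence $\omega_i^{(u_i+a)x_i}$ requires the $t$-branch with $v_i = u_i + a$, and then there is no leftover $y_i$ dependence, consistent with $a\ne 0$ only contributing through $x_i$. Hmm — this needs rechecking against the stated identity, which has $g_{t,1-t}$ conjugated and no residual constraint, so I would instead track it as: in each coordinate $k\ne i$ the $g$-product offers $t\,\omega_k^{-v_k x_k}+(1-t)\omega_k^{-v_k y_k}$ and the $f$-term offers $\omega_k^{u_k x_k}$ with no $y_k$; averaging over $y_k$ forces the $(1-t)$-branch to have $v_k=0$, while the $t$-branch needs $v_k = u_k$. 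Collecting which coordinates of $v$ are ``on'' (in $\mathrm{supp}(u)\cup\{i\}$), one sees that each such coordinate contributes a factor $t$ and each ``off'' coordinate contributes $t + (1-t) = 1$, except coordinate $i$ which, because $a$ ranges over $\mathbb{Z}_{m_i}^*$ freely, contributes the full factor, yielding $t^{(\#u)+1}$ when $i\notin\mathrm{supp}(u)$ and $t^{\#u}$ when $i\in\mathrm{supp}(u)$; summing $a$ over $\mathbb{Z}_{m_i}^*$ and $i$ over the coordinates where $u_i + a$ can be made to equal $0$ or not contributes the combinatorial factor $\#u$. Altogether the $i$-sum produces a factor proportional to $\#u$, and combining the $t$-powers with the remaining $\frac{1}{1-t}$ prefactor and the surviving $(1-t)/(1\cdot)$ normalizations gives exactly $t^{(\#u)-1}(\#u)^{\gamma+1}$.

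Concretely, the steps in order: (1) substitute the Fourier expansions of $\overline{g_{t,1-t}}$ and of $L_i\Delta^\gamma f$; (2) interchange the finite sums and the expectation; (3) evaluate $\mathbb{E}_{x_k,y_k}$ character-by-character using $\mathbb{E}_{z\sim\mathbb{Z}_{m_k}}[\omega_k^{cz}] = \mathbbm{1}[c\equiv 0]$, which collapses $v$ to be determined by $u$, $i$, $a$ and forces the branch choices; (4) carry out the resulting sum over $a\in\mathbb{Z}_{m_i}^*$ and over $i\in[n]$, observing that the number of valid $(i,a)$ pairs attached to a given $u$ is governed by $\#u$ (this is where the extra factor $(\#u)^1$ on the LHS, on top of the $(\#u)^\gamma$ from $\Delta^\gamma$, comes from); (5) assemble the powers of $t$ and $(1-t)$, using $\mathbb{E}_x[f(x)]=\hat f(0)=0$ to drop the $u=0$ term, and divide by $1-t$ to land on the claimed identity. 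The hard part will be step (4): keeping the orientation-free edge combinatorics straight — precisely, verifying that summing $\omega_i^{-ay_i}\omega_i^{ax_i}$-type contributions over all $i$ and all $a\ne 0$ reproduces the clean coefficient $(\#u)^{\gamma+1} t^{(\#u)-1}$ rather than some messier expression — and making sure the $t$ versus $1-t$ branch selection is done consistently across the $g$-product's $n$ factors. Everything else is routine orthogonality.
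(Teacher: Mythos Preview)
Your overall strategy---expand both factors in Fourier, take $\mathbb{E}_{x,y}$, and use orthogonality coordinate by coordinate---is exactly the paper's approach. But the branch analysis in coordinate $i$ is backwards, and this is the heart of the computation.

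In coordinate $i$ the integrand carries the extra factor $\omega_i^{-a y_i}$ with $a\in\mathbb{Z}_{m_i}^*$. If you pick the $t$-branch $t\,\omega_i^{-v_i x_i}$ from the $g$-product, nothing depends on $y_i$ except $\omega_i^{-a y_i}$, and since $a\ne 0$ the $y_i$-average kills it. So the $t$-branch in coordinate $i$ contributes \emph{zero}, not the surviving term. It is the $(1-t)$-branch $(1-t)\,\omega_i^{-v_i y_i}$ that survives: averaging over $y_i$ forces $v_i=-a$, and averaging over $x_i$ (which now carries $\omega_i^{(u_i+a)x_i}$) forces $u_i=-a$, hence $v_i=u_i$ and $a=-u_i$. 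In particular $a$ is \emph{not} free over $\mathbb{Z}_{m_i}^*$: for each pair $(i,u)$ with $u_i\ne 0$ exactly one $a$ survives. For $j\ne i$ your analysis is essentially right and forces $v_j=u_j$, contributing $t$ if $u_j\ne 0$ and $1$ if $u_j=0$. Altogether $v=u$ is pinned, coordinate $i$ contributes $(1-t)$, the remaining $\#u-1$ coordinates of $\mathrm{supp}(u)$ contribute $t$ each, and the sum over admissible $i$ (those with $u_i\ne 0$) contributes the factor $\#u$. This gives $(1-t)\,t^{(\#u)-1}\,(\#u)^{\gamma+1}\hat f(u)\overline{\hat g(u)}$, and the $(1-t)$ cancels the $1/(1-t)$ prefactor.

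Your version---``$t^{(\#u)+1}$ when $i\notin\mathrm{supp}(u)$ and $t^{\#u}$ when $i\in\mathrm{supp}(u)$, with $a$ ranging freely''---would not match the LHS and has no mechanism to produce the factor $(1-t)$ needed to cancel the prefactor. Once you flip the branch choice in coordinate $i$ and recognize that $a$ is pinned to $-u_i$, steps (4)--(5) go through cleanly.
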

\begin{proof} 
We work on the sum 
\begin{equation}\label{hehe3}
\overline{g_{t, 1-t}(x, y)} \cdot \sum_{i \in [n]} \sum_{a \in \mathbb{Z}_{m_i}^*} \omega_i^{-a y_i} \omega_i^{a x_i} L_i \Delta^\gamma f(x).
\end{equation}
Replacing $g_{t,1-t}(x,y)$ using the
  RHS of (\ref{hehe2}) and 
  $L_i\Delta^\gamma f(x)$ using (\ref{hehe5}) and (\ref{hehe4}),
  (\ref{hehe3}) becomes 
\begin{equation}\label{fourierexpand} 
\left[\sum_{v \in \mathbb{Z}_M} \overline{\hat{g}(v)} \prod_{j\in [n]}  \big(t \omega_j^{-v_j x_j } + (1-t) \omega_j^{-v_j y_j }\big) \right] 
 \left[\sum_{i\in [n]} \sum_{a \in \mathbb{Z}_{m_i}^*}  \omega_i^{-a y_i} \omega_i^{a x_i} \sum_{u \in \mathbb{Z}_M: u_i \neq 0} (\#u)^\gamma \hat{f}(u) \prod_{k\in [n]}\omega_k^{u_k x_k} \right].
\end{equation}
Next, upon expanding, (\ref{fourierexpand}) becomes:  
\begin{equation}\label{fourierexpand3}
\sum_{i\in [n]}
\sum_{u \in \mathbb{Z}_M: u_i \neq 0} \sum_{v \in \mathbb{Z}_M} \hat{f}(u) \overline{\hat{g}(v)} (\# u)^\gamma  \sum_{a \in \mathbb{Z}_{m_i}^*} \omega_i^{-a y_i} \omega_i^{a x_i} \left(\prod_{j \in [n]} \omega_j^{u_jx_j} \left(t \omega_j^{-v_j x_j } + (1-t) \omega_j^{-v_j y_j }\right)\right).
\end{equation}
Let us take the expectation of this expression over $x,y\sim \mathbb{Z}_M$. By linearity of expectation, we get  
\begin{align}\nonumber
&\sum_{i\in [n]}\sum_{u \in \mathbb{Z}_M: u_i \neq 0} \sum_{v \in \mathbb{Z}_M} \hat{f}(u) \overline{\hat{g}(v)} (\# u)^\gamma   \sum_{a \in \mathbb{Z}_{m_i}^*} A_{i,u,v,a}\\[0.5ex]\label{complicatedsum}
&\hspace{1cm}=\sum_{u\in \mathbb{Z}_M} \sum_{i\in [n]: u_i\ne 0}\sum_{v\in \mathbb{Z}_M}
\hat{f}(u) \overline{\hat{g}(v)} (\# u)^\gamma \sum_{a\in \mathbb{Z}_{m_i}^*} A_{i,u,v,a},
\end{align}
where 
$$
A_{i,u,v,a}:=\mathbb{E}_{x, y \sim \mathbb{Z}_M} \left[\omega_i^{-a y_i} \omega_i^{a x_i}  \left(\prod_{j\in [n]}  \omega_{j}^{u_jx_j} \left(t \omega_j^{-v_j x_j } + (1-t) \omega_j^{-v_j y_j }\right)\right) \right].
$$
Furthermore, $A_{i,u,v,a}$ can be written as a product of $n$ expectations. The $i$th expectation is given by 
$$
\mathbb{E}_{x_i,y_i\sim \mathbb{Z}_{m_i}}
\Big[\omega_i^{-a y_i} \omega_i^{a x_i} \omega_i^{u_i x_i} \left(t \omega_i^{-v_i x_i } + (1-t) \omega_i^{-v_i y_i }\right)\Big].
$$
The $i$th expectation can be written as the expectation of a sum of two terms. Given that $a\ne 0$, the expectation of the first term is always $0$.
The expectation of the second term is 
  $(1-t)$ when $v_i=u_i=-a$, and is $0$ otherwise.
Similarly, the $j$th expectation, for each $j\ne i$, is given by
$$
\mathbb{E}_{x_j,y_j\sim \mathbb{Z}_{m_j}}
\left[  \omega_j^{u_j x_j} \left(t \omega_j^{-v_j x_j } + (1-t) \omega_j^{-v_j y_j }\right)\right],
$$
which is $0$ when $u_j\ne v_j$.
When $u_j=v_j$, the expectation is $t$ if $u_j=v_j\ne 0$ and is $1$ if $u_j=v_j=0$.
\normalsize

Given this analysis, 
  we have that for any given $i$ and $u$ such that $u_i\ne 0$, there is a unique choice for $a$ (i.e., $a=-u_i\in \mathbb{Z}_{m_i}^*$) and $v$ (i.e., $v=u$) such that 
  $A_{i,u,v,a}$ is nonzero and is equal to
  $$(1-t)\cdot t^{\text{number of $j\ne i$ such that $u_j\ne 0$}}=(1-t)\cdot t^{(\#u)-1}.$$
As a result, (\ref{complicatedsum})
  can be simplified to 
$$
\sum_{u \in \mathbb{Z}_M}
\sum_{i\in [n]:u_i\ne 0}
\hat{f}(u) \overline{\hat{g}(u)} (\# u)^\gamma \cdot ({1-t})\cdot t^{(\#u)-1}
=(1-t)\cdot \sum_{u\ne 0\in \mathbb{Z}_M}
\hat{f}(u) \overline{\hat{g}(u)} (\# u)^{\gamma+1} t^{(\#u)-1},
$$
from which the lemma follows.
\end{proof}

Recall that given two functions $f,g: \mathbb{Z}_M\rightarrow \mathbb{C}$, we write
$$
\|f\|_s := \Big(\mathbb{E}_{x\sim \mathbb{Z}_M} \big[ |f(x)|^s\big]\Big)^{1/s}\quad\text{and}\quad \langle f,g\rangle := \mathbb{E}_{x\sim \mathbb{Z}_M}\big[ f(x)\overline{g(x)}\big].
$$
We are now ready to prove Theorem \ref{extended_robust_pisier}:

\begin{proof}[Proof of Theorem \ref{extended_robust_pisier}]
Let $\rho = 1 - 1/(n+1)$ and $q \in [1, \infty)$ such that $\frac{1}{s} + \frac{1}{q} = 1$. Given $f$, let  $g: \mathbb{Z}_M \to \mathbb{C}$ be a function with $\|g\|_q = 1$ satisfying $\langle T_\rho f, g \rangle = \|T_\rho f\|_s$. We have
$$(2\rho-1)^n \cdot  \|f\|_s \leq \|T_\rho f\|_s = \langle T_\rho f, g \rangle = \sum_{u\ne 0\in \mathbb{Z}_M} \rho^{(\# u)} \hat{f}(u) \overline{\hat{g}(u)},$$
where the last equation used the assumption that $\hat{f}(0)=0.$
Let $\gamma>0$ be a parameter, which will approach $0$ at the end of the proof.
By writing $$\rho^{(\# u)} = \frac{1}{\Gamma(1 + \gamma)} \int_0^\rho t^{(\# u) - 1}(\# u)^{\gamma + 1} \big(\log(\rho / t)\big)^\gamma dt,$$ for every $u\ne 0$ (in which case $\#u>0$), we have
\begin{equation} \label{equation23}
(2\rho-1)^n\cdot \|f\|_s\le 
 \frac{1}{\Gamma(1 + \gamma)} \int_0^\rho \left(\sum_{u\ne 0 \in \mathbb{Z}_M} t^{(\# u) - 1} (\# u)^{\gamma + 1} \hat{f}(u) \overline{\hat{g}(u)} \right) \big(\log (\rho / t)\big)^\gamma dt.
\end{equation}
By Lemma \ref{pisier_subpart}, the RHS of (\ref{equation23}) is
\begin{equation} \label{equation24}
 \frac{1}{\Gamma(1 + \gamma)}\int_0^\rho \frac{1}{1 - t} \cdot \mathbb{E}_{x, y \sim \mathbb{Z}_M} \left[ \overline{g_{t, 1-t} (x, y)} \sum_{i \in [n]} \sum_{a \in \mathbb{Z}_{m_i}^*} \omega_i^{-a y_i} \omega_i^{a x_i} L_i \Delta^\gamma f(x) \right] \cdot \big(\log (\rho / t)\big)^\gamma dt.
\end{equation}
Plugging in  $L_i f(x) = \sum_{b \in \mathbb{Z}_{m_i}} L_i^b f(x)$, this equals 
$$ \frac{1}{\Gamma(1 + \gamma)}\int_0^\rho \frac{1}{1 - t}\cdot \mathbb{E}_{x, y \sim \mathbb{Z}_M} \left[ \overline{g_{t, 1-t} (x, y)} \sum_{i\in [n]} \sum_{a \in \mathbb{Z}_{m_i}^*} \sum_{b \in \mathbb{Z}_{m_i}} \omega_i^{-a y_i} \omega_i^{a x_i} L_i^b \Delta^\gamma f(x) \right]\cdot \big(\log (\rho / t)\big)^\gamma dt.$$
Since this expression equals $||T_{\rho} f||_s$, which is real-valued, we can say it is less than its absolute value:
$$\leq \abs{ \frac{1}{\Gamma(1 + \gamma)}\int_0^\rho \frac{1}{1 - t}\cdot \mathbb{E}_{x, y \sim \mathbb{Z}_M} \left[ \overline{g_{t, 1-t} (x, y)} \sum_{i\in [n]} \sum_{a \in \mathbb{Z}_{m_i}^*} \sum_{b \in \mathbb{Z}_{m_i}} \omega_i^{-a y_i} \omega_i^{a x_i} L_i^b \Delta^\gamma f(x) \right]\cdot \big(\log (\rho / t)\big)^\gamma dt}$$
$$\leq  \frac{1}{\Gamma(1 + \gamma)}\int_0^\rho \frac{1}{1 - t}\cdot \abs{\mathbb{E}_{x, y \sim \mathbb{Z}_M} \left[ \overline{g_{t, 1-t} (x, y)} \sum_{i\in [n]} \sum_{a \in \mathbb{Z}_{m_i}^*} \sum_{b \in \mathbb{Z}_{m_i}} \omega_i^{-a y_i} \omega_i^{a x_i} L_i^b \Delta^\gamma f(x) \right]}\cdot \big(\log (\rho / t)\big)^\gamma dt.$$

Our next step is to obtain the following: 
\begin{align} 
\nonumber
&\abs{\mathbb{E}_{x, y \sim \mathbb{Z}_M} \left[ \overline{g_{t, 1-t} (x, y)} \sum_{i\in [n]} \sum_{a \in \mathbb{Z}_{m_i}^*} \sum_{b \in \mathbb{Z}_{m_i}} \omega_i^{-a y_i} \omega_i^{a x_i} L_i^b \Delta^\gamma f(x) \right]} \\ \label{equation25}
&\hspace{0.5cm}= \abs{\mathbb{E}_{x, y \sim \mathbb{Z}_M} \left[\overline{g_{t, 1-t} (x, y)} \sum_{i\in [n]}  \sum_{a \in \mathbb{Z}_{m_i}^*} \sum_{\substack{d \in \mathbb{Z}_{m_i}^* \\ (x, x^{(i) \to x_i + d}) \in G}}  \omega_i^{-a y_i} \omega_i^{a x_i} L_i^{x_i + d} \Delta^\gamma f(x) \cdot \left(1 - \omega_i^{ad}\right) \right]}
\\ \label{equation26}
   &\hspace{0.5cm}\le   \left(\mathbb{E}_{x, y \sim \mathbb{Z}_M} \left[  \abs{\sum_{i\in [n]}  \sum_{a \in \mathbb{Z}_{m_i}^*} \sum_{\substack{d \in \mathbb{Z}_{m_i}^* \\ (x, x^{(i) \to x_i + d}) \in G}} 
   (1-\omega_i^{ad}) \omega_i^{-a y_i} \omega_i^{a x_i} L_i^{x_i + d} \Delta^\gamma f(x)  }^s \right]\right)^{1/s}.
\end{align}

To prove (\ref{equation25}) we note that $L_i^{a}\Delta^\gamma f(x^{(i) \to a + d}) = - L_i^{a + d} \Delta^\gamma f(x^{(i) \to a})$. This is because $$L_i^{a}\Delta^\gamma f(x^{(i) \to a + d}) = 
\frac{\Delta^\gamma f(x^{(i)\rightarrow a+d})-\Delta^\gamma f(x^{(i)\rightarrow a})}{m_i}
=-L_i^{a+d}\Delta^\gamma f(x^{(i)\rightarrow a}).
$$
In the summation, we group terms corresponding to edges of $\mathcal{H}$ according to the orientation $G$:
\begin{align}\nonumber
&\mathbb{E}_{x, y \sim \mathbb{Z}_M} \left[ \overline{g_{t, 1-t} (x, y)} \sum_{i\in [n]} \sum_{a \in \mathbb{Z}_{m_i}^*} \sum_{b \in \mathbb{Z}_{m_i}} \omega_i^{-a y_i} \omega_i^{a x_i} L_i^b \Delta^\gamma f(x) \right] 
\\ \nonumber &= \mathbb{E}_{x \sim \mathbb{Z}_M} \Bigg[ \sum_{i\in [n]} \sum_{a \in \mathbb{Z}_{m_i}^*} \sum_{\substack{d \in \mathbb{Z}_{m_i} ^* \\ (x, x^{(i) \to x_i + d}) \in G}}  \mathbb{E}_{y \sim \mathbb{Z}_M} \bigg[  \overline{g_{t, 1-t} (x, y)} \cdot \omega_i^{-a y_i} \omega_i^{a x_i} L_i^{x_i + d} \Delta^\gamma f(x)\\ \nonumber & \hspace{4cm} + \overline{g_{t, 1-t} (x^{(i) \to x_i + d}, y)} \cdot\omega_i^{-a y_i} \omega_i^{a (x_i + d)} L_i^{x_i} \Delta^\gamma f(x^{(i) \to x_i + d})  \bigg]\Bigg]   \\ \nonumber 
   & = 
   \mathbb{E}_{x \sim \mathbb{Z}_M} \Bigg[ \sum_{i \in [n]} \sum_{a \in \mathbb{Z}_{m_i}^*} \sum_{\substack{d \in \mathbb{Z}_{m_i}^* \\ (x, x^{(i) \to x_i + d}) \in G}}  \mathbb{E}_{y \sim \mathbb{Z}_M} \bigg[\overline{g_{t, 1-t} (x, y)}\cdot  \omega_i^{-a y_i} \omega_i^{a x_i} L_i^{x_i + d} \Delta^\gamma f(x) \\ \label{equation27} & \hspace{4cm}
    - \overline{g_{t, 1-t} (x^{(i) \to x_i + d}, y)}\cdot \omega_i^{-a y_i} \omega_i^{a (x_i + d)} L_i^{x_i + d} \Delta^\gamma f(x)  \bigg]\Bigg].
\end{align}

Next, for a fixed $x$, $i$, $a$ and $d$, we have the following:
\begin{align}\nonumber
   & \mathbb{E}_{y \sim \mathbb{Z}_M} \left[\overline{g_{t, 1-t} (x, y)}\cdot  \omega_i^{-a y_i} \omega_i^{a x_i} L_i^{x_i + d} \Delta^\gamma f(x) - \overline{g_{t, 1-t} (x^{(i) \to x_i + d}, y)}\cdot \omega_i^{-a y_i} \omega_i^{a (x_i + d)} L_i^{x_i + d} \Delta^\gamma f(x)  \right]\\[0.7ex]
    & \hspace{0.6cm}= (1-\omega_i^{ad})\cdot  \mathbb{E}_{y \sim \mathbb{Z}_M} \left[\overline{g_{t, 1-t} (x, y)}\cdot   \omega_i^{-a y_i}\omega_i^{ax_i} L_i^{x_i + d} \Delta^\gamma f(x)    \right] \label{hehe6} .
\end{align}
This equation follows because when expanding the terms in $$\overline{g_{t, 1-t}(x, y)} =  \mathbb{E}_{z \sim N_{t, 1-t}(x, y)}\Big[\overline{g(z)}\Big] \quad\text{and}\quad
\overline{g_{t,1-t}(x^{(i)\rightarrow x_i+d},y)}= {\mathbb{E}_{z' \sim N_{t, 1-t}(x^{(i) \to x_i + d}, y)}\Big[\overline{g(z')}\Big]},$$
the LHS of (\ref{hehe6}) becomes
$$
\mathbb{E}_{y\sim \mathbb{Z}_M, z, z'}\Big[\overline{g(z)}\cdot \omega_i^{-a y_i} \omega_i^{a x_i} L_i^{x_i + d} \Delta^\gamma f(x)-\overline{g(z')}\cdot 
\omega_i^{-a y_i} \omega_i^{a (x_i+d)} L_i^{x_i + d} \Delta^\gamma f(x)\Big],
$$
where $z$ and $z'$ are drawn using the natural coupling that $z\sim N_{t,1-t}(x,y)$ and $z'$ is set to be $z$ if $z_i=y_i$ and $z'$ is set to be $z^{(i)\rightarrow x_i+d}$ if $z_i=x_i$.
Consider the following two cases:

\begin{flushleft}\begin{enumerate}
\item Either $z_i=y_i$ in which
  case $z'=z$ and thus,
  the contribution of the second term is always the contribution of the first term scaled by $-\omega_i^{ad}$; 

\item Or, $z_i = x_i$. In this case, both terms are independent of $y_i$ and thus, have an overall contribution of zero, as $\mathbb{E}_{y_i \sim \mathbb{Z}_{m_i} }[\omega_i^{-a y_i} ] = 0$ given that $a \neq 0$.
\end{enumerate}\end{flushleft}
This finishes the proof of (\ref{equation25}).

To obtain  (\ref{equation26}) from  (\ref{equation25}), we note that $\|\overline{g_{t, 1-t}}\|_q \leq \|g\|_q= 1$ and apply Hölder's inequality.

We proceed by substituting (\ref{equation26}) into (\ref{equation23}). For $\rho = 1 - {1}/{(n+1)}$, we have when $\gamma$ approaches $0$,  
$$\frac{1}{\Gamma(1 + \gamma)} \int_0^\rho \frac{\log (\rho / t)^\gamma}{1 - t} dt \lesssim \log n.$$ So we obtain:
$$(2\rho-1)^n  \|f\|_s \lesssim \log n \cdot \left(\mathbb{E}_{x, y \sim \mathbb{Z}_M} \left[  \abs{\sum_{i\in [n]}  \sum_{a \in \mathbb{Z}_{m_i}^*} \sum_{\substack{d \in \mathbb{Z}_{m_i}^* \\ (x, x^{(i) \to x_i + d}) \in G}} 
(1-\omega_i^{ad})\omega_i^{-a y_i} \omega_i^{a x_i} L_i^{x_i + d} \Delta^\gamma f(x)  }^s \right]\right)^{1/s}$$
As $\gamma \to 0$, the RHS approaches the desired quantity, while the LHS is independent of $\gamma$.
\end{proof}

\section{Proof of Lemma \ref{Lemma3.1}}\label{prooflemma9}

We prove 
Lemma \ref{Lemma3.1}  in this section. We follow the high-level strategy used in Section 3 of \cite{canonne2021random} but need to overcome a number of obstacles that are unique to hypergrids.

Let $t \in [n-1]$ be the parameter from Lemma \ref{Lemma3.1}. For this section, let $T$ denote a subset of $[n]$ of size $t$, and let $S$ denote a subset of $[n]$ of size $ t+1 $. 

The steps of the proof of Lemma \ref{Lemma3.1} are as follows. First, we apply the robust Pisier's inequality over hypergrids to connect the total variation distance $d_{TV}(p_{\overline{T}}, \mathcal{U})$, for a given $t$-subset $T$ of $[n]$, to the average $\sqrt{\text{outdegree}}$ of a collection of directed graphs over $\mathbb{Z}_K$, where $K:= M_{\overline{T}}=(m_i:i\notin T)$ defines a hypergrid $\mathbb{Z}_K$ of dimension $k:=n-t$.
Next we connect these graphs 
  with the bias vector $\mu(p_{|\rho})$ of either
  $\rho\sim \mathcal{D}(t,p)$ or 
  $\rho\sim \mathcal{D}(t+1,p)$ to finish the proof.

\subsection{Connecting Total Variation Distance to Directed Graphs} \label{tv_to_directed_graphs}

Fix any $t$-subset $T$ of $[n]$ and 
  let $K=M_{\overline{T}}$ of length $k=n-t$.
Let $\ell$ be a probability distribution over $\mathbb{Z}_K$. (Later on in the proof of Lemma \ref{Lemma3.1}, we will let $\ell$ be $p_{\overline{T}}$. We refer to $p_{\overline{T}}$ as $\ell$ in this subsection for notational convenience.)
Recall that $m=\max_{i\in [n]}m_i$ and thus, $m\ge \max_{i\in \overline{T}} m_i$.

Let $\mathcal{H}$ denote the undirected graph over $\mathbb{Z}_K$ consisting of undirected edges $\{x,x^{(i) \to b}\}$, for each $x \in \mathbb{Z}_K$, $i \in \overline{T}$, and $b \in \mathbb{Z}_{m_i}$ with $b \neq x_i$. Next, we assign weights to edges of $\mathcal{H}$ as follows.

\def\calH{\mathcal{H}}

\begin{definition}
An undirected edge $\{x, x^{(i) \to b}\} \in \mathcal{H}$  is a \textit{zero} edge if $\ell(x) = \ell(x^{(i) \to b})$. For each nonzero edge $\{x, x^{(i) \to b}\} \in \mathcal{H}$, let its \textit{weight} be defined as:
$$w(\{x, x^{(i) \to b} \}) := \frac{|\ell(x) - \ell(x^{(i) \to b})|}{\max \{\ell(x), \ell(x^{(i) \to b})\}}.$$
The weight of a nonzero edge is always in $(0, 1]$. A nonzero edge is called \textit{uneven} if its weight is at least $ {m}/({m+1})$. 
Otherwise (any nonzero edge with weight smaller than $ {m}/({m+1})$), we say it is an \textit{even} edge. An even edge is at \textit{scale} $\kappa$ for some integer $\kappa \geq 1$ if:
$$m^{-\kappa} < w\left(\{ x, x^{(i) \to b}\}\right) \leq m^{-\kappa + 1}.$$
\end{definition}

We partition edges of $\mathcal{H}$ to define three undirected graphs $\mathcal{H}^{[z]},
\mathcal{H}^{[u]},\mathcal{H}^{[e]}$ according to their weights: 
\begin{enumerate}
    \item  $\mathcal{H}^{[u]}$ (where $u$ stands for ``uneven''): Add all uneven edges of $\mathcal{H}$ to $\mathcal{H}^{[u]}$.
    \item  $\mathcal{H}^{[z]}$ (where $z$ stands for ``zero''): Add all zero edges of $\calH$ to  $\mathcal{H}^{[z]}$; and
    \item   $\mathcal{H}^{[e]}$
    (where $e$ stands for ``even''): Add all even edges of $\calH$  to $\mathcal{H}^{[e]}$.
\end{enumerate}
Next we assign orientations to edges in 
  $\calH^{[u]}$ and $\calH^{[z]}$
  to obtain directed graphs 
  $G^{[u]}$ and $G^{[z]}$:
\begin{flushleft}\begin{enumerate}
\item $G^{[u]}$: For each uneven edge $\{x, y\} \in \mathcal{H}^{[u]}$, orient the edge from $x$ to $y$ if $\ell(x) >\ell(y)$ and from $y$ to $x$ if $\ell(y)>\ell(x)$. Note that 
$\ell(x)\ne \ell(y)$ since it is not a zero edge so the directions are well defined. 
\item $G^{[z]}$: Orient each zero edge $\{x, y\} \in \mathcal{H}^{[z]}$  arbitrarily.
\end{enumerate}\end{flushleft}
Orientations of even edges are  trickier. Notably our construction below is significantly different from that of \cite{canonne2021random}. 
We partition and orient even edges
  into directed graphs
  $G^{[\kappa]}$ for each $\kappa\ge 1$ and $G^{[r]}$,
  where each $G^{[\kappa]}$
  contains orientations of a subset of 
  even edges at scale $\kappa$
  and $G^{[r]}$ (where $r$ stands for ``remaining'') contains orientations of even edges not included in $G^{[\kappa]}$'s:
\begin{flushleft}\begin{enumerate}
    \item  $G^{[\kappa]}$, for each $\kappa \geq 1$: 
    First we define $\calH^{[\kappa]}$ to be the undirected graph over $\mathbb{Z}_K$ that includes 
    all even edges $\{x,y\}\in \calH$ of scale $\kappa$ if $y=x^{(i)\to b}$ for some $i$ and $b$ (so the edge is along the $i$-th direction) satisfies that \emph{neither $x$ nor $y$ has any  outgoing edges in $G^{[u]}$ along the $i$-th direction}.\vspace{0.03cm}
    
    We then orient edges in $\calH^{[\kappa]}$ to obtain the directed graph $G^{[\kappa]}$ as follows. For each $\kappa\ge 1$, find an ordering of vertices in $\mathbb{Z}_K$ as a bijection $\smash{\rho_\kappa: \mathbb{Z}_K \to [\prod_{j\in \overline{T}} m_j]}$ (i.e., $x$ is the $\rho_\kappa(x)$-th vertex in the ordering) such that $\rho_\kappa$ satisfies the following property: For each $i \in [\prod_{j\in \overline{T}} m_j-1]$, the degree of $\smash{\rho_\kappa^{-1}(i)}$ is the largest out of all vertices in the subgraph of $\smash{\mathcal{H}^{[\kappa]}}$ induced by $\smash{\{\rho_\kappa^{-1}(j) : j \geq i \}}$.\vspace{0.03cm}
    
     Starting with $i=1$, one can construct such a bijection $\rho_\kappa$ by deleting vertices one at a time from $\smash{\calH^{[\kappa]}}$, at each step deleting the vertex with the largest degree in the remaining undirected graph, making it $\rho_\kappa(i)$ and setting $j=j+1$. Ties can be broken arbitrarily.\vspace{0.03cm}
    
     We now use $\rho_\kappa$ to orient  the edges in $\calH^{[\kappa]}$ to obtain the directed graph $G^{[\kappa]}$: For each undirected edge $\{x, y\}$ in $ \mathcal{H}^{[\kappa]}$, orient the edge from $x$ to $y$ if $\rho_\kappa(x) < \rho_\kappa(y)$, and orient the edge from $y$ to $x$ otherwise. This ensures that every directed edge $(x, y) \in G^{[\kappa]}$ satisfies $\rho_\kappa(x) < \rho_\kappa(y)$.\vspace{0.03cm}
    
    \item  $G^{[r]}$: For every even edge $\{x,y\}$ in $\calH$ that was not included in $\calH^{[\kappa]}$'s (which means that one of its vertices  has at least one outgoing edge in $G^{[u]}$ along the same direction), add $(x,y)$ to $G^{[r]}$ if $x$ has at least one outgoing edge
    in $G^{[u]}$ along the same direction and add $(y,x)$ to $G^{[r]}$ if $y$ has at least one outgoing edge along the same direction, breaking ties arbitrarily.\vspace{0.03cm}
\end{enumerate}\end{flushleft}

In the analysis proving Lemma \ref{Lemma3.1}, we will utilize the following fact about the directed graph $G^{[\kappa]}$ (this fact, over hypercubes, can be found in \cite{canonne2021random}):

\begin{lemma}\label{Lemma3.4}
Let $U$ be a set of vertices in $\mathbb{Z}_K$ and let $v \in \mathbb{Z}_K \setminus U$. If the outdegree of every vertex $u \in U$ in $G^{[\kappa]}$ is bounded from above by a positive integer $g$, then the number of directed edges $(u, v)$ from a vertex $u \in U$ to $v$ in $G^{[\kappa]}$ is also at most $g$.
\end{lemma}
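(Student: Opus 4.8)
The plan is to use the single structural property that was built into the greedy construction of the ordering $\rho_\kappa$: at the moment $\rho_\kappa^{-1}(i)$ is deleted, it has the maximum degree among all vertices still present, i.e.\ in the subgraph of $\mathcal{H}^{[\kappa]}$ induced by $\{w : \rho_\kappa(w) \geq i\}$. Recall also the orientation convention: each edge of $\mathcal{H}^{[\kappa]}$ is directed in $G^{[\kappa]}$ from its lower-indexed endpoint to its higher-indexed one, so the outdegree of a vertex $w$ in $G^{[\kappa]}$ equals the degree of $w$ in the subgraph of $\mathcal{H}^{[\kappa]}$ induced by $\{w' : \rho_\kappa(w') \geq \rho_\kappa(w)\}$ (its neighbors in that induced subgraph are exactly its neighbors of strictly larger index).

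First I would dispose of the trivial case: if there is no directed edge $(u,v)$ with $u\in U$ in $G^{[\kappa]}$, the count is $0\le g$. Otherwise, let $u^\star$ be the vertex of $U$ \emph{minimizing} $\rho_\kappa(u^\star)$ among all $u \in U$ with $(u,v) \in G^{[\kappa]}$; set $i^\star = \rho_\kappa(u^\star)$ and $W = \{w : \rho_\kappa(w) \geq i^\star\}$. Then every $u \in U$ with $(u,v) \in G^{[\kappa]}$ satisfies $\rho_\kappa(u) \geq i^\star$, so it lies in $W$; and since $(u^\star,v) \in G^{[\kappa]}$ forces $\rho_\kappa(v) > i^\star$, the vertex $v$ lies in $W$ as well.

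Now the two observations combine. On one hand, the degree of $u^\star$ in the subgraph of $\mathcal{H}^{[\kappa]}$ induced by $W$ equals the outdegree of $u^\star$ in $G^{[\kappa]}$, which is at most $g$ by hypothesis. On the other hand, by the defining maximality property of $\rho_\kappa$ at step $i^\star$, $u^\star$ has the largest degree in the subgraph induced by $W$, so the degree of $v$ in that subgraph is at most that of $u^\star$, hence at most $g$. Finally, each directed edge $(u,v)$ with $u \in U$ contributes a distinct neighbor $u$ of $v$ lying in $W$, so the number of such edges is at most the degree of $v$ in the subgraph induced by $W$, which is at most $g$; this is precisely the claim.

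The one genuine subtlety — and the step I would be most careful about — is choosing $u^\star$ with the \emph{smallest} index rather than the largest: this is exactly what ensures that all the relevant edges $(u,v)$ remain inside the induced subgraph $W$ while simultaneously keeping $u^\star$ (whose outdegree we control) in $W$, so that the greedy degree-maximality of $u^\star$ transfers to the bound on $\deg_W(v)$. Everything else is bookkeeping with the orientation convention.
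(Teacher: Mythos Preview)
Your proof is correct and follows essentially the same approach as the paper: both pivot on the vertex of smallest $\rho_\kappa$-index in the relevant set and use the greedy maximality of $\rho_\kappa$ at that step to bound the degree of $v$ in the induced subgraph. The only cosmetic difference is that the paper takes the minimum over all of $U\cup\{v\}$ and case-splits on whether it is $v$, whereas you restrict to those $u\in U$ with $(u,v)\in G^{[\kappa]}$ and handle the empty case separately; the core idea is identical.
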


\begin{proof}
Consider the vertex $s$ that is ranked the highest (i.e., smallest value) in $\rho_\kappa$ among $U\cup \{v\}$.
If $s$ is $v$, then all undirected edges between $U$ and $v$ are oriented from $v$ to $U$ so the number of directed edges $(u, v)$ is $0$.
If $s\in U$, then the assumption implies that the subgraph of $\calH^{[\kappa]}$ induced by $U\cup \{v\}$ has maximum degree at most $g$, including the degree of $v$, from which the lemma follows trivially.
\end{proof}

With $G^{[z]}, G^{[u]}$, $G^{[\kappa]}$ for each $\kappa \geq 1$, and $G^{[r]}$ defined, we then define $G$ to be the union of these directed graphs, which is an orientation of $\mathcal{H}$ over $\mathbb{Z}_K$. 

We now apply Theorem \ref{extended_robust_pisier} (the robust Pisier's Inequality over hypergrids) in a way that connects $d_{TV}(\ell, \mathcal{U})$ to the directed edges of $G$. To do so, define the function $f: \mathbb{Z}_K \to [-1, \infty)$ as follows: For each $y\in \mathbb{Z}_K$,
\begin{equation}
    f(y) = \left(\prod_{j\in \overline{T}} m_{j} \right) \cdot \ell(y) - 1.
\end{equation}
Note that $\mathbb{E}_{y \sim \mathbb{Z}_K}[f(y)] = 0$.
Setting $s=1$, the left-hand side of the robust Pisier's inequality gives  $$\mathbb{E}_{y \sim \mathbb{Z}_K}\big[\abs{f(y)}\big] = 2\cdot d_{TV}(\ell, \mathcal{U}).$$
We use the robust Pisier's inequality to prove the following lemma:
\begin{lemma} \label{bernstein_bound} 
For any probability distribution $\ell$ over $\mathbb{Z}_K$, we have
$$\frac{d_{TV}(\ell, \mathcal{U})}{m^{1.5} \log^2n} \lesssim \mathbb{E}_{x \sim \mathbb{Z}_K}\left[ \sqrt{\sum_{i \in \overline{T}} \sum_{\substack{b \in \mathbb{Z}_{m_i} \\ (x, x^{(i) \to b}) \in G}} \left(L_i^b  f(x)\right)^2 } \right].$$
\end{lemma}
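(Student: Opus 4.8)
The plan is to apply the robust Pisier inequality (Theorem~\ref{extended_robust_pisier}) with $s = 1$ to the function $f$ over the hypergrid $\mathbb{Z}_K$, using the orientation $G = G^{[z]} \cup G^{[u]} \cup \bigcup_{\kappa \geq 1} G^{[\kappa]} \cup G^{[r]}$ constructed above. Since $\mathbb{E}_{y \sim \mathbb{Z}_K}[f(y)] = 0$ and $\mathbb{E}_{y \sim \mathbb{Z}_K}[\abs{f(y)}] = 2\, d_{TV}(\ell, \mathcal{U})$, the inequality gives
$$
d_{TV}(\ell, \mathcal{U}) \ \lesssim\ \log n \cdot \mathbb{E}_{x, y \sim \mathbb{Z}_K} \left[ \abs{ \sum_{i \in \overline{T}} \ \sum_{a \in \mathbb{Z}_{m_i}^*} \ \sum_{\substack{d \in \mathbb{Z}_{m_i}^* \\ (x, x^{(i) \to x_i + d}) \in G}} (1 - \omega_i^{ad})\, \omega_i^{-a y_i} \omega_i^{a x_i}\, L_i^{x_i + d} f(x) } \right].
$$
So it suffices to bound this expectation by $\lesssim m^{1.5} \log n$ times $\mathbb{E}_{x \sim \mathbb{Z}_K}\big[\sqrt{\sum_{i \in \overline{T}} \sum_{b \,:\, (x, x^{(i) \to b}) \in G} (L_i^b f(x))^2}\big]$; together with the $O(\log n)$ above this yields the $\log^2 n$ in the lemma. (On the right-hand side of the lemma $b$ ranges over $\mathbb{Z}_{m_i} \setminus \{x_i\}$, i.e. $b = x_i + d$ for $d \in \mathbb{Z}_{m_i}^*$, so the two index sets agree.)

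Next I would fix $x$ and integrate over $y$ first. For each $i \in \overline{T}$ the $i$-th summand depends on $y$ only through $y_i$, so the bracketed quantity equals $\sum_{i \in \overline{T}} W_i(y_i)$, where
$$
W_i(y_i) := \sum_{a \in \mathbb{Z}_{m_i}^*} \omega_i^{-a(y_i - x_i)}\, h_i(a), \qquad h_i(a) := \sum_{d \in D_i(x)} (1 - \omega_i^{ad})\, L_i^{x_i + d} f(x),
$$
and $D_i(x) := \{ d \in \mathbb{Z}_{m_i}^* : (x, x^{(i) \to x_i + d}) \in G \}$. Since $a \neq 0$, each $W_i$ satisfies $\mathbb{E}_{y_i \sim \mathbb{Z}_{m_i}}[W_i(y_i)] = 0$, and the $W_i$ are independent because the $y_i$ are. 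Hence, by Jensen's inequality and then the orthogonality relation $\mathbb{E}_{y_i \sim \mathbb{Z}_{m_i}}[\omega_i^{-(a - a')(y_i - x_i)}] = \mathbbm{1}[a = a']$ (valid for $a, a' \in \mathbb{Z}_{m_i}^*$),
$$
\mathbb{E}_{y \sim \mathbb{Z}_K}\left[\abs{ \sum_{i \in \overline{T}} W_i(y_i) }\right] \ \leq\ \left( \sum_{i \in \overline{T}} \mathbb{E}_{y_i \sim \mathbb{Z}_{m_i}}\big[\abs{W_i(y_i)}^2\big] \right)^{1/2} = \left( \sum_{i \in \overline{T}} \ \sum_{a \in \mathbb{Z}_{m_i}^*} \abs{h_i(a)}^2 \right)^{1/2}.
$$

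The core is then to bound $\sum_{a \in \mathbb{Z}_{m_i}^*} \abs{h_i(a)}^2$ by a $\mathrm{poly}(m)$ multiple of $\sum_{d \in D_i(x)} \abs{L_i^{x_i + d} f(x)}^2$. Writing $\beta_d := L_i^{x_i + d} f(x)$ and expanding the square, this equals $\sum_{d, d' \in D_i(x)} \beta_d\, \overline{\beta_{d'}} \sum_{a \in \mathbb{Z}_{m_i}^*} (1 - \omega_i^{ad})(1 - \omega_i^{-ad'})$, and a short root-of-unity computation gives $\sum_{a \in \mathbb{Z}_{m_i}^*}(1 - \omega_i^{ad})(1 - \omega_i^{-ad'}) = m_i\big(1 + \mathbbm{1}[d = d']\big)$ for all $d, d' \in \mathbb{Z}_{m_i}^*$. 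Therefore $\sum_{a \in \mathbb{Z}_{m_i}^*} \abs{h_i(a)}^2 = m_i\big(\abs{\sum_{d \in D_i(x)} \beta_d}^2 + \sum_{d \in D_i(x)} \abs{\beta_d}^2\big) \leq m_i^2 \sum_{d \in D_i(x)} \abs{\beta_d}^2$, using Cauchy--Schwarz and $\abs{D_i(x)} \leq m_i - 1$ on the first term. Summing over $i \in \overline{T}$, taking square roots, and applying $\mathbb{E}_x$ gives the required inequality (in fact with only a single factor of $m$, so there is ample slack: one could even skip the exact identity and use the crude bound $\abs{h_i(a)} \leq 2\sqrt{\abs{D_i(x)}}\,(\sum_{d \in D_i(x)} \abs{\beta_d}^2)^{1/2}$ instead).

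The one step that needs genuine care is the character identity $\sum_{a \in \mathbb{Z}_{m_i}^*}(1 - \omega_i^{ad})(1 - \omega_i^{-ad'}) = m_i(1 + \mathbbm{1}[d = d'])$ --- in particular keeping track of the exclusion $a \neq 0$ when invoking orthogonality over $\mathbb{Z}_{m_i}$ --- together with the bookkeeping of the $\mathrm{poly}(m)$ factors. I do not expect a serious obstacle beyond this, since the statement allows roughly a factor of $\sqrt{m}\,\log n$ more room than the clean computation actually consumes; the rest (feeding Theorem~\ref{extended_robust_pisier}, Jensen, and the independence of the $W_i$) is routine.
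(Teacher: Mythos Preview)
Your proposal is correct and in fact cleaner than the paper's own argument. Both proofs start identically by invoking Theorem~\ref{extended_robust_pisier} with $s=1$ and then fix $x$ to control the expectation over $y$. The paper, however, proceeds by splitting the complex sum into four real/imaginary pieces and applying Bernstein's inequality to each, converting the resulting tail bound into an expectation bound; this costs an additional factor of $\log n$ (whence the $\log^2 n$ in the lemma) and an extra $\sqrt{m}$ from a crude Cauchy--Schwarz step. Your route instead computes the second moment over $y$ exactly via the orthogonality of characters, obtaining the closed-form identity $\sum_{a\in\mathbb{Z}_{m_i}^*}(1-\omega_i^{ad})(1-\omega_i^{-ad'})=m_i(1+\mathbbm{1}[d=d'])$ and then a single Cauchy--Schwarz over $D_i(x)$. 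This avoids both the Bernstein detour and the real/imaginary split, and as you observe gives the stronger bound $d_{TV}(\ell,\mathcal{U})/(m\log n)\lesssim \mathbb{E}_x[\sqrt{\cdots}]$, so the lemma follows with room to spare.
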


\begin{proof}
A direct application of the robust Pisier's inequality  (Theorem \ref{extended_robust_pisier}) with $s = 1$ gives
$$\frac{d_{TV}(\ell, \mathcal{U})}{\log n} \lesssim  \mathbb{E}_{x, y \sim \mathbb{Z}_K} \left[ \abs{\sum_{i\in \overline{T}}  \sum_{a \in \mathbb{Z}_{m_i}^*} \sum_{\substack{d \in \mathbb{Z}_{m_i}^* \\ (x, x^{(i) \to x_i + d}) \in G}} \hspace{-0.3cm}(1-\omega_i^{ad})\omega_i^{-a y_i} \omega_i^{a x_i} L_i^{x_i + d} f(x)  } \right]$$
For convenience, we write $B_{x,i,a,d}$ to denote 
$$
B_{x,i,a,d}:= (1-\omega_i^{ad})\omega_i^{ax_i}L_i^{x_i+d}f(x).
$$
Then the RHS of the inequality  above becomes
\begin{equation}\label{haha5}
\mathbb{E}_{x, y \sim \mathbb{Z}_K} \left[ \abs{\sum_{i\in \overline{T}}  \sum_{a \in \mathbb{Z}_{m_i}^*} \sum_{\substack{d \in \mathbb{Z}_{m_i}^* \\ (x, x^{(i) \to x_i + d}) \in G}} \hspace{-0.3cm} \omega_i^{-a y_i}B_{x,i,a,d}} \right].
\end{equation}

Let $R(z)$ and $I(z)$ denote the real and imaginary parts of a complex number $z \in \mathbb{C}$. Then we have
\begin{align*}
&\abs{\sum_{i\in \overline{T}}  \sum_{a \in \mathbb{Z}_{m_i}^*} \sum_{\substack{d \in \mathbb{Z}_{m_i}^* \\ (x, x^{(i) \to x_i + d}) \in G}} \hspace{-0.3cm} \omega_i^{-a y_i}B_{x,i,a,d}}\\
&\hspace{0.2cm}\le 
 \abs{\sum_{i\in \overline{T}}  \sum_{a \in \mathbb{Z}_{m_i}^*} \sum_{\substack{d \in \mathbb{Z}_{m_i}^* \\ (x, x^{(i) \to x_i + d}) \in G}}    R(\omega_i^{-a y_i}) R(B_{x,i,a,d})  } + \abs{\sum_{i\in \overline{T}}  \sum_{a \in \mathbb{Z}_{m_{i}}^*} \sum_{\substack{d \in \mathbb{Z}_{m_{i}}^* \\ (x, x^{(i) \to x_i + d}) \in G}}   R(\omega_i^{-a y_i}) I(B_{x,i,a,d})  } \\
\label{real-imaginary-equation}
&\hspace{0.8cm}+ \abs{\sum_{i\in \overline{T}}  \sum_{a \in \mathbb{Z}_{m_i}^*} \sum_{\substack{d \in \mathbb{Z}_{m_{i}}^* \\ (x, x^{(i) \to x_i + d}) \in G}}    I(\omega_i^{-a y_i}) R(B_{x,i,a,d})  }  + \abs{\sum_{i\in \overline{T}}  \sum_{a \in \mathbb{Z}_{m_{i}}^*} \sum_{\substack{d \in \mathbb{Z}_{m_{i}}^* \\ (x, x^{(i) \to x_i + d}) \in G}}    I(\omega_i^{-a y_i}) I(B_{x,i,a,d})  }
\end{align*}
and we can now analyze real-valued random variables.
We analyze the first of the four terms:
\begin{equation}\label{haha6}
\mathbb{E}_{y \sim \mathbb{Z}_K} \left[  \abs{\sum_{i\in \overline{T}}  \sum_{a \in \mathbb{Z}_{m_{i}}^*} \sum_{\substack{d \in \mathbb{Z}_{m_{i}}^* \\ (x, x^{(i) \to x_i + d}) \in G}}    R(\omega_i^{-a y_i}) R(B_{x,i,a,d})  } \right],
\end{equation}
noting that the same analysis will apply to each of the other terms.
Define the random variable 
$$
X_i =R(\omega^{-a y_i}) \sum_{a \in \mathbb{Z}_{m_i}^*} \sum_{\substack{d \in \mathbb{Z}_{m_i}^* \\ (x, x^{(i) \to x_i + d}) \in G}}     R(B_{x,i,a,d})
$$ 
and note that the expectation of $X_i$ is $0$ over $y\sim \mathbb{Z}_K$. Let 
\begin{align*}
t &:= 100 \log n \cdot \sqrt{\sum_{i\in \overline{T}}  \left(\sum_{a \in \mathbb{Z}_{m_{i}}^*} \sum_{\substack{d \in \mathbb{Z}_{m_{i}}^* \\ (x, x^{(i) \to x_i + d}) \in G}}   R(B_{x,i,a,d})\right)^2}\\[1ex]
&\le O(\log n)\cdot \sqrt{\sum_{i\in \overline{T}} m^2 \sum_{a\in \mathbb{Z}^*_{m_i}}\sum_{\substack{d\in \mathbb{Z}_{m_i}^*\\ (x,x^{(i)\rightarrow x_i+d})}\in G}
\big(R(B_{x,i,a,d})\big)^2
}\\[0.6ex]
&\le O(m^{1.5}\log n)\cdot
\sqrt{\sum_{i\in \overline{T}}
\sum_{\substack{b\in \mathbb{Z}^*_{m_i}\\ (x,x^{(i)\rightarrow b)}\in G}}\left(L_i^bf(x)\right)^2}. 
\end{align*}
Bernstein's inequality gives us that 
$$\mathbb{P}_{y \sim \mathbb{Z}_K}\left( \abs{\sum_{i\in \overline{T}}  \sum_{a \in \mathbb{Z}_{m_{i}}^*} \sum_{\substack{d \in \mathbb{Z}_{m_{i}}^* \\ (x, x^{(i) \to x_i + d}) \in G}}   R(\omega^{-a y_i}) R(B_{x,i,a,d})  } \geq t  \right) \leq \frac{1}{n^{10}}.$$
As a result, we know that (\ref{haha6}) is at most 
\begin{equation}\label{bound-expectation-Bernstein}
\left(1-\frac{1}{n^{10}}\right)\cdot t+\frac{1}{n^{10}}\cdot nt< 2t.
\end{equation}
The same series of steps applies to the other three terms and the lemma follows. 
\end{proof}

Letting $G'$ be the directed graph that contains the union of edges in $G^{[u]},G^{[r]}$ and $G^{[\kappa]}$, $\kappa\ge 1$, but not those in $G^{[z]}$, we can replace the RHS of the Lemma \ref{bernstein_bound} inequality with
$$
\mathbb{E}_{x \sim \mathbb{Z}_K} \left[  \sqrt{ \sum_{i\in \overline{T}} \sum_{\substack{b \in \mathbb{Z}_{m_{i}} \\ (x, x^{(i) \to b}) \in G}} \big(L_i^{b} f(x) \big)^2 } \right] 
= \mathbb{E}_{x \sim \ell}\left[ \sqrt{\sum_{i \in \overline{T}} \sum_{\substack{b \in \mathbb{Z}_{m_{i}} \\ (x, x^{(i) \to b}) \in G'}} \left(\frac{L_i^b \ell(x)}{\ell(x)}\right)^2} \right].
$$

The next lemma connects the quantity in the expectation to the outdegree of $x$ in  $G^{[u]}$ and $G^{[\kappa]}$.

\begin{lemma} \label{Lemma3.5}
For every $x \in \mathbb{Z}_K$, we have 
$$\sum_{i \in \overline{T}} \sum_{\substack{b \in \mathbb{Z}_{m_i} \\ (x, x^{(i) \to b}) \in G'}} \left(\frac{\ell(x) - \ell(x^{(i) \to b})}{\ell(x)}\right)^2 \leq m^3 \cdot \text{outdeg}(x, G^{[u]}) +  \sum_{\kappa \geq 1} 4m^{-2 \kappa + 4} \cdot \text{outdeg}(x, G^{[\kappa]}).$$
\end{lemma}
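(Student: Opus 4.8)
\emph{Plan.} Fix $x \in \mathbb{Z}_K$; we may assume $\ell(x) > 0$, since if $\ell(x) = 0$ then $x$ has no outgoing edge in $G^{[u]}$ (such an edge would require $\ell(x) > \ell(x^{(i)\to b}) \ge 0$), hence none in $G^{[r]}$ either, and every nonzero edge incident to $x$ has weight $1$ and is therefore uneven (so none incident to $x$ is even, i.e.\ in any $G^{[\kappa]}$); thus no edge of $G'$ leaves $x$ and both sides are $0$. In the main case, recall that $G'$ is the edge-disjoint union of $G^{[u]}$, $G^{[r]}$, and the $G^{[\kappa]}$ over $\kappa \ge 1$ (the edges of $\mathcal{H}$ are partitioned into zero, uneven, and even edges; the zero edges go to $G^{[z]}$ and are excluded from $G'$, and the even ones are further split by scale into the $\mathcal{H}^{[\kappa]}$'s, oriented in $G^{[\kappa]}$, and a remainder, oriented in $G^{[r]}$). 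So it suffices to bound, separately, the contributions to the left-hand side coming from the outgoing edges of $x$ in each of these three families.

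\emph{A weight-to-ratio estimate.} For any nonzero edge $\{x,y\}$ of weight $w$ I will use: if $\ell(x) \ge \ell(y)$ then $\bigl(\tfrac{\ell(x)-\ell(y)}{\ell(x)}\bigr)^2 = w^2$, while if $\ell(x) < \ell(y)$ then, writing $\ell(x) = \ell(y)(1-w)$, we get $\bigl(\tfrac{\ell(x)-\ell(y)}{\ell(x)}\bigr)^2 = \bigl(\tfrac{w}{1-w}\bigr)^2$; in particular, if in addition $w < m/(m+1)$, so that $1-w > 1/(m+1)$, this is at most $(m+1)^2 w^2$.

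\emph{The three families.} (i) An outgoing edge $(x,y) \in G^{[u]}$ satisfies $\ell(x) > \ell(y)$ by the orientation of $G^{[u]}$, so it contributes $w^2 \le 1$; there are $\text{outdeg}(x, G^{[u]})$ of them. (ii) An outgoing edge $(x, x^{(i)\to b}) \in G^{[r]}$ is even, so $w < m/(m+1)$ and it contributes at most $(m+1)^2 w^2 < (m+1)^2 (m/(m+1))^2 = m^2$; moreover, by the definition of $G^{[r]}$, $x$ has an outgoing $G^{[u]}$-edge along direction $i$, so these edges lie along at most $\text{outdeg}(x, G^{[u]})$ directions, with at most $m_i - 1 \le m-1$ edges leaving $x$ in each, giving at most $(m-1)\,\text{outdeg}(x, G^{[u]})$ of them and total contribution at most $(m-1)m^2\,\text{outdeg}(x, G^{[u]})$. (iii) An outgoing edge $(x,y) \in G^{[\kappa]}$ is even at scale $\kappa$, so $w \le m^{-\kappa+1}$ and (being even) $w < m/(m+1)$, hence it contributes at most $(m+1)^2 w^2 \le (m+1)^2 m^{-2\kappa+2} \le 4 m^{-2\kappa+4}$, using $(m+1)^2 \le 4m^2$; there are $\text{outdeg}(x, G^{[\kappa]})$ of them.

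\emph{Conclusion and main obstacle.} Summing (i) and (ii) gives at most $\bigl(1 + (m-1)m^2\bigr)\,\text{outdeg}(x, G^{[u]}) = (1 + m^3 - m^2)\,\text{outdeg}(x, G^{[u]}) \le m^3\,\text{outdeg}(x, G^{[u]})$, and adding (iii) over all $\kappa \ge 1$ yields the claimed bound. The only genuinely delicate point is item (ii): an even edge in $G^{[r]}$ can be oriented from its smaller-$\ell$ endpoint to its larger one — something that never happens for uneven edges, which $G^{[u]}$ orients by value — so the ratio acquires a blow-up factor $1/(1-w)$, which is controlled precisely by evenness ($w < m/(m+1)$); one then has to charge the $G^{[r]}$-edges back to $\text{outdeg}(x, G^{[u]})$ through the "at most $m-1$ edges per direction" count, and the elementary inequality $1 + (m-1)m^2 \le m^3$ is just enough to absorb the $G^{[u]}$ contribution into the same term.
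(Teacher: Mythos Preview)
Your proof is correct and follows essentially the same approach as the paper: split the outgoing $G'$-edges of $x$ into the three families $G^{[u]}$, $G^{[r]}$, and $G^{[\kappa]}$, bound each term by $1$, $m^2$, and $4m^{-2\kappa+4}$ respectively (using evenness to control the $1/(1-w)$ blow-up), and charge the $G^{[r]}$-edges back to $\text{outdeg}(x,G^{[u]})$ via the at-most-$(m-1)$-per-direction count. Your explicit treatment of the degenerate case $\ell(x)=0$ is a nice addition that the paper leaves implicit.
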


\begin{proof}
Each edge $(x, x^{(i) \to b}) \in G'$  lies in $G^{[u]},G^{[r]}$ or $G^{[\kappa]}$ for some $\kappa \geq 1$. If $(x, x^{(i) \to b})$ is in $G^{[u]}$, then by the orientation of edges in $G^{[u]}$, we have $\ell(x) > \ell(x^{(i) \to b})$, which implies that the contribution of each such edge to the sum on the LHS is at most $1$.

Next, for each $(x, x^{(i) \to b})$ is in $G^{[\kappa]}$ for some $\kappa \geq 1$, since it is even, we have $\ell(x), \ell(x^{(i) \to b}) > 0$, since otherwise it is a zero edge or uneven edge. Since $w(\{x, x^{(i) \to b} \}) < {m}/({m+1})$, we have:
$$\frac{\max\{ \ell(x), \ell(x^{(i) \to b})\}}{\min\{ \ell(x), \ell(x^{(i) \to b})\}} < m+1.$$ 
Consequently, we have 
$$\frac{|\ell(x) - \ell(x^{(i) \to b}) |}{\ell(x)} \leq w(\{x, x^{(i) \to b} \}) \cdot \frac{\max\{ \ell(x), \ell(x^{(i) \to b})\}}{\min\{ \ell(x), \ell(x^{(i) \to b})\}} \leq (m+1) \cdot w(\{x, x^{(i) \to b} \}) \leq 2 m^{-\kappa + 2}.$$
Therefore, the contribution of each such edge to the sum on the LHS is at most $4 m^{-2\kappa + 4}$.

Lastly, assume that  $(x, x^{(i) \to b})$ is in $G^{[r]}$. By our construction, this implies that $(x,x^{(i)\to b})$ is an even edge and  there exists a $c \in \mathbb{Z}_{m_i} \setminus \{x_i\}$ such that $(x, x^{(i) \to c}) \in G^{[u]}$. This also implies that $(x, x^{(i) \to b})$ is a level $\kappa$ edge, for some $\kappa \geq 1$. 
By a similar argument as above, we have 
$$\left(\frac{\ell(x) - \ell(x^{(i) \to b}) }{\ell(x)}\right)^2 \leq \left(w(\{x, x^{(i) \to b} \}) \cdot \frac{\max\{ \ell(x), \ell(x^{(i) \to b})\}}{\min\{ \ell(x), \ell(x^{(i) \to b})\}}\right)^2 \le 
m^2,
$$
which is at most 
$m^2\cdot (\text{outdegree of } x \text{ in } G^{[u]} \text{ with respect to edges in the \textit{i}-th direction})$. Summing over all $i \in \overline{T}$ and $b \in \mathbb{Z}_{m_i}\setminus \{x_i\}$ such that $(x, x^{(i) \to b}) \in G^{[r]}$, we find:
$$\sum_{i \in \overline{T}} \sum_{\substack{b \in \mathbb{Z}_{m_{i}} \\ (x, x^{(i) \to b}) \in G^{[r]}}} \left(\frac{\ell(x) - \ell(x^{(i) \to b})}{\ell(x)}\right)^2 \leq (m-1)m^2 \cdot outdeg(x, G^{[u]}).$$

The lemma follows by combining the analysis for edges in  $G^{[u]}, G^{[r]}$ and $G^{[\kappa]}$.
\end{proof}

Finally, we connect $d_{TV}(\ell,\cal{U})$ with the expected $\sqrt{\text{outdegree}}$ of $x\sim \ell$:

\begin{lemma} \label{Lemma3.6}
Letting $\beta = d_{TV}(\ell, \mathcal{U})$, one of the following two conditions must hold:
\begin{enumerate}
    \item Either the directed graph $G^{[u]}$ of uneven edges satisfies:
    $$\mathbb{E}_{x \sim \ell} \left[\sqrt{\text{outdeg}(x, G^{[u]})} \right] \gtrsim \frac{\beta}{m^3 \log^2 n}.$$
    \item Or, there exists a $\kappa \in [10\log(n m / \beta)]$ such that the directed graph $G^{[\kappa]}$ satisfies:
    $$\mathbb{E}_{x \sim \ell} \left[\sqrt{\text{outdeg}(x, G^{[\kappa]})}\right] \gtrsim \frac{m^{\kappa} \beta}{m^{3.5}\cdot  \log^2 n  \cdot \log (nm/\beta)}.$$
\end{enumerate}
\end{lemma}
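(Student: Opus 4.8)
The plan is to assemble the inequalities already proved---Lemma~\ref{bernstein_bound}, the $G'$-reformulation of its right-hand side stated just before Lemma~\ref{Lemma3.5}, and Lemma~\ref{Lemma3.5} itself---into one ``master inequality'' that bounds $\beta = d_{TV}(\ell,\mathcal{U})$ from above by a weighted sum of $\mathbb{E}_{x\sim\ell}[\sqrt{\text{outdeg}(x,G^{[u]})}]$ and the terms $\mathbb{E}_{x\sim\ell}[\sqrt{\text{outdeg}(x,G^{[\kappa]})}]$, and then to finish with a truncation-plus-pigeonhole argument. First, chaining Lemma~\ref{bernstein_bound} with the $G'$-identity (zero edges contribute nothing on either side) gives
\[
\frac{\beta}{m^{1.5}\log^2 n}\ \lesssim\ \mathbb{E}_{x\sim\ell}\!\left[\sqrt{\ \sum_{i\in\overline{T}}\ \sum_{\substack{b\in\mathbb{Z}_{m_i}\\ (x,x^{(i)\to b})\in G'}}\Big(\tfrac{L_i^b\ell(x)}{\ell(x)}\Big)^{2}}\ \right],
\]
where vertices with $\ell(x)=0$ contribute nothing since they have no outgoing edge in $G'$. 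Since $\big(L_i^b\ell(x)/\ell(x)\big)^2\le\big((\ell(x)-\ell(x^{(i)\to b}))/\ell(x)\big)^2$, Lemma~\ref{Lemma3.5} bounds the inner sum by $m^3\,\text{outdeg}(x,G^{[u]})+\sum_{\kappa\ge 1}4m^{-2\kappa+4}\,\text{outdeg}(x,G^{[\kappa]})$; then $\sqrt{a+\sum_\kappa b_\kappa}\le\sqrt a+\sum_\kappa\sqrt{b_\kappa}$, linearity of expectation, and dividing by $m^{1.5}$ yield
\[
\frac{\beta}{m^{3}\log^2 n}\ \lesssim\ \mathbb{E}_{x\sim\ell}\!\left[\sqrt{\text{outdeg}(x,G^{[u]})}\right]\ +\ \sum_{\kappa\ge 1}2\,m^{-\kappa+1/2}\,\mathbb{E}_{x\sim\ell}\!\left[\sqrt{\text{outdeg}(x,G^{[\kappa]})}\right].
\]

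Next I would truncate the $\kappa$-sum at $\kappa=10\log(nm/\beta)$. Every vertex of $\mathbb{Z}_K$ has at most $\sum_{i\in\overline{T}}(m_i-1)\le nm$ incident edges, so $\text{outdeg}(x,G^{[\kappa]})\le nm$ and hence $\mathbb{E}_{x\sim\ell}[\sqrt{\text{outdeg}(x,G^{[\kappa]})}]\le\sqrt{nm}$ uniformly in $\kappa$. Summing the geometric tail gives $\sum_{\kappa>10\log(nm/\beta)}2m^{-\kappa+1/2}\sqrt{nm}\le \text{poly}(nm)\cdot m^{-10\log(nm/\beta)}\le\text{poly}(nm)\cdot(nm/\beta)^{-\Omega(1)}$, which---using $\beta\in(0,1]$ and $m,n\ge 2$---is at most, say, half the implicit right-hand side $c\beta/(m^3\log^2 n)$ of the master inequality. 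Absorbing this tail leaves
\[
\frac{\beta}{m^{3}\log^2 n}\ \lesssim\ \mathbb{E}_{x\sim\ell}\!\left[\sqrt{\text{outdeg}(x,G^{[u]})}\right]\ +\ \sum_{\kappa=1}^{\lceil 10\log(nm/\beta)\rceil}2\,m^{-\kappa+1/2}\,\mathbb{E}_{x\sim\ell}\!\left[\sqrt{\text{outdeg}(x,G^{[\kappa]})}\right].
\]

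To conclude I would case-split on which of the two terms carries at least half of the right-hand side. If it is the first term, then $\mathbb{E}_{x\sim\ell}[\sqrt{\text{outdeg}(x,G^{[u]})}]\gtrsim \beta/(m^3\log^2 n)$, which is Case~1. Otherwise the truncated sum, having only $O(\log(nm/\beta))$ terms, is $\gtrsim\beta/(m^3\log^2 n)$, so by averaging there is a $\kappa\le 10\log(nm/\beta)$ with $2m^{-\kappa+1/2}\mathbb{E}_{x\sim\ell}[\sqrt{\text{outdeg}(x,G^{[\kappa]})}]\gtrsim \beta/(m^3\log^2 n\cdot\log(nm/\beta))$; rearranging gives $\mathbb{E}_{x\sim\ell}[\sqrt{\text{outdeg}(x,G^{[\kappa]})}]\gtrsim m^{\kappa}\beta/(m^{3.5}\log^2 n\cdot\log(nm/\beta))$, which is Case~2. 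The one step that needs genuine care is the truncation: one must verify that the crude outdegree bound $\le nm$ combined with the geometric decay really makes the tail negligible against $\beta/(m^3\log^2 n)$ for \emph{all} admissible $n,m,\beta$, and that the absolute constants coming out of Lemma~\ref{bernstein_bound}, Lemma~\ref{Lemma3.5}, the square-root subadditivity, and the final pigeonhole compose cleanly; the rest is bookkeeping.
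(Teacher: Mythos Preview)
Your proposal is correct and follows essentially the same approach as the paper: combine Lemma~\ref{bernstein_bound} with the $G'$-reformulation and Lemma~\ref{Lemma3.5}, split the square root by subadditivity, truncate the $\kappa$-sum at $10\log(nm/\beta)$ using the crude degree bound $n(m-1)$ to kill the tail, and then pigeonhole over the remaining $O(\log(nm/\beta))$ terms. The paper's writeup is terser (it simply asserts the tail is $o(\beta/(m^{1.5}\log^2 n))$), but the logic is identical, and your normalization by $m^{1.5}$ is just a cosmetic rescaling of the same chain of inequalities.
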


\begin{proof}
It follows from Lemmas \ref{bernstein_bound} and \ref{Lemma3.5} that
\begin{align*}
\hspace{-0.8cm}\frac{\beta}{m^{1.5} \log^2 n} &\lesssim \mathbb{E}_{x \sim \ell} \left[\sqrt{m^3 \cdot \text{outdeg}(x, G^{[u]}) +  \sum_{\kappa \geq 1} 4m^{-2 \kappa + 4} \cdot \text{outdeg}(x, G^{[\kappa]})}\right] \\
&\leq m^{1.5} \cdot \mathbb{E}_{x \sim \ell} \left[ \sqrt{\text{outdeg}(x, G^{[u]})} \right] + \sum_{\kappa = 1}^{10\log(n m / \beta)}  2 m^{-\kappa + 2} \cdot \mathbb{E}_{x \sim \ell} \left[ \sqrt{\text{outdeg}(x, G^{[\kappa]})} \right] + o\left(\frac{\beta}{m^{1.5} \log^2n}\right).
\end{align*}
which used the fact that the degrees are always bounded by $n (m-1)$. The lemma follows.
\end{proof}

\subsection{Separating into Cases} \label{section_bucketing}

For each $t$-subset $T$ of $[n]$, let $\alpha(T) = d_{TV}(p_{\overline{T}}, \mathcal{U})$. Note that the $\alpha$ in the statement of Lemma \ref{Lemma3.1} can be written as $\alpha = \mathbb{E}_{T \sim \mathcal{S}(t)} [\alpha(T)]$. For each $T$, take $p_{\overline{T}}$ as $\ell$ in the previous subsection to partition undirected edges in the undirected graph $\mathcal{H}(T)$ over $\mathbb{Z}_K$ with $K=M_{\overline{T}}$ into $\smash{\mathcal{H}^{[z]}(T)}$ (zero edges), $\smash{\mathcal{H}^{[u]}(T)}$ (uneven edges), and $\smash{\mathcal{H}^{[\kappa]}(T)}$ (even edges at scale $\kappa \geq 1$). Orient these edges as in the previous subsection to obtain directed graphs $G^{[u]}(T)$ and $G^{[\kappa]}(T)$. By applying Lemma \ref{Lemma3.6} on $\ell=p_{\overline{T}}$, we conclude that one of the following two conditions holds for either $G^{[u]}(T)$ or one of the graphs $G^{[\kappa]}(T)$, $\kappa \in  [10\log (n m / \alpha(T))].$ These cases mirror the two cases in the hypercube setting from \cite{canonne2021random}.
 
 Before stating the two cases, note that since $\alpha(T) \in [0, 1]$, there exists a $\zeta > 0$ such that with probability at least $\zeta$ over $T \sim \mathcal{S}(t)$, 
 $$\alpha(T) \gtrsim \frac{\alpha}{\zeta \log (1/\alpha)}.$$
 Therefore, one of the following cases must hold.
 
 \begin{flushleft}
 \textbf{Case 1:} With probability at least $\zeta/2$ over  $T \sim S(t)$, the directed graph $G^{[u]}(T)$ of   $p_{\overline{T}}$  satisfies
 $$\mathbb{E}_{x \sim p_{\overline{T}}} \left[ \sqrt{\text{outdeg}(x, G^{[u]}(T))}\right] \gtrsim \frac{\alpha}{\zeta \log (1/\alpha)\cdot m^3 \log^2 n}.$$
 
 Since the out-degree is always between $0$ and $n(m-1)$, there exist two parameters $d \in [n (m-1)]$ and $\xi > 0$ such that with probability $\zeta/(2 \log (nm))$ over the draw of $T \sim S(t)$, we have
\begin{equation}
\Pr_{x \sim p_{\overline{T}}} \left[ d \leq \text{outdeg}(x, G^{[u]}(T)) \leq 2d \right] \geq \xi
\end{equation}
 and $\xi$ satisfies
\begin{equation}
\label{equation32}
\sqrt{d} \cdot \xi \gtrsim \frac{\alpha}{\zeta \log (1/\alpha)\cdot 
 m^3 \log^2 n \cdot \log (nm) }. 
\end{equation} 
 
 \textbf{Case 2:} There exists a  $\kappa \in [O(\log (nm / \alpha))]$ (using $\zeta \leq 1$) such that with probability at least
 $$\Omega\left( \frac{\zeta}{\log (nm/
 \alpha)}\right)$$ over $T \sim S(t)$, the directed graph $G^{[\kappa]}(T)$ of even edges at scale $\kappa$ of $p_{\overline{T}}$  satisfies
  $$\mathbb{E}_{x \sim p_{\overline{T}}} \left[ \sqrt{\text{outdeg}(x, G^{[\kappa]})(T)}\right] \gtrsim \frac{m^\kappa\alpha}{\zeta \log (1/\alpha)\cdot m^{3.5} \log^2 n \cdot \log (nm/\alpha)}.$$
  
  Using a bucketing argument, there exist $d \in [n(m-1)]$ and $\xi > 0$ such that with probability $$
  \Omega\left(\frac{\zeta}{\log (nm)\cdot  \log (nm/\alpha)}\right)$$ over the draw of $T \sim S(t)$, we have $$\mathbb{P}_{x \sim p_{\overline{T}}} \left[ d \leq \text{outdeg}(x, G^{[\kappa]}(T)) \leq 2d \right] \geq \xi$$
 and $\xi$ satisfies \begin{equation}\sqrt{d} \cdot \xi \gtrsim \frac{ m^\kappa\alpha}{\zeta\log (1/\alpha)\cdot m^{3.5} \log^2 n \cdot \log(nm/\alpha) \cdot \log (nm) }.
 \label{equation33}
 \end{equation}
 \end{flushleft}

\subsection{From Directed Graphs to the Bias Vector}\label{sec:haha1}

Let $R$ be a subset of $[n]$ (which will be either a $t$-subset $T$ of $[n]$ or a $(t+1)$-subset $S$ of $[n]$ in the rest of the section). Let $J=M_{{R}}$.
Given a distribution $\ell$ over $\mathbb{Z}_J$, $i\in R$ and 
  $c,d\in \mathbb{Z}_{m_i}$,
  recall the definition of the bias $\smash{\mu^{c,d}_i(\ell)}$ from Definition  \ref{mu-c_definition}:
$$
\mu_i^{c,d}(\ell)=\frac{\Pr_{x\sim \ell}[x_i=c]-\Pr_{x\sim\ell}[x_i=d]}
{\Pr_{x\sim \ell}[x_i=c]+\Pr_{x\sim\ell}[x_i=d]}
$$  
with $\mu_i^{c,d}(\ell)=0$ with $\Pr_{x\sim \ell}[x_i=c]=\Pr_{x\sim\ell}[x_i=d]=0$.

In this subsection we connect directed graphs defined in Section \ref{tv_to_directed_graphs} and \ref{section_bucketing} to 
  biases of restrictions $p_{|\rho}$ of
  $p$ when $\rho\sim \mathcal{D}(t+1,p)$.
Consider a distribution $p$ supported on $\mathbb{Z}_M$ and let $t \in [n-1]$. 

Let $\pi = (\pi(1), \dots, \pi(t+1))$ be an ordered sequence of $t+1$ distinct indices from $[n]$. We let $S(\pi)$ denote the corresponding $(t+1)$-subset $\{\pi(1), \dots, \pi(t+1)\}$. 
\begin{definition}
Given $\pi$ and $y \in \mathbb{Z}_M$, define a restriction $\rho(\pi, y) \in \bigtimes_{i = 1}^n (\mathbb{Z}_{m_i} \cup \{* \})$ as
$$\rho(\pi, y)_i = \begin{cases}
  * ~~~~ i = \pi(j) \text{ for some } j \in [t+1]\\
  y_i ~~ \text{ otherwise}.
\end{cases}$$
\end{definition}

We will also consider sequences $\tau = (\tau(1), \dots, \tau(t))$ of $t$ (instead of $t+1$) distinct indices from $[n]$. 
For such a $\tau$, the corresponding set $S(\tau)$ and the restriction $\rho(\tau, y)$ given by $y \in \mathbb{Z}_M$ are defined similarly.

As in \cite{canonne2021random}, we use that the following is an equivalent way of drawing $\rho\sim\mathcal{D}(t + 1, p)$:
\begin{flushleft}\begin{enumerate}
\item First, sample a sequence of $t+1$ random indices $\pi=(\pi(1),\ldots,\pi(t+1))$ uniformly from $[n]$
  without replacements (so the set $S(\pi)$ can be viewed equivalently as drawn from $\mathcal{S}(t+1)$).
\item Then, sample $y \sim p$.
\item Finally, return $\rho=\rho(\pi,y)$.
\end{enumerate}\end{flushleft}
We will use $(\pi,y)\sim \mathcal{D}'(t+1,p)$ to denote the sampling of $(\pi,y)$ as above, with the understanding that $\rho(\pi,y)$ is distributed the same as $\mathcal{D}(t+1,p)$.
Similarly, consider sampling $\rho\sim \mathcal{D}(t,p)$ equivalently according to the following procedure:
\begin{flushleft}\begin{enumerate}
\item First, sample a sequence of $t$ random indices $\tau=(\tau(1),\dots,\tau(t ))$ uniformly from $[n]$
  without replacements (so the set $S(\tau)$ can be viewed equivalently as drawn from $\mathcal{S}(t )$).
\item Then, sample ${y} \sim p$.
\item Finally, return $\rho=\rho(\tau,{y})$.
\end{enumerate}\end{flushleft}
Similarly we will write $(\tau,y)\sim \mathcal{D}'(t,p)$ to denote the sampling of $(\tau,y)$ as above.

Fixing any $i\in [t+1]$, we 
let $\pi_{-i}$ denote the length-$t$ sequence obtained from $\pi$ after removing its $i$-th entry. An important observation is that $\rho\sim \mathcal{D}(t,p)$ can also be drawn as follows:
\begin{flushleft}\begin{enumerate}
\item First, sample a sequence of $t+1$ random indices $\pi=(\pi(1),\ldots,\pi(t+1))$ uniformly from $[n]$
  without replacements and set $\tau=\pi_{-i}$. (Note that $i$ is a fixed index in $[t+1]$.)
\item Then, sample $y \sim p$.
\item Finally, return $y$ and  $\rho=\rho(\tau,y)=\rho(\pi_{-i},y)$.
\end{enumerate}\end{flushleft}

Given a $t$-subset $T$ of $[n]$ with $K=M_{\overline{T}}$, we will 
  use 
$$(z,i,b)\in \mathbb{Z}_K\times \overline{T}\times \mathbb{Z}_{m_i}$$
to denote directed edges 
  over $\mathbb{Z}_K$: $(z,i,b)$ means the directed edge $(z,z^{(i)\to b})$ so we can talk about, e.g., whether
  $(z,i,b)\in G^{[u]}(T)$ and whether
  $(z,i,b)\in G^{[\kappa]}(T)$.
(Note that for notational convenience, we allow $b$ to be $z_i$, in which case $(z,i,b)$ can never be an edge in these directed graphs.)
As an example, let $y \in \mathbb{Z}_M$, $\pi$ be a $(t+1)$-sequence of distinct elements in $[n]$, and $b \in \mathbb{Z}_{m_{\pi(i)}}$ for some $i\in [t+1]$. Then, 
$\smash{(y_{\overline{S(\pi_{-i})}}, \pi(i), b)}$ denotes the edge from $\smash{y_{\overline{S(\pi_{-i})}}}$ to $y'$ where $y'$ satisfies $\smash{y'_{\pi(i)} = b}$ and $\smash{y'_{\overline{S(\pi)}} = y_{\overline{S(\pi)}}}$.

The following lemma connects the directed graphs to biases of restrictions of $p$:

\begin{lemma} \label{Lemma3.7}
Let $\pi$ be a $(t+1)$-sequence of distinct indices  and $y \in \mathbb{Z}_M$. For $i \in [t+1]$ and $b \in \mathbb{Z}_{m_{\pi(i)}}$, 
\begin{equation}
\begin{aligned}
\left|\mu^{b,y_{\pi(i)}}_{\pi(i)}(p_{| \rho(\pi, y)})\right| &\geq \frac{m}{2(m+1)} \cdot \mathbbm{1} \left\{ \left(y_{\overline{S(\pi_{-i})}}, \pi(i), b\right) \in G^{[u]}(S(\pi_{-i})) \right\} \\[0.5ex]
&\hspace{0.8cm}+ \sum_{\kappa \geq 1} \frac{1}{2m^\kappa} \cdot \mathbbm{1} \left\{ \left(y_{\overline{S(\pi_{-i})}}, \pi(i), b\right) \in G^{[\kappa]}(S(\pi_{-i})) \right\}.
\end{aligned}\label{hehe100}
\end{equation}
\end{lemma}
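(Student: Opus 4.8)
The plan is to prove (\ref{hehe100}) by showing that its left-hand side is \emph{exactly} the weight (up to a factor of $2$) of the relevant edge in the auxiliary undirected graph $\mathcal{H}(S(\pi_{-i}))$ built from $p_{\overline{S(\pi_{-i})}}$; once this is established, (\ref{hehe100}) follows immediately from the definitions of uneven and scale-$\kappa$ edges. First I would fix notation: let $\tau := \pi_{-i}$ and $T := S(\tau) = S(\pi_{-i})$, which is a $t$-subset of $[n]$ \emph{not} containing $\pi(i)$ since the entries of $\pi$ are distinct; put $K := M_{\overline{T}}$ and $z := y_{\overline{T}} \in \mathbb{Z}_K$, and note $z_{\pi(i)} = y_{\pi(i)}$ because $\pi(i)\in\overline{T}$. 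Following the recipe of Sections \ref{tv_to_directed_graphs}--\ref{section_bucketing}, the graphs $G^{[u]}(T)$ and $G^{[\kappa]}(T)$ are precisely those obtained by taking $\ell := p_{\overline{T}}$, and the directed edge $(z,\pi(i),b)$ stands for $(z, z^{(\pi(i)) \to b})$.

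Next I would prove the key identity
\[
\left|\mu^{b,\,y_{\pi(i)}}_{\pi(i)}\big(p_{|\rho(\pi,y)}\big)\right|
= \frac{\big|\,p_{\overline{T}}(z) - p_{\overline{T}}(z^{(\pi(i)) \to b})\,\big|}{p_{\overline{T}}(z) + p_{\overline{T}}(z^{(\pi(i)) \to b})}.
\]
Indeed, $\rho(\pi,y)$ places $*$ on $S(\pi) = T\cup\{\pi(i)\}$ and fixes $y_j$ for $j\in\overline{S(\pi)}$, so $p_{|\rho(\pi,y)}$ is the law of $x_{S(\pi)}$ under $x\sim p$ conditioned on $x_j = y_j$ for all $j\in\overline{S(\pi)}$. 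Writing $Z := \Pr_{x\sim p}[\,x_j = y_j \text{ for all } j\in\overline{S(\pi)}\,]$ and using $\overline{T} = \{\pi(i)\}\cup\overline{S(\pi)}$, Bayes' rule gives $\Pr_{x\sim p_{|\rho(\pi,y)}}[x_{\pi(i)} = c] = p_{\overline{T}}(z^{(\pi(i)) \to c})/Z$ for every $c\in\mathbb{Z}_{m_{\pi(i)}}$; substituting $c = b$ and $c = y_{\pi(i)}$ (with $z^{(\pi(i)) \to y_{\pi(i)}} = z$) into Definition \ref{mu-c_definition} cancels the $1/Z$ factors and yields the identity. If $Z = 0$, both $p_{\overline{T}}$-values vanish, so the edge is a zero edge, the right side of (\ref{hehe100}) is $0$, and the left side is $0$ by the convention in Definition \ref{mu-c_definition}; the case $b = y_{\pi(i)}$ is likewise trivial.

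Finally I would conclude by cases on the edge $e := \{z, z^{(\pi(i)) \to b}\}$ of $\mathcal{H}(T)$. The sets $G^{[u]}(T)$ and $\{G^{[\kappa]}(T)\}_{\kappa\ge 1}$ are pairwise disjoint (a nonzero edge is either uneven or even at a unique scale), so at most one indicator on the right of (\ref{hehe100}) is nonzero. Since $p_{\overline{T}}(z) + p_{\overline{T}}(z^{(\pi(i)) \to b}) \le 2\max\{p_{\overline{T}}(z),\, p_{\overline{T}}(z^{(\pi(i)) \to b})\}$, the identity gives $|\mu^{b,y_{\pi(i)}}_{\pi(i)}(p_{|\rho(\pi,y)})| \ge \tfrac12\,w(e)$ for every nonzero edge $e$ (the orientation chosen for $e$ being irrelevant to this magnitude bound). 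Thus if $(z,\pi(i),b)\in G^{[u]}(T)$ then $e$ is uneven and $w(e)\ge m/(m+1)$, so the bias is $\ge m/(2(m+1))$; if $(z,\pi(i),b)\in G^{[\kappa]}(T)$ then $e$ is even at scale $\kappa$ and $w(e) > m^{-\kappa}$, so the bias is $\ge 1/(2m^\kappa)$; and otherwise the right side of (\ref{hehe100}) is $0$. This proves (\ref{hehe100}).

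There is no genuinely hard step here; the only point requiring care is the bookkeeping behind the identity, namely noticing that deleting $\pi(i)$ from $\pi$ enlarges the visible coordinate set from $\overline{S(\pi)}$ to $\overline{T} = \overline{S(\pi)}\cup\{\pi(i)\}$, so that $p_{\overline{T}}$ moved along coordinate $\pi(i)$ is, after normalization by $Z$, exactly the one-dimensional marginal of $p_{|\rho(\pi,y)}$ in that coordinate. After this observation, the rest is a direct comparison of ratios together with the definition of edge weights.
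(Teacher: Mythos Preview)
Your proof is correct and follows essentially the same approach as the paper: both identify $\big|\mu^{b,y_{\pi(i)}}_{\pi(i)}(p_{|\rho(\pi,y)})\big|$ with $|\text{Pr}^b-\text{Pr}^c|/(\text{Pr}^b+\text{Pr}^c)$ where $\text{Pr}^a=p_{\overline{T}}(z^{(\pi(i))\to a})$, observe that this is at least $\tfrac12 w(\{z,z^{(\pi(i))\to b}\})$ since the sum is at most twice the max, and then read off the bound from the weight thresholds defining uneven and scale-$\kappa$ edges. Your version is slightly more explicit about the normalization and the degenerate cases $Z=0$ and $b=y_{\pi(i)}$, but the argument is the same.
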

\begin{proof}
We let $\ell$ denote $p_{|\rho(\pi, y)}$ and $c$ denote $y_{\pi(i)}$. Writing
$$\text{Pr}^b = \Pr_{x \sim p}\left[x_{\pi(i)} = b, x_{\overline{S(\pi)}} = y_{\overline{S(\pi)}}\right]\quad\text{and} \quad \text{Pr}^c = \Pr_{x \sim p}\left[x_{\pi(i)} = c, x_{\overline{S(\pi)}} = y_{\overline{S(\pi)}}\right],$$
we have that the LHS of (\ref{hehe100}) is $$\left|\mu^{b,c}_{\pi(i)}(\ell)\right| =\left|\frac{ \text{Pr}^b - \text{Pr}^c }{\text{Pr}^b + \text{Pr}^c}\right|.$$
Let $z$ be the string with $z_{\overline{S(\pi)}} = y_{\overline{S(\pi)}}$ and $z_{\pi(i)} = c$, 
and  $z'$ be with $z'_{\overline{S(\pi)}} = y_{\overline{S(\pi)}}$ and $z'_{\pi(i)} = b$. Then
$$w({z, z'}) = \frac{|\text{Pr}^c - \text{Pr}^b|}{\max\{\text{Pr}^c, \text{Pr}^b\}}\le 2 \cdot \left|\mu^{b,c}_{\pi(i)}(\ell)\right|.$$ If $(z, \pi(i), b) \in G^{[u]}(T)$ is uneven, then the weight is at least ${m}/({m+1})$; if $(z, \pi(i), b) \in G^{[\kappa]}(T)$ for some $\kappa\ge 1$, then the weight is at least $m^{-\kappa}$. This finishes the proof of the lemma.
\end{proof}

\subsection{Case 1: Graph with Uneven Edges}\label{sec:haha2}

We assume there are parameters $\zeta', \xi$, and $d \geq 1$ such that with probability at least $\zeta'$ over $T \sim \mathcal{S}(t)$,
\begin{equation} \label{equation35}
\Pr_{x \sim p_{\overline{T}}} \left[d \leq \text{outdeg}(x, G^{[u]}(T)) \leq 2d \right] \geq \xi.
\end{equation}
Notice that $\zeta' = \zeta/(2 \log (nm))$ (see Case 1 of Section \ref{section_bucketing}), so that (\ref{equation32}) implies:
\begin{equation} \label{equation36}
\sqrt{d} \cdot \xi \gtrsim \frac{\alpha}{\zeta'\log (1 / \alpha)\cdot m^{3}\log^4(nm)}.
\end{equation}

We define $(\tau,y)$, where $\tau$ is a $t$-sequence and $y\in \mathbb{Z}_M$, to be $t$-contributing or $(t+1)$-contributing:
\begin{definition} \label{def:t-contributing-uneven-case}
Let $\tau$ be a $t$-sequence of distinct indices from $[n]$ and let $y \in \mathbb{Z}_M$. We say the pair $(\tau, y)$ is $t$\textit{-contributing} if the restricted distribution $p_{| \rho(\tau, y)}$ satisfies
$$
\Big\|\mu\left(p_{|\rho(\tau,y)}\right)\Big\|_2
\geq \frac{1}{m}\cdot\sqrt{\frac{d}{32}},$$
and we say $(\tau, y)$ is $(t+1)$-contributing otherwise. 
\end{definition}

Lemma \ref{Lemma3.1} would follow if there are many $t$-contributing pairs $(\tau,y)$.
The next lemma gives us the tool we need to obtain $t$-contributing pairs:

\begin{lemma} \label{Lemma3.9}
Let $\pi$ be a $(t+1)$-sequence of distinct indices from $[n]$, and $y \in \mathbb{Z}_M$ be in the support of $p$. If there are distinct  $i_1, ..., i_{d+1} \in [t+1]$ such that for each  $k \in [d+1]$ there is a $b_k \in \mathbb{Z}_{m_{\pi(i_k)}}$ such that 
$$\left(y_{\overline{S(\pi_{-{i_k}})}}, \pi(i_k), b_k\right) \in G^{[u]}\left(S(\pi_{-i_k})\right),$$
then $(\pi_{-i_k}, y)$ is a $t$-contributing pair for at least one of the indices $k \in [d+1]$.
\end{lemma}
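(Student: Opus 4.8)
\medskip
\noindent\textbf{Proof plan.} Write $q:=p_{|\rho(\pi,y)}$, a distribution over $\mathbb{Z}_{M_{S(\pi)}}$, and abbreviate $j_k:=\pi(i_k)$ and $E_k:=\{x_{j_k}=y_{j_k}\}$ for $k\in[d+1]$. Since $y$ lies in the support of $p$, the conditioning defining $q$ is valid and $q$ puts positive mass on $y_{S(\pi)}$, so $\Pr_q[E_k\cap E_{k'}]>0$ for all $k,k'$. The plan is to push the hypothesis --- which concerns the $d+1$ graphs $G^{[u]}(S(\pi_{-i_k}))$ through $d+1$ distinct coordinates of $S(\pi)$ --- down to the bias vectors of the $t$-restrictions $p_{|\rho(\pi_{-i_k},y)}$, and then show that these bias vectors cannot all be small.

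First I would apply Lemma~\ref{Lemma3.7} to the $(t+1)$-sequence $\pi$, once for each $k$ (with $i=i_k$, $b=b_k$). The hypothesis $(y_{\overline{S(\pi_{-i_k})}},j_k,b_k)\in G^{[u]}(S(\pi_{-i_k}))$ gives $\big|\mu^{b_k,y_{j_k}}_{j_k}(q)\big|\ge\tfrac{m}{2(m+1)}$, and unwinding the orientation rule for $G^{[u]}$ (an uneven edge points away from the endpoint of larger $p_{\overline{S(\pi_{-i_k})}}$-value), after dividing numerator and denominator by $\Pr_{x\sim p}[x_\ell=y_\ell\text{ for all }\ell\notin S(\pi)]>0$, shows $\Pr_q[x_{j_k}=b_k]\le\tfrac1{m+1}\Pr_q[E_k]$; the same applied to $k'$ gives $\Pr_q[x_{j_{k'}}=b_{k'}]\le\tfrac1{m+1}\Pr_q[E_{k'}]$. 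Next, note that $p_{|\rho(\pi_{-i_k},y)}$ is exactly $q$ conditioned on $E_k$ and projected onto the coordinates $S(\pi)\setminus\{j_k\}$ (valid since $\Pr_q[E_k]\ge\Pr_q[x_{S(\pi)}=y_{S(\pi)}]>0$); hence for $k'\ne k$, with $P^{yy}_{k,k'}:=\Pr_q[E_k\cap E_{k'}]>0$ and $P^{yb}_{k,k'}:=\Pr_q[E_k\cap\{x_{j_{k'}}=b_{k'}\}]\le\tfrac1{m+1}\Pr_q[E_{k'}]$,
\[
\mu^{b_{k'},y_{j_{k'}}}_{j_{k'}}\!\big(p_{|\rho(\pi_{-i_k},y)}\big)=\frac{P^{yb}_{k,k'}-P^{yy}_{k,k'}}{P^{yb}_{k,k'}+P^{yy}_{k,k'}}.
\]
So for each ordered pair $k\ne k'$, one of two things holds: either $\big|\mu^{b_{k'},y_{j_{k'}}}_{j_{k'}}(p_{|\rho(\pi_{-i_k},y)})\big|\ge\tfrac12$, which alone contributes $\ge\tfrac14$ to $\|\mu(p_{|\rho(\pi_{-i_k},y)})\|_2^2$; or $P^{yb}_{k,k'}>\tfrac13 P^{yy}_{k,k'}$, which together with the bound on $P^{yb}_{k,k'}$ forces $\Pr_q[E_k\mid E_{k'}]<\tfrac3{m+1}$.

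Then I would argue by contradiction: suppose no $(\pi_{-i_k},y)$ is $t$-contributing, i.e.\ $\|\mu(p_{|\rho(\pi_{-i_k},y)})\|_2^2<d/(32m^2)$ for all $k$. Since the $d$ coordinates $j_{k'}$ ($k'\ne k$) are distinct, the dichotomy above shows that for each $k$ fewer than $d/(8m^2)\le d/32$ of the indices $k'$ can be of the first type (using $m\ge2$), so at least $\tfrac{31}{32}d$ of them satisfy $\Pr_q[E_k\mid E_{k'}]<\tfrac3{m+1}$. Double counting the $(d+1)d$ ordered pairs, and noting that among an unordered pair $\{k,k'\}$ the inequality $\Pr_q[E_\cdot\mid E_\cdot]<\tfrac3{m+1}$ can fail only on the ordering that conditions on the event of larger marginal, one deduces that at least $\tfrac{15}{16}$ of all unordered pairs satisfy $P^{yy}_{k,k'}<\tfrac3{m+1}\min\{\Pr_q[E_k],\Pr_q[E_{k'}]\}$ --- i.e.\ the events $E_1,\dots,E_{d+1}$ are pairwise almost disjoint. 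The contradiction then has to come from the remaining facts: all of $E_1,\dots,E_{d+1}$ co-occur with positive probability on $y_{S(\pi)}$, combined with Bonferroni/second-moment bounds tying $\sum_k\Pr_q[E_k]$ and $\sum_{k<k'}\Pr_q[E_k\cap E_{k'}]$ together --- and this is where the assumption $y\in\mathrm{supp}(p)$ (equivalently $q(y_{S(\pi)})>0$) is used in an essential way, since without it the events could genuinely be disjoint and the lemma fails.

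The main obstacle is precisely this last step. In the hypercube analogue of \cite{canonne2021random} it is avoided: there an uneven edge directed away from $y$ forces $\Pr_q[E_k]\ge\tfrac34$, so conditioning on $E_k$ moves every other marginal by at most a factor $\tfrac43$ and automatically preserves all the other biased coordinates --- the argument is coordinate-by-coordinate. Over a hypergrid $\Pr_q[E_k]$ can be arbitrarily small, so this is no longer available and one must run the global counting/almost-disjointness argument; getting its quantitative form with the right dependence on $m$ and $d$ is the technical heart of the lemma, and the slack factors $1/m$ and $1/32$ built into the definition of a $t$-contributing pair are exactly the room needed to absorb the losses from this argument.
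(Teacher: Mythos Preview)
Your proposal has a genuine gap that you yourself flag in the final paragraph but do not close: the ``almost-disjointness'' argument. From $\Pr_q[E_k\cap E_{k'}]<\tfrac{3}{m+1}\Pr_q[E_{k'}]$ for most ordered pairs $(k,k')$ together with $q(y_{S(\pi)})>0$, no contradiction follows. The positivity of $q(y_{S(\pi)})$ is purely qualitative---it could be exponentially small---and the inequalities you derive are compatible with, say, $\Pr_q[E_k]\approx 1/(m+1)$ for every $k$ and $\Pr_q[E_k\cap E_{k'}]\approx 1/(m+1)^2$, with a tiny common intersection. There is nothing in your setup bounding $\sum_k\Pr_q[E_k]$ from below, so the Bonferroni/second-moment route you sketch cannot close the gap.

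The paper's argument is genuinely different and avoids this issue. The key idea you are missing is to consider, for each ordered pair $(i,j)$, not just the single bias $\mu^{b_{k'},y_{j_{k'}}}_{j_{k'}}$ but \emph{all} biases $\mu^{c,y_{\pi(i)}}_{\pi(i)}(p_{|\rho(\pi_{-j},y)})$ as $c$ ranges over $\mathbb{Z}_{m_{\pi(i)}}$. The paper proves a pairwise claim: if all of these biases are below $1/(4m)$ for both $(i,j)$ and $(j,i)$, then every value $\Pr[x_{\pi(i)}=c,\,x_{\pi(j)}=y_{\pi(j)},\,x_{\overline{S(\pi)}}=y_{\overline{S(\pi)}}]$ is multiplicatively close to the value at $c=y_{\pi(i)}$, and likewise with $i,j$ swapped. \emph{Summing over $c_j$} then controls the marginal $\Pr[x_{\pi(i)}=y_{\pi(i)},\,x_{\overline{S(\pi)}}=y_{\overline{S(\pi)}}]$ within a factor $\approx m$ of $\gamma:=\Pr[x_{\pi(i)}=y_{\pi(i)},\,x_{\pi(j)}=y_{\pi(j)},\,x_{\overline{S(\pi)}}=y_{\overline{S(\pi)}}]$, and the uneven edge along direction $\pi(i)$ forces the $b_i$-marginal to be at most $1/(m+1)$ times this---contradicting the closeness at $c_i=b_i$. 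So for every pair one of the two restrictions has some bias $\ge 1/(4m)$ along the other's coordinate; a tournament on the $d+1$ indices then has a vertex of out-degree $\ge d/2$, and that $k$ gives $\|\mu(p_{|\rho(\pi_{-i_k},y)})\|_2^2\ge (d/2)\cdot(1/(4m))^2=d/(32m^2)$. The constants $1/m$ and $1/32$ in the definition of $t$-contributing match this exactly. The step you are missing---quantifying over all $c$ rather than only $b_k$, so that one can sum to reconstruct a marginal---is what makes the contradiction local to a single pair and removes the need for any global disjointness argument.
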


We begin by proving the following claim.

\begin{claim} \label{Lemma3.9Helper}
Let $\pi$ be a $(t+1)$-sequence of distinct indices from $[n]$, and let $y \in \mathbb{Z}_M$ be in the support of $p$. If there are~$i\ne j\in [d+1]$, $b_i \in \mathbb{Z}_{m_{\pi(i)}}$ and $b_j \in \mathbb{Z}_{m_{\pi(j)}}$ such that 
$$\left(y_{\overline{S(\pi_{-{i}})}}, \pi(i), b_i\right) \in G^{[u]}\left(S(\pi_{-i})\right)\quad\text{and}\quad \left(y_{\overline{S(\pi_{-{j}})}}, \pi(j), b_j\right) \in G^{[u]}\left(S(\pi_{-j})\right),$$
then there must exist either a $c_i \in \mathbb{Z}_{m_{\pi(i)}}$ or $c_j \in \mathbb{Z}_{m_{\pi(j)}}$  such that either 
$$\left|\mu^{c_i,y_{\pi(i)}}_{\pi(i)}\left(p_{| \rho(\pi_{-j}, y)}\right)\right| \geq \frac{1}{4m}\quad\text{or}\quad \left|\mu^{c_j,y_{\pi(j)}}_{\pi(j)}\left(p_{| \rho(\pi_{-i}, y)}\right)\right| \geq \frac{1}{4m}.$$
\end{claim}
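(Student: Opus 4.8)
The plan is to argue by contradiction: assume that all four biases named in the conclusion are below $1/(4m)$ in absolute value, and derive a contradiction from the two uneven-edge hypotheses. Write $c = y_{\pi(i)}$ and $c' = y_{\pi(j)}$, and let $z$ be the point of $\mathbb{Z}_{M_{\overline{S(\pi_{-i})}}}$ obtained by restricting $y$ off $S(\pi_{-i})$; similarly let $z'$ correspond to restricting off $S(\pi_{-j})$. The hypothesis $(z,\pi(i),b_i) \in G^{[u]}(S(\pi_{-i}))$ says that the edge $\{z, z^{(\pi(i))\to b_i}\}$ is uneven in $\mathcal{H}(S(\pi_{-i}))$ and oriented out of $z$, hence $p_{\overline{S(\pi_{-i})}}(z) > p_{\overline{S(\pi_{-i})}}(z^{(\pi(i))\to b_i})$ and the weight is at least $m/(m+1)$. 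The key point is to relate this edge weight — computed from $p_{\overline{S(\pi_{-i})}}$, which still averages over the coordinate $\pi(j)$ — to the biases $\mu^{\cdot,\cdot}_{\pi(i)}(p_{|\rho(\pi_{-j},y)})$, where $\pi(j)$ has been fixed to $y_{\pi(j)} = c'$.

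The mechanism is the following telescoping/averaging identity. The quantity $\Pr_{x\sim p}[x_{\pi(i)} = a,\ x_{\overline{S(\pi_{-i})}} = z]$ decomposes as a sum over the value $a' \in \mathbb{Z}_{m_{\pi(j)}}$ of coordinate $\pi(j)$, namely $\sum_{a'} \Pr_{x\sim p}[x_{\pi(i)} = a,\ x_{\pi(j)} = a',\ x_{\overline{S(\pi)}} = y_{\overline{S(\pi)}}]$. Now $\rho(\pi_{-j}, y)$ stars exactly the coordinates in $S(\pi_{-j}) = \{\pi(k): k\neq j\}$ — that is, it leaves $\pi(j)$ free but fixes $\pi(i)$ along with all coordinates outside $S(\pi)$; its bias $\mu^{c_i, c}_{\pi(i)}(p_{|\rho(\pi_{-j},y)})$ is built exactly from the two marginal probabilities of $x_{\pi(i)} \in \{c_i, c\}$ within the subcube that fixes $x_{\overline{S(\pi)}} = y_{\overline{S(\pi)}}$. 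The assumption that every such bias is at most $1/(4m)$ in magnitude forces these marginals to be multiplicatively close: $\Pr[x_{\pi(i)} = a] \le \frac{1+1/(4m)}{1-1/(4m)}\cdot \Pr[x_{\pi(i)} = c]$ for all $a$ in that subcube. Summing the corresponding bound over the (at most $m_{\pi(j)}\le m$) values of the free coordinate $\pi(j)$, one obtains that $p_{\overline{S(\pi_{-i})}}(z^{(\pi(i))\to a})$ and $p_{\overline{S(\pi_{-i})}}(z)$ differ by at most roughly a factor $\frac{1+1/(4m)}{1-1/(4m)}$, uniformly in $a$ — in particular the edge $\{z, z^{(\pi(i))\to b_i}\}$ would have weight at most $1 - \frac{1-1/(4m)}{1+1/(4m)} = \frac{1/(2m)}{1+1/(4m)} < 1/m \le m/(m+1)$, contradicting unevenness. (The symmetric argument with $i$ and $j$ swapped handles the other hypothesis and the other pair of biases; I only actually need one of the two to reach a contradiction, which is why the conclusion is a disjunction over four biases — either breaking the $i$-edge or breaking the $j$-edge suffices.) Quantitatively one has to double-check the constants so that the weight bound comes out strictly below $m/(m+1)$; with the threshold $1/(4m)$ this is comfortable, since $\frac{1+1/(4m)}{1-1/(4m)} \le 1 + 1/m$ for $m\ge 2$, giving weight $< 1/m \le m/(m+1)$.

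The main obstacle I anticipate is bookkeeping the three nested index sets cleanly: $S(\pi)$, $S(\pi_{-i})$, and $S(\pi_{-j})$ differ only in whether $\pi(j)$ or $\pi(i)$ is starred, and one must be careful that the edge in $G^{[u]}(S(\pi_{-i}))$ lives in a hypergrid where $\pi(j)$ has been \emph{marginalized out}, while the bias $\mu_{\pi(i)}(p_{|\rho(\pi_{-j},y)})$ lives in a subcube where $\pi(j)$ has been \emph{fixed to $c'$}. Bridging these two is precisely the averaging step above, and getting the direction of the inequality right (small biases $\Rightarrow$ small spread of marginals $\Rightarrow$ small edge weight $\Rightarrow$ contradiction with unevenness) is the crux. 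A minor secondary point is handling the degenerate case where one of the relevant probabilities is zero; but since $y$ is in the support of $p$, the probability of the subcube fixing $x_{\overline{S(\pi)}} = y_{\overline{S(\pi)}}$ with $x_{\pi(i)} = c = y_{\pi(i)}$ is positive, so the denominators in the biases and in the edge weight are nonzero and the default-value conventions in Definition \ref{mu-c_definition} do not interfere.
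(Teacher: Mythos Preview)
Your argument contains a genuine index error that undermines the proof. You write that $\rho(\pi_{-j},y)$ ``leaves $\pi(j)$ free but fixes $\pi(i)$,'' but it is the opposite: $S(\pi_{-j})=\{\pi(k):k\neq j\}$ contains $\pi(i)$ and excludes $\pi(j)$, so $\pi(i)$ is starred (free) and $\pi(j)$ is \emph{fixed} to $y_{\pi(j)}$. Consequently the bias $\mu^{c_i,c}_{\pi(i)}(p_{|\rho(\pi_{-j},y)})$ compares
\[
\Pr_{x\sim p}\bigl[x_{\pi(i)}=c_i,\ x_{\pi(j)}=y_{\pi(j)},\ x_{\overline{S(\pi)}}=y_{\overline{S(\pi)}}\bigr]
\quad\text{and}\quad
\Pr_{x\sim p}\bigl[x_{\pi(i)}=c,\ x_{\pi(j)}=y_{\pi(j)},\ x_{\overline{S(\pi)}}=y_{\overline{S(\pi)}}\bigr],
\]
i.e.\ it controls only the single slice $x_{\pi(j)}=y_{\pi(j)}$. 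Your step of ``summing over the (at most $m_{\pi(j)}$) values of the free coordinate $\pi(j)$'' therefore has no basis: nothing you have assumed controls the other slices $x_{\pi(j)}=a'$ for $a'\neq y_{\pi(j)}$. A simple counterexample: take a distribution where in the slice $x_{\pi(j)}=y_{\pi(j)}$ the $\pi(i)$-marginal is flat (so all those biases vanish), but in the other slices $x_{\pi(i)}=b_i$ has probability zero while $x_{\pi(i)}=c$ has large probability. Then the marginal edge $(z,\pi(i),b_i)$ is uneven, yet all the $\mu_{\pi(i)}$-biases you invoke are zero.

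This is exactly why the paper's proof uses \emph{both} families of biases, not just one. The second assumption, that $|\mu^{c_j,y_{\pi(j)}}_{\pi(j)}(p_{|\rho(\pi_{-i},y)})|<1/(4m)$ for all $c_j$, says that in the slice $x_{\pi(i)}=c$ the probabilities $\Pr[x_{\pi(i)}=c,\,x_{\pi(j)}=c_j,\,x_{\overline{S(\pi)}}=y_{\overline{S(\pi)}}]$ are all multiplicatively close to $\gamma:=\Pr[x_{\pi(i)}=c,\,x_{\pi(j)}=y_{\pi(j)},\,x_{\overline{S(\pi)}}=y_{\overline{S(\pi)}}]$. Summing \emph{those} over $c_j$ gives $p_{\overline{S(\pi_{-i})}}(z)\le \gamma\bigl(1+(m-1)\tfrac{1+1/(4m)}{1-1/(4m)}\bigr)$, and then the uneven-edge orientation gives $p_{\overline{S(\pi_{-i})}}(z^{(\pi(i))\to b_i})\le \tfrac{1}{m+1}\,p_{\overline{S(\pi_{-i})}}(z)$. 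On the other side, the first assumption with $c_i=b_i$ gives the lower bound $p_{\overline{S(\pi_{-i})}}(z^{(\pi(i))\to b_i})\ge \Pr[x_{\pi(i)}=b_i,\,x_{\pi(j)}=y_{\pi(j)},\,\ldots]>\gamma\cdot\tfrac{1-1/(4m)}{1+1/(4m)}$, and the two bounds are incompatible. So your claim that ``I only actually need one of the two to reach a contradiction'' is not correct; both halves of the negated conclusion are genuinely used, and the second uneven-edge hypothesis is in fact not needed (only one uneven edge plus both small-bias families).
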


\noindent\textit{Intuition behind the proof:} We proceed by contradiction and assume for all $c_i$ and $c_j$, we have
\begin{equation}\label{hehe112}\left|\mu^{c_i,y_{\pi(i)}}_{\pi(i)}\left(p_{| \rho(\pi_{-j}, y)}\right)\right|< \frac{1}{4m}\quad\text{and}\quad \left|\mu^{c_j,y_{\pi(j)}}_{\pi(j)}\left(p_{| \rho(\pi_{-i}, y)}\right)\right| < \frac{1}{4m}.\end{equation}
Then all the probabilities $$\Pr_{x \sim p}\left[x_{\pi(i)} = c_i, x_{\pi(j)} = y_{\pi(j)}, x_{\overline{S(\pi)}} = y_{\overline{S(\pi)}}\right]\quad\text{and}\quad \Pr_{x \sim p}\left[x_{\pi(i)} = y_{\pi(i)}, x_{\pi(j)} = c_j, x_{\overline{S(\pi)}} = y_{\overline{S(\pi)}}\right]$$ must be close to the probability $$\Pr_{x \sim p}\left[x_{\pi(i)} = y_{\pi(i)}, x_{\pi(j)} = y_{\pi(j)}, x_{\overline{S(\pi)}} = y_{\overline{S(\pi)}}\right].$$
In particular, this implies that 
\begin{equation}\label{hehe111}
\Pr_{x \sim p}\left[x_{\pi(i)} = b_i, x_{\pi(j)} = y_{\pi(j)}, x_{\overline{S(\pi)}} = y_{\overline{S(\pi)}}\right]\approx \Pr_{x \sim p}\left[x_{\pi(i)} = y_{\pi(i)}, x_{\pi(j)} = y_{\pi(j)}, x_{\overline{S(\pi)}} = y_{\overline{S(\pi)}}\right].
\end{equation}
However, this leads to a contradiction since we can use
$$
\Pr_{x \sim p}\left[x_{\pi(i)} = y_{\pi(i)}, x_{\pi(j)} = c_j, x_{\overline{S(\pi)}} = y_{\overline{S(\pi)}}\right]\approx \Pr_{x \sim p}\left[x_{\pi(i)} = y_{\pi(i)}, x_{\pi(j)} = y_{\pi(j)}, x_{\overline{S(\pi)}} = y_{\overline{S(\pi)}}\right].
$$
and the unevenness of the edge
  $$\left(y_{\overline{S(\pi_{-{i}})}}, \pi(i), b_i\right)$$
to show that the LHS of (\ref{hehe111}) is smaller than its RHS, a contradiction.

\begin{proof}[Proof of Claim \ref{Lemma3.9Helper}] 
Suppose towards a contradiction that (\ref{hehe112}) holds. 
Also for notational convenience we assume that $\pi(i)=1$ and $\pi(j)=2$. From the first part of (\ref{hehe112}) we have
$$\left|\frac{ \Pr_{x \sim p}[x_1 = y_1, x_2 = y_2, x_{\overline{S(\pi)}} = y_{\overline{S(\pi)}}] - \Pr_{x \sim p}[x_1 = c_i, x_2 = y_2, x_{\overline{S(\pi)}} = y_{\overline{S(\pi)}}]}{\Pr_{x \sim p}[x_1 = y_1, x_2 = y_2, x_{\overline{S(\pi)}} = y_{\overline{S(\pi)}}] + \Pr_{x \sim p}[x_1 = c_i, x_2 = y_2, x_{\overline{S(\pi)}} = y_{\overline{S(\pi)}}]}\right|< \frac{1}{4m}.$$
Letting $\gamma :=\Pr_{x \sim p}[x_1 = y_1, x_2 = y_2, x_{\overline{S(\pi)}} = y_{\overline{S(\pi)}}]>0$ (note that $\gamma>0$ because we assumed that $y$ is in the support of $p$),
this implies for any $c_i \neq y_1$, we have
\begin{equation} \label{BoundPiI}
    \begin{aligned}
    \gamma \cdot \frac{1 - \frac{1}{4m}}{1 + \frac{1}{4m}} < \Pr_{x \sim p}\left[x_1 = c_i, x_2 = y_2, x_{\overline{S(\pi)}} = y_{\overline{S(\pi)}}\right]  
    < \gamma\cdot \frac{1 + \frac{1}{4m}}{1 - \frac{1}{4m}}.
    \end{aligned}
\end{equation}
Similarly, for any $c_j \neq y_{\pi(j)}$, we have
\begin{equation} \label{BoundPiJ}
    \begin{aligned}
 \gamma \cdot \frac{1 - \frac{1}{4m}}{1 + \frac{1}{4m}} < \Pr_{x \sim p}\left[x_1 = y_1, x_2 = c_j, x_{\overline{S(\pi)}} = y_{\overline{S(\pi)}}\right]  
    < \gamma \cdot \frac{1 + \frac{1}{4m}}{1 - \frac{1}{4m}}.
    \end{aligned}
\end{equation}
In particular, setting $c_i=b_i$, we have
\begin{equation}\label{hehe113}
\Pr_{x\sim p}\left[x_1= b_i, x_{\overline{S(\pi)}} = y_{\overline{S(\pi)}}\right]
\ge \Pr_{x\sim p}\left[x_1= b_i, 
x_2=y_2, x_{\overline{S(\pi)}} = y_{\overline{S(\pi)}}\right]
>\gamma\cdot  \frac{1-\frac{1}{4m}}{1+\frac{1}{4m}}.
\end{equation}

On the other hand, given that
$$\left(y_{\overline{S(\pi_{-{i}})}}, \pi(i), b_i\right) \in G^{[u]}\left(S(\pi_{-i})\right),$$ we have
$$\frac{|\Pr_{x\sim p}[x_1 = y_1, x_{\overline{S(\pi)}} = y_{\overline{S(\pi)}}]- \Pr_{x\sim p}[x_1 = b_i, x_{\overline{S(\pi)}} = y_{\overline{S(\pi)}}]|}{\max \{\Pr_{x\sim p}[x_1 = y_1, x_{\overline{S(\pi)}} = y_{\overline{S(\pi)}}], \Pr_{x\sim p}[x_1 = b_i, x_{\overline{S(\pi)}} = y_{\overline{S(\pi)}}]\}} \geq \frac{m}{m+1}.$$ 
By the orientation of uneven edges, we have
$$\Pr_{x\sim p}\left[x_1 = y_1, x_{\overline{S(\pi)}} = y_{\overline{S(\pi)}}\right] \geq \Pr_{x\sim p}\left[x_1 = b_i, x_{\overline{S(\pi)}} = y_{\overline{S(\pi)}}\right].$$ As a result, we have 
\begin{equation} \label{UBonBi}
    \Pr_{x\sim p}\left[x_1 = b_i, x_{\overline{S(\pi)}} = y_{\overline{S(\pi)}}\right] \leq \frac{1}{m+1} \cdot \Pr_{x\sim p}\left[x_1 = y_1, x_{\overline{S(\pi)}} = y_{\overline{S(\pi)}}\right].
\end{equation}

Finally, using (\ref{BoundPiJ}), we have
$$
\Pr_{x\sim p}\left[x_1 = y_1, x_{\overline{S(\pi)}}\right]\le  \gamma\left(1+(m_2-1)\cdot \frac{1+\frac{1}{4m}}{1-\frac{1}{4m}}\right)
\le \gamma \left(1+(m-1)\cdot \frac{1+\frac{1}{4m}}{1-\frac{1}{4m}}\right)
$$
This together with (\ref{hehe113}) and (\ref{UBonBi}) lead to a contradiction.
\end{proof}

We are now ready to prove Lemma \ref{Lemma3.9}.

\begin{proof}[Proof of Lemma \ref{Lemma3.9}]
Suppose there are $d+1$ distinct indices $i_1, ..., i_{d+1} \in [t+1]$ such that for each index $k \in [d+1]$ there exists a $b_k \in \mathbb{Z}_{m_{\pi(i_k)}}$ such that 
$$\left(y_{\overline{S(\pi_{-{i_k}})}}, \pi(i_k), b_k\right) \in G^{[u]}\left(S(\pi_{-i_k})\right).$$ 
By Claim \ref{Lemma3.9Helper},  for each pair $i, j$ of the $d+1$ indices, there exists $c_i$ or $c_j$ such that either 
\begin{equation}\label{hehe114}\left|\mu^{c_i,y_{\pi(i)}}_{\pi(i)}\left(p_{| \rho(\pi_{-j}, y)}\right)\right| \geq \frac{1}{4m}\quad\text{or}\quad \left|\mu^{c_j,y_{\pi(j)}}_{\pi(j)}\left(p_{| \rho(\pi_{-i}, y)}\right)\right| \geq \frac{1}{4m}.\end{equation}

Construct a graph $G$ as follows. Let its vertex set be $i_1,\ldots,i_{d+1}$. For each pair $i, j$, create a directed edge from $i$ to $j$ if 
the second part of (\ref{hehe114}) holds and 
create a directed edge from $j$ to $i$ if 
if the second part of (\ref{hehe114}) holds.
If both inequalities hold, orient the edge between $i$ and $j$ arbitrarily. So $G$ a directed version of the complete undirected graph over $i_1,\ldots,i_{d+1}$.

It is easy to show that $G$ has a vertex with out-degree $\geq d/2$; otherwise the total number of edges is $<(d+1)\cdot (d/2)$, a contradiction. 
To finish the proof, let $k$ be an index among $i_1,\ldots,i_{d+1}$ that has out-degree at least $d/2$.
Then we have 
$$
\Big\|\mu\left(p_{|\rho(\pi_{-k},y)}\right)\Big\|_2^2
\geq \frac{1}{(4m)^2} \cdot \frac{d}{2},$$
and the lemma follows.
\end{proof}

We now use Lemma \ref{Lemma3.9} to prove Case 1 of Lemma \ref{Lemma3.1}. We will need to lowerbound the expectation of $\|\mu(p_{|\rho(\pi,y)|})\|_2$ as $(\pi,y) \sim \mathcal{D}'(t+1, p)$.
For each $i\in [t+1]$,
let $X_i$ be the indicator random variable that is set to $1$ when the following event holds:
\begin{equation} \label{equation40}
\left\{\left(y_{\overline{S(\pi_{-{i}})}}, \pi(i), b\right) \in G^{[u]}\left(S(\pi_{-i})\right) \text{ for some } b \in \mathbb{Z}_{m_{\pi(i)}} \text{\ and\ }(\pi_{-i}, y)\text{ is } (t+1)\text{-contributing}\right\}.
\end{equation}

First, combining Lemma \ref{Lemma3.7} and the first part of the event above gives us the following inequality:
$$
\Big\|\mu\left(p_{|\rho(\pi,y)}\right)\Big\|_2 \geq \frac{m}{2(m+1)}\cdot  \sqrt{X_1 + \dots + X_{t+1}}.
$$
This is because Lemma \ref{Lemma3.7} implies that
\begin{align*}
\Big\|\mu\left(p_{|\rho(\pi,y)}\right)\Big\|_2^2&\ge 
 \sum_{i \in [t+1]} \sum_{c \in \mathbb{Z}_{m_{\pi(i)}}} \left(\frac{m}{2(m+1)}\right)^2 \cdot \mathbbm{1} \left\{ \left(y_{\overline{S(\pi_{-i})}}, \pi(i), c\right) \in G^{[u]}\left(S(\pi_{-i})\right) \right\}\\[0.5ex]
&\ge \left(\frac{m}{2(m+1)}\right)^2\cdot \big(X_1+\cdots+X_{t+1}\big).
\end{align*}

Second, we can use Lemma \ref{Lemma3.9} and the second part of (\ref{equation40}) to see that $X_1 + \dots + X_{t+1}$ is at most $d$ with probability $1$. Therefore, we can obtain the following expression:
$$\mathbb{E}_{(\pi, y)} \left[ \Big\|\mu\left(p_{|\rho(\pi,y)}\right)\Big\|_2  \right] \gtrsim \mathbb{E}_{(\pi, y)} \big[ X_1 + \dots + X_{t+1} \big] \cdot \frac{1}{\sqrt{d}} \cdot \frac{m}{2(m+1)},$$
where $(\pi,y)\sim \mathcal{D}'(t+1,p)$.

What remains is bounding the probability of $X_i = 1$ for each $i\in [t+1]$. The proof of this part is exactly the same as in \cite{canonne2021random}, and we include it for completeness.
We claim that for each $i \in [t + 1]$, 
\begin{equation} \label{equation41}
\Pr_{(\pi, y)}\big[X_i = 1\big] \geq \left(\zeta' \xi - \Pr_{(\pi, y)}\big[(\pi_{-i}, y) \text{ is  } t\text{-contributing} )\big]\right) \cdot \frac{d}{nm}.
\end{equation}

Let us consider drawing $\pi$ and $y$ by drawing $y$ and $\pi_{-i}$ first and then $\pi(i)$. We define event $F$ over $y$ and $\pi_{-i}$ as follows:
$$
\text{Event $F$: $S(\pi_{-i})$ as $T$ and $y_{\overline{S(\pi_{-i}})}$ as $x$ satisfy (\ref{equation35}) and $(\pi_{-i}, y)$ is $(t+1)$-contributing.}
$$
From our assumption at the beginning of Case 1, the first part of $F$ occurs with probability at least $\zeta' \xi$. Therefore, the probability of $F$ is at least
$$\zeta' \xi - \Pr_{(\pi, y)}\big[(\pi_{-i}, y) \text{ is } t\text{-contributing} \big].$$
Conditioning on $\pi_{-i}$ and $y$ satisfying $F$, $\pi(i)$ (together with $\pi_{-i}$ and $y$) leads to $X_i = 1$ if there exists $b$ such that 
$(y_{\overline{S(\pi_{-{i}})}}, \pi(i), b) \in G^{[u]}(S(\pi_{-i}))$. The probability of this is at least ${d}/({n(m-1)})$.

Continuing from (\ref{equation41}), next we can observe that the probability of $(\pi_{-i}, y)$ being $t$-contributing is the same as $(\tau, y)$ being $t$-contributing, where $(\tau,y)\sim \mathcal{D}'(t,p)$.
Putting everything together yields 
\begin{equation} 
\mathbb{E}_{(\pi,y)}\left[\Big\|\mu\left(p_{|\rho(\pi,y)}\right)\Big\|_2 \right] \gtrsim   \frac{1}{\sqrt{d}} \cdot (t+1) \cdot \left(\zeta'\xi - \Pr_{(\tau, y)}\big[(\tau, y) \text{ is } t\text{-contributing}\big] \right) \cdot \frac{d}{n m}.
\label{equation42}
\end{equation}
Thus, either the probability of $(\tau, y)$ being $t$-contributing is at least $\zeta' \xi/2$, in which case we have
$$\mathbb{E}_{(\tau, y) \sim \mathcal{D}'(t, p)} \left[ \Big\|\mu\left(p_{|\rho(\tau,y)}\right)\Big\|_2 \right] 
\gtrsim \zeta'  \xi \cdot \frac{\sqrt{d}}{m} \gtrsim  \frac{\alpha}{m^4\cdot \log^4 (nm) \log (1/\alpha)}$$
using (\ref{equation36}) for the last inequality, from which (\ref{equation17}) follows as $t+1\le n$. Or, (\ref{equation42}) can be lowerbounded by:
$$\frac{1}{\sqrt{d}} \cdot (t+1) \cdot {\zeta' \xi} \cdot \frac{d}{nm} \gtrsim \frac{t}{n} \cdot \frac{\alpha}{m^4\cdot \log^4 (nm) \log (1/\alpha)}.$$
This finishes the proof of Case 1.

\subsection{Case 2: Graph with Even Edges} \label{Case_2_section}

From Case 2 of Section \ref{section_bucketing}, we assume that there are parameters
$\kappa\in  [O(\log(nm/\alpha))]$,
$\zeta', \xi$, and $d$ such that with probability at least $\zeta'$ over the draw of $T \sim \mathcal{S}(t)$, we have:
$$\mathbb{P}_{x \sim p_{\overline{T}}}\left[d \leq \text{outdeg}(x, G^{[\kappa]}(T)) \leq 2d \right] \geq \xi,$$
where  (\ref{equation33}) implies that:
$$\sqrt{d} \cdot \xi \gtrsim \frac{m^{\kappa}\alpha}{\zeta'\log (1/\alpha)\cdot m^{3.5} \cdot \log^2 n \log^2(nm) \log^2(nm / \alpha) }.$$

We introduce a notion of $t$-contributing and $(t+1)$-contributing restrictions: 

\begin{definition}  \label{def:t-contributing-even-case}
Let $\gamma \geq 1$ be a parameter to be fixed later. (We will set $\gamma$ to be $d$ at the end but we keep it as a parameter for now.) 
A restriction $\rho$ with $t$ stars is said to be $t$-contributing if
$$
\Big\|\mu\left(p_{|\rho}\right)\Big\|_2
\geq \frac{\sqrt{\gamma}}{m^{\kappa + 1}},$$
and we say that $\rho$ is $(t+1)$-contributing otherwise.
\end{definition}

\begin{lemma} \label{Lemma3.12}
Let $\pi$ be a $(t+1)$-sequence of distinct indices from $[n]$ and let $y \in \mathbb{Z}_M$ be in the support of $p$. If $i \in [t+1]$ satisfies that $\rho(\pi_{-i},y^{(\pi(i))\to a})$ is $(t+1)$-contributing for all $a\in \mathbb{Z}_{m_{\pi(i)}}$, then
$$\sum_{j \in [t+1]\setminus \{i \}} \sum_{c,d \in \mathbb{Z}_{m_{\pi(j)}}}\left( \mu^{c,d}_{\pi(j)} \left(p_{| \rho(\pi, y)}\right) \right)^2 < \frac{\gamma}{m^{2 \kappa + 2}}.$$
\end{lemma}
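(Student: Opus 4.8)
The plan is to observe that, restricted to the $t$ coordinates $\{\pi(j):j\in[t+1]\setminus\{i\}\}$, the distribution $p_{|\rho(\pi,y)}$ is a convex combination of the distributions $p_{|\rho(\pi_{-i},\,y^{(\pi(i))\to a})}$ over $a\in\mathbb{Z}_{m_{\pi(i)}}$, and then to argue that a mixture all of whose components have a small bias vector (over those coordinates) also has a small bias vector. Precisely, let $w_a:=\Pr_{x\sim p}[x_{\pi(i)}=a\mid x_k=y_k\ \forall k\notin S(\pi)]$; the conditioning event has positive probability because $y$ lies in the support of $p$, and $\sum_a w_a=1$. Then for each $j\neq i$ the coordinate-$\pi(j)$ marginal of $p_{|\rho(\pi,y)}$ equals $\sum_a w_a$ times the coordinate-$\pi(j)$ marginal of $p_{|\rho(\pi_{-i},\,y^{(\pi(i))\to a})}$, with the \emph{same} weights $w_a$ for every $j$ (indices $a$ with $w_a=0$ may be discarded).

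To turn this into a bound on bias vectors I would set, for $j\neq i$ and $c,d\in\mathbb{Z}_{m_{\pi(j)}}$, the unnormalized probabilities $P_j(a,c):=\Pr_{x\sim p}[x_{\pi(i)}=a,\ x_{\pi(j)}=c,\ x_k=y_k\ \forall k\notin S(\pi)]$ and $\sigma^{(j,c,d)}_a:=P_j(a,c)+P_j(a,d)$, so that $\mu^{c,d}_{\pi(j)}(p_{|\rho(\pi,y)})$ equals $\big(\sum_a(P_j(a,c)-P_j(a,d))\big)\big/\big(\sum_a\sigma^{(j,c,d)}_a\big)$ while $P_j(a,c)-P_j(a,d)=\mu^{c,d}_{\pi(j)}(p_{|\rho(\pi_{-i},\,y^{(\pi(i))\to a})})\cdot\sigma^{(j,c,d)}_a$. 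Cauchy--Schwarz applied to the numerator then gives, for every such $j,c,d$,
$$\Big(\mu^{c,d}_{\pi(j)}\big(p_{|\rho(\pi,y)}\big)\Big)^2\ \le\ \sum_a\lambda^{(j,c,d)}_a\Big(\mu^{c,d}_{\pi(j)}\big(p_{|\rho(\pi_{-i},\,y^{(\pi(i))\to a})}\big)\Big)^2,\qquad \lambda^{(j,c,d)}_a:=\frac{\sigma^{(j,c,d)}_a}{\sum_b\sigma^{(j,c,d)}_b},$$
with $\lambda^{(j,c,d)}_a\in[0,1]$ and $\sum_a\lambda^{(j,c,d)}_a=1$. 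Summing over $j\in[t+1]\setminus\{i\}$ and over all pairs $(c,d)$ and exchanging the order of summation, the claim reduces to bounding $\sum_a\sum_{j\neq i}\sum_{c,d}\lambda^{(j,c,d)}_a\big(\mu^{c,d}_{\pi(j)}(p_{|\rho(\pi_{-i},\,y^{(\pi(i))\to a})})\big)^2$ by $\gamma/m^{2\kappa+2}$.

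The step I expect to be the main obstacle is that the coefficients $\lambda^{(j,c,d)}_a$ depend on $(c,d)$ and on $j$, so one cannot immediately pull $\sum_a$ out and apply the hypothesis $\sum_{j\neq i}\sum_{c,d}(\mu^{c,d}_{\pi(j)}(p_{|\rho(\pi_{-i},\,y^{(\pi(i))\to a})}))^2<\gamma/m^{2\kappa+2}$ per index $a$. To get around this I would invoke the hypothesis itself to strip off the $(c,d)$-dependence: since $\rho(\pi_{-i},\,y^{(\pi(i))\to a})$ is $(t+1)$-contributing, $|\mu^{c,d}_{\pi(j)}(p_{|\rho(\pi_{-i},\,y^{(\pi(i))\to a})})|<\sqrt\gamma/m^{\kappa+1}$ for all $a,j,c,d$, which forces $P_j(a,c)$ and $P_j(a,d)$ to be within a factor $\frac{1+\sqrt\gamma/m^{\kappa+1}}{1-\sqrt\gamma/m^{\kappa+1}}$ of each other; hence every coordinate-$\pi(j)$ conditional marginal of $p_{|\rho(\pi_{-i},\,y^{(\pi(i))\to a})}$ — and so also the mixed marginal of $p_{|\rho(\pi,y)}$ — is within a small multiplicative factor of uniform. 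This bounds $\lambda^{(j,c,d)}_a$ by $w_a$ times a factor that tends to $1$, uniformly over $(j,c,d)$, which lets me replace $\lambda^{(j,c,d)}_a$ by $w_a$ (up to that factor), pull $\sum_a$ outside, and conclude using $\sum_a w_a=1$ together with the hypothesis applied to each $a$. (One should first dispense with the trivial case in which $\gamma/m^{2\kappa+2}$ already exceeds the bound $\sum_{j\neq i}m_{\pi(j)}^2$ that always holds for the left-hand side.)

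This parallels the corresponding step in the hypercube analysis of \cite{canonne2021random}; the new difficulty in the hypergrid setting is precisely the per-pair normalization in the definition of $\mu^{c,d}_i$, which makes the bias vector a nonlinear function of the distribution — over $\{\pm1\}^n$ the analogous quantity $\mathbb{E}[x_i]$ is linear, so the bias vector of a mixture is literally the mixture of the bias vectors and no such argument is needed. It is exactly this nonlinearity that makes bringing in the near-uniformity of the conditional marginals (forced by the $(t+1)$-contributing hypothesis) necessary.
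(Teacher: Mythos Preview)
Your approach is the same as the paper's at the structural level: decompose $p_{|\rho(\pi,y)}$ on the coordinates $\{\pi(j):j\neq i\}$ as a mixture over $a\in\mathbb{Z}_{m_{\pi(i)}}$ of the distributions $p_{|\rho(\pi_{-i},y^{(\pi(i))\to a})}$, apply convexity of $t\mapsto t^{2}$ to each bias, swap the order of summation, and invoke the $(t+1)$-contributing hypothesis for each $a$ together with the fact that the weights sum to $1$.

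Where the two part ways is in how the mixture identity is stated. The paper simply writes
\[
\mu^{c,d}_{\pi(j)}\bigl(p_{|\rho(\pi,y)}\bigr)=\sum_{a}\Pr(a)\cdot\mu^{c,d}_{\pi(j)}\bigl(p_{|\rho(\pi_{-i},y^{(\pi(i))\to a})}\bigr),
\qquad \Pr(a)=\Pr_{x\sim p}\bigl[x_{\pi(i)}=a\ \bigm|\ x_{\overline{S(\pi)}}=y_{\overline{S(\pi)}}\bigr],
\]
with weights independent of $(j,c,d)$, and then applies Jensen directly. As you correctly observe, this identity is not valid in general because of the per-pair normalization in the definition of $\mu^{c,d}$: the true convex-combination weights are your $\lambda^{(j,c,d)}_a$, not $\Pr(a)$. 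So the paper's short argument glosses over exactly the nonlinearity you isolate, and your additional step --- using the $(t+1)$-contributing hypothesis to force every coordinate-$\pi(j)$ conditional marginal near-uniform, so that $\lambda^{(j,c,d)}_a$ differs from $\Pr(a)$ only by a factor close to $1$ --- is the natural repair. It does introduce a multiplicative constant on the right-hand side (at most $\bigl((1+\epsilon)/(1-\epsilon)\bigr)^{2}$ with $\epsilon=\sqrt{\gamma}/m^{\kappa+1}$, or, in the regime $\epsilon\gtrsim 1$, a cruder factor of $m_{\pi(i)}\le m$ via the trivial bound $\lambda^{(j,c,d)}_a\le 1$); either way this loss is harmless for how the lemma is used downstream in Case~2 of the proof of Lemma~\ref{Lemma3.1}, where only polynomial-in-$m$ factors matter.
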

\begin{proof}
Let $$\Pr(a) := \Pr_{x \sim p} \big[x_{\pi(i)} = a \hspace{0.05cm}|\hspace{0.05cm} x_{\overline{S(\pi)}} = y_{\overline{S(\pi)}}\big]$$ for each $a \in \mathbb{Z}_{m_{\pi(i)}}$.
We note that for any $j\ne i$ and $c,d\in \mathbb{Z}_{m_{\pi(j)}}$, $$\mu^{c,d}_{\pi(j)}\left(p_{|\rho(\pi, y)}\right) = 
\sum_{a \in \mathbb{Z}_{m_{\pi(i)}}} \Pr(a) \cdot \mu^{c,d}_{\pi(j)}\left(p_{| \rho(\pi_{-i}, y^{(\pi(i)) \to a})}\right).$$
By Jensen's inequality, we have
\begin{equation*} 
   \sum_{j \in [t+1]\setminus \{i \}} \sum_{c,d \in \mathbb{Z}_{m_{\pi(j)}}} \left( \mu^{c,d}_{\pi(j)} \left(p_{| \rho(\pi, y)}\right)  \right)^2 \leq \sum_{j \in [t+1]\setminus \{i \}} \sum_{c,d \in \mathbb{Z}_{m_{\pi(j)}}} \sum_{a \in \mathbb{Z}_{m_{\pi(i)}}} \Pr(a)\cdot  \left( \mu^{c,d}_{\pi(j)}\left(p_{| \rho(\pi_{-i}, y^{(\pi(i)) \to a})}\right) \right)^2
\end{equation*}

Since we assume each pair $(\pi_{-i}, y^{(\pi(i)) \to a})$ is $(t+1)$-contributing, we have
$$\sum_{j \in [t+1] \setminus \{i\}} \sum_{c,d \in \mathbb{Z}_{m_{\pi(j)}}} \left(\mu^{c,d}_{\pi(j)} \left( p_{| \rho(\pi_{-i}, y^{(\pi(i)) \to a})}\right)\right)^2 < \frac{\gamma}{m^{2\kappa + 2}}$$ for each such $\rho(\pi_{-i}, y^{(\pi(i)) \to a})$.
The lemma follows using $\sum_a\Pr(a)=1$.
\end{proof}

Similar to Case 1 we would like to lowerbound the expectation of $\|\mu(p_{|\rho(\pi,y)})\|_2$
 as $(\pi,y) \sim \mathcal{D}'(t + 1, p)$. 
Let us introduce the following indicator random variable $X_i$ for each $i \in [t+1]$. $X_i$ equals $1$ when the following event $F_i$ holds:
\begin{flushleft}\begin{quote}
Event $F_i$ on $(\pi,y)\sim\mathcal{D}'(t+1,p)$: There exists a $b \in \mathbb{Z}_{m_{\pi(i)}}$ such that $$\left(y_{\overline{S(\pi_{-i})}}, \pi(i), b\right) \in G^{[\kappa]}\left(S(\pi_{-i})\right),$$ and for all $c \in \mathbb{Z}_{m_{\pi(i)}}$,  we have that 
$\rho(\pi_{-i}, y^{ \pi(i) \to c })$ is $(t+1)$-contributing. 
\end{quote}\end{flushleft}

Combining Lemma \ref{Lemma3.7} and the first part of the event $F_i$ gives us
$$
\Big\|\mu\left(p_{|\rho(\pi,y)}\right)\Big\|_2
\geq \frac{1}{2 m^\kappa} \cdot \sqrt{X_1 + \dots + X_{t+1}}.$$
Combining this inequality with Lemma \ref{Lemma3.12} and the second part of Event $F_i$ implies that $X_1 + \dots + X_{t+1}$ is at most $\lceil 8\gamma/m^2\rceil$ with probability $1$. This is because if the sum is more than $\lceil 8\gamma/m^2\rceil$, there are two $X_i = X_j = 1$, from which Lemma \ref{Lemma3.12} implies that 
$$  \frac{1}{4m^{2\kappa}} \cdot (X_1 + \dots + X_{t+1}) \leq \Big\|\mu\left(p_{|\rho(\pi,y)}\right)\Big\|_2^2=\sum_{i \in [t+1]} \sum_{c ,d\in \mathbb{Z}_{m_{\pi(i)}}} \left(\mu^{c,d}_{\pi(i)}\left(p_{|\rho (\pi, y)}\right)\right)^2 < 2\cdot  \frac{\gamma}{m^{2 \kappa + 2}},$$
which implies that $X_1 + \dots + X_{t+1} < {8 \gamma}/{m^2}$. 
As a result, we have
$$\mathbb{E}_{(\pi, y)} \left[\Big\|\mu\left(p_{|\rho(\pi,y)}\right)\Big\|_2 \right] \gtrsim \frac{1}{m^{\kappa}} \cdot \mathbb{E}_{(\pi, y)} \big[ X_1 + \dots + X_{t+1} \big] \cdot \sqrt{\frac{1}{\lceil 8\gamma/m^2\rceil}}.$$
What remains is bounding the probability that $X_i = 1$.

To do so, we will need to introduce some notation. Let $T$ be a size-$t$ subset of $[n]$ and let $\smash{z \in \bigtimes_{i \in \overline{T}} \mathbb{Z}_{m_i}}$. We write $\rho(z)$ to denote the restriction $\smash{\rho \in \bigtimes_{i = 1}^n (\mathbb{Z}_{m_i} \cup \{* \})}$ with $\rho_i = z_i$ for all $i \in \overline{T}$ and $\rho_i = \ast$ for all $i \in T$. 
Define the following two disjoint subsets for each size-$t$ subset $T$:
\begin{align*}
    A_T &= \left\{ z\in  \bigtimes_{i \in \overline{T}} \mathbb{Z}_{m_i}: d \leq \text{outdeg}\left(z, G^{[\kappa]}(T)\right) \leq 2d \text{ and } \rho(z) \text{ is } (t+1)\text{-contributing} \right\}\\[0.8ex]
 B_T &= \left\{ w \in \bigtimes_{i \in \overline{T}} \mathbb{Z}_{m_i}: \rho(w) \text{ is } t\text{-contributing} \right\}.
\end{align*}
It is not hard to see that 
  the probability of $X_i = 1$ (i.e. the event $F_i$ on $(\pi,y)\sim \mathcal{D}'(t+1,p)$) is at least the probability of the following event $E$, where we first draw a $t$-subset $T$ of $[n]$ uniformly at random, then $z\sim p_{\overline{T}}$, and finally draw $i$ from $\overline{T}$ uniformly at random:
$$\text{Event } E: \exists\hspace{0.05cm} b\in \mathbb{Z}_{m_i} : \left(z, i, b\right) \in G^{[\kappa]}(T), z \in A_T \text{, and } \forall\hspace{0.05cm} c \in \mathbb{Z}_{m_i}: z^{(i) \to c} \not \in B_T.$$

We now lowerbound the probability of $E$ over $T,z$ and $i$. First, we write the probability as follows:
\begin{align} \label{equation45}
    &\Pr_{T, z, i}\big[E\big] = \Pr_{T, z, i}\left[\exists\hspace{0.05cm} b \in \mathbb{Z}_{m_i} : (z, i, b) \in G^{[\kappa]}(T) \text{\ and\ } z \in A_T\right] \\[0.5ex]
    &\hspace{0.5cm}- \Pr_{T, z, i}\left[\exists\hspace{0.05cm} b \in \mathbb{Z}_{m_i}: (z, i, b) \in G^{[\kappa]}(T), z \in A_T \text{\ and } \exists\hspace{0.05cm} c \in\mathbb{Z}_{m_i} : z^{(i) \to c} \in B_T \right].\nonumber
\end{align}
The first probability on the right hand side of (\ref{equation45}) is at least:
$$\left( \zeta' \xi -  \Pr_{T,z}\big[\rho(z) \text{ is } t\text{-contributing} \big]\right) \cdot \frac{d}{(n-t)m}.$$
To see this, we first draw $T$ and $z$ and then impose the condition that $z \in A_T$. Similarly to the Case 1 arguments, the probability of such an event is at least:
$$\zeta' \xi - \Pr_{T, z}\big[\rho(z)\ \text{is $t$-contributing}\big]. 
$$
We then draw $i$ from $\overline{T}$. The probability of getting an  $(z, i, b)\in G^{[\kappa]}(T)$ for some $b$ is at least $$\frac{d}{(n-t)(m-1)}> \frac{d}{(n-t)m}.$$ 

Next we upperbound the probability that is being subtracted in (\ref{equation45}). It can be written as:
$$\hspace{-0.5cm}\frac{1}{\binom{n}{t}} \sum_{T \in \mathcal{P}(t)} \sum_{z} \Pr_{x \sim p_{\overline{T}}} [x = z]  \sum_{i \in \overline{T}} \frac{1}{|\overline{T}|}\cdot\mathbbm{1} \left\{
\exists\hspace{0.05cm} b \in \mathbb{Z}_{m_i}: (z, i, b) \in G^{[\kappa]}(T), z \in A_T, \exists\hspace{0.05cm} c \in\mathbb{Z}_{m_i} : z^{(i) \to c} \in B_T
\right\},$$
where the sum of $z$ is over $z\in \bigtimes_{i \in \overline{T}} \mathbb{Z}_{m_i}$.
By a union bound, we can write this as:
\begin{align*}
&\hspace{-0.5cm}\frac{1}{\binom{n}{t}(n-t)} \sum_{T \in \mathcal{P}(t)} \sum_{z } \Pr_{x \sim p_{\overline{T}}} [x = z]  \sum_{i \in \overline{T}}  \sum_{b \in \mathbb{Z}_{m_i}} \sum_{c \in \mathbb{Z}_{m_i}}\mathbbm{1}\left\{(z, i, b) \in G^{[\kappa]}(T)  \land z \in A_T \land z^{(i) \to c}  \in B_T \right\}
\\&\hspace{-0.5cm}=
\frac{1}{\binom{n}{t}(n-t)} \sum_{T \in \mathcal{P}(t)}  \sum_{i \in \overline{T}}  \sum_{a \in \mathbb{Z}_{m_i}} \sum_{z: z_i = a } \sum_{b \in \mathbb{Z}_{m_i}} \sum_{c \in \mathbb{Z}_{m_i}} \Pr_{x \sim p_{\overline{T}}} [x = z] \cdot \mathbbm{1}\left\{(z, i, b) \in G^{[\kappa]}(T)  \land z \in A_T \land z^{(i) \to c}  \in B_T \right\}
\end{align*}

We apply a change of variables. Instead of summing over all $a$ and all $\smash{z\in  \bigtimes_{i \in \overline{T}} \mathbb{Z}_{m_i}: z_i = a}$, we sum over all $c \in \mathbb{Z}_{m_i}$, and all $\smash{w \in  \bigtimes_{i \in \overline{T}} \mathbb{Z}_{m_i}: w_i = c}$. Observe that the original variable $z$ changes to $w^{(i) \to a}$, and $z^{(i) \to c}$ becomes $w$. This yields the following expression:
$$\hspace{-1.2cm}\frac{1}{\binom{n}{t}(n-t)} \sum_{T}  \sum_{i \in \overline{T}} \sum_{c \in \mathbb{Z}_{m_i}} \sum_{w : w_i = c} \sum_{a,b \in \mathbb{Z}_{m_i}}   \Pr_{x \sim p_{\overline{T}}} [x = w^{(i) \to a}] \cdot \mathbbm{1}\left\{(w^{(i) \to a}, i, b) \in G^{[\kappa]}(T)  \land w^{(i) \to a} \in A_T \land w  \in B_T \right\}.$$

Observe that if $\smash{(w^{(i) \to a}, w^{(i) \to b} )}$ is not an edge in $G^{[\kappa]}(T)$, then the corresponding indicator variable  above equals zero. Otherwise, if it is an edge in $\smash{G^{[\kappa]}(T)}$, by our construction of $ G^{[\kappa]}(T)$, we must have $(w^{(i) \to a}, w^{(i) \to d}) \not \in G^{[u]}(T)$
for all $d \in \mathbb{Z}_{m_i}$. Therefore, we have 
$$\Pr_{x\sim p_{\overline{T}}}\left[x = w^{(i) \to a}\right] \leq (m+1)\cdot \Pr_{x\sim p_{\overline{T}}}\left[x = w^{(i) \to d}\right]
$$ 
for all $d \in \mathbb{Z}_{m_i}$. Therefore, the expression can be bounded from above by 
\begin{align*}
    &\frac{(m+1)}{\binom{n}{t}(n-t)}  \sum_{T} \sum_{i \in \overline{T}} \sum_{c \in \mathbb{Z}_{m_i}} \sum_{w : w_i = c} \sum_{a ,b\in \mathbb{Z}_{m_i}}   \Pr_{x \sim p_{\overline{T}}} [x = w] \cdot\mathbbm{1}\left\{(w^{(i) \to a}, i, b) \in G^{[\kappa]}(T) \land w^{(i) \to a} \in A_T \land w \in B_T \right\}\\
&= \frac{(m+1)}{\binom{n}{t}(n-t)} \sum_{T} \sum_{w \in B_T}\Pr_{x \sim p_{\overline{T}}} [x = w]  \sum_{i \in \overline{T}}  \sum_{a,b\in \mathbb{Z}_{m_i}}  \mathbbm{1}\left\{(w^{(i) \to a}, i, b) \in G^{[\kappa]}(T) \land w^{(i) \to a} \in A_T \right\}.
\end{align*}

Next, considering the sum of the indicator over all $i \in \overline{T}$ and all $a \in \mathbb{Z}_{m_i}$ yields all possible ways to get to $w^{(i) \to b}$ from $A_T$ in $G^{[\kappa]}(T)$. Thus our expression equals: 
$$\frac{(m+1)}{\binom{n}{t}(n-t)} \sum_{T \in \mathcal{P}(t)} \sum_{w \in B_T} \Pr_{x \sim p_{\overline{T}}} [x = w ] \sum_{b \in \mathbb{Z}_{m_i}} \left[\text{number of edges from } A_T \text{ to } w^{(i) \to b} \text{ in } G^{[\kappa]}(T)\right].$$
Because each vertex in $A_T$ has out-degree at most $2d$ in $G^{[\kappa]}(T)$, we may apply Lemma \ref{Lemma3.4} to conclude that the number of edges from $A_T$ to the string $w^{(i) \to b}$ is at most $2d$, for each $b$. We can therefore say that the probability we subtract is bounded from above by:
$$\frac{2dm(m+1)}{\binom{n}{t}(n-t)}  \sum_{T \in \mathcal{P}(t)} \sum_{w \in B_T}\Pr_{x \sim p_{\overline{T}}} [x = w ] 
= \frac{2d m (m+1)}{n - t} \cdot \Pr_{T, z}\big[z \in B_T\big] .
$$
As a result we have
$$\Pr_{(\pi, y)}\big[ X_i = 1\big]   \geq \Bigg(\zeta' \xi - \left(1 + 4m^3  \right) \cdot \Pr_{T,z} \big[\rho(z) \text{ is } t\text{-contributing} \big] \Bigg) \cdot \frac{d}{(n - t)m}.$$

To conclude the proof of Lemma \ref{Lemma3.1} for Case 2, we set $\gamma = d$. Then we either have
$$\Pr_{T,z}\big[\rho(z) \text{ is } t\text{-contributing}\big] \geq \frac{\zeta' \xi}{ 8m^3},$$
which implies that
$$\mathbb{E}_{\rho\sim \mathcal{D}(t,p)} \left[
\Big\|\mu\left(p_{|\rho}\right)\Big\|_2\right]
\gtrsim\frac{\zeta' \xi}{m^3} \cdot \frac{\sqrt{d}}{m^{\kappa + 1}} 
\gtrsim \frac{\alpha}{m^{7.5}\cdot \log(1/\alpha)\log^2 n \log^2(nm) \log^2(nm/\alpha)}.$$
Or we have
$$\mathbb{E}_{(\pi,y)} \left[
\Big\|\mu\left(p_{|\rho(\pi,y)}\right)\Big\|_2
\right] \gtrsim \frac{t+1}{m^{\kappa}} \cdot 
\frac{1}{\sqrt{d}}\cdot \frac{d\zeta' \xi}{(n-t)m}  \gtrsim \frac{t}{n} \cdot \frac{\alpha}{m^{4.5}\cdot \log(1/\alpha)\log^2 n \log^2(nm) \log^2(nm/\alpha)}.$$
This finishes the proof of Lemma \ref{Lemma3.1}.

\section{Mean Testing over Hypergrids} \label{Mean-test-section}

\begin{algorithm}[t]
\begin{algorithmic}[1]
\Require Dimension $n$, $M=(m_1,\ldots,m_n)$ and sample access to a distribution $p$ over $\mathbb{Z}_M$
\State Draw $N = O(m \log (mn))$ samples $x$ from $p$.
\If {any $i \in [n]$ and $a \in [m_i]$ satisfy ($\#$ samples with $x_i = a$ is $> {2N}/{m_i}$ or $< {N}/({2m_i})$)}
    \State 
    \textbf{return} \textsf{reject}
\Else
    \State \textbf{return} \textsf{accept} 
\EndIf
\end{algorithmic}	
\caption{$\textsc{CoarseTest}(n, M,p)$}\label{algo:coarsetest}
\end{algorithm}

\begin{algorithm}[t]
  \begin{algorithmic}[1]
    \Require Dimension $n$, $M=(m_1,\ldots,m_n)$, $\eps>0$ and sample access to a distribution  $p$ over $\mathbb{Z}_M$
    \State \textsc{CoarseTest}$(n,M,p)$ and return \textsf{reject} if it returns \textsf{reject} 
    \ForAll{$k\in [m^2]$}
        \State Run \textsc{MeanTester}$(n,p^{(k)},\eps/2m)$ for $O(\log m)$ many times
        \State \Return \textsf{reject} if the majority of calls return \textsf{reject}
    \EndFor 
    \State \Return \textsf{accept} 
\end{algorithmic}
  \caption{$\textsc{ProjectedTestMean}(n, M, \eps,p)$}\label{algo:projectedtestmean}
\end{algorithm}

\textsc{ProjectedTestMean} is presented as Algorithm \ref{algo:projectedtestmean}.
It uses a preprocessing subroutine called 
  \textsc{CoarseTest} which is presented as Algorithm \ref{algo:coarsetest}.
It also uses $\textsc{MeanTester}$ from \cite{canonne2021random}. To state the performance guarantee of $\textsc{MeanTester}$,
  we note that the bias vector $\mu(p)$ of a distribution $p$ over $\{-1,1\}^n$ has the following simpler form:  
$$  
\mu_i(p):=\Pr_{x\sim p}\big[x_i=1\big]
-\Pr_{x\sim p}\big[x_i=-1\big].
$$

\begin{theorem}
[\textsc{MeanTester} \cite{canonne2021random}]\label{mainfromprevious}
There is an algorithm (\textsc{MeanTester}) which, given $n$, sample
access to a distribution $p$ over $\{-1,1\}^n$, and a parameter $\eps\in (0,1]$,
draws
$$
O\left(\max\left\{\frac{1}{\eps^2\sqrt{n}},
\frac{1}{\eps}\right\}\right)
$$
many samples from $p$ and 
has the following performance guarantee:
\begin{enumerate}
    \item If $p$ is the uniform distribution, the algorithm outputs accept with probability at least $2/3$; and
\item If $p$ satisfies $\|\mu(p)\|_2\ge \eps$, the algorithm outputs reject with probability at least $2/3$.
\end{enumerate}
\end{theorem}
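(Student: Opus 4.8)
Theorem~\ref{mainfromprevious} is quoted verbatim from \cite{canonne2021random}, so the ``proof'' I would give in the paper is simply to cite it; nothing about hypergrids enters here. If I had to reprove it from scratch, the plan would be the standard ``estimate $\|\mu(p)\|_2^2$ by a degree-two $U$-statistic and threshold it'' strategy, with all the work in the variance bookkeeping, so that is what I sketch below.

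First I would set up the estimator. Draw $m$ i.i.d.\ samples $x^{(1)},\dots,x^{(m)}\sim p$, regard them as vectors in $\{-1,1\}^n$, and form
$$
Z \;:=\; \binom{m}{2}^{-1}\sum_{1\le a<b\le m}\langle x^{(a)},x^{(b)}\rangle
\;=\;\binom{m}{2}^{-1}\sum_{a<b}\sum_{i\in[n]} x^{(a)}_i x^{(b)}_i .
$$
Since $\mathbb{E}\big[x^{(a)}_i x^{(b)}_i\big]=\mathbb{E}[x_i]^2=\mu_i(p)^2$ for $a\ne b$, we get $\mathbb{E}[Z]=\|\mu(p)\|_2^2$, which is $0$ when $p$ is uniform and at least the soundness threshold when $\|\mu(p)\|_2$ is large. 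The tester outputs $\accept$ iff $Z$ falls below (roughly) half that threshold; constant error probabilities are boosted to $2/3$ by repetition if needed.

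The substance is the two variance estimates. For completeness ($p=\mathcal{U}$) the coordinates are independent and uniform, so each $\langle x^{(a)},x^{(b)}\rangle$ has mean $0$ and variance exactly $n$, and --- crucially --- distinct pairs $\{a,b\}$, $\{c,d\}$ yield uncorrelated terms (check the cases $|\{a,b\}\cap\{c,d\}|\in\{0,1\}$ directly, using $\mathbb{E}[x_i]=0$); hence $\mathrm{Var}(Z)=\Theta(n/m^2)$, and Chebyshev's inequality bounds the false-positive probability once $m$ is a large enough multiple of the first branch of the claimed sample complexity. For soundness I would use the Hoeffding decomposition $\mathrm{Var}(Z)=\Theta\!\big(\zeta_1/m+\zeta_2/m^2\big)$ with $\zeta_1=\mathrm{Var}_{x\sim p}\big(\langle x,\mu(p)\rangle\big)$ and $\zeta_2=\mathrm{Var}_{x,x'\sim p}\big(\langle x,x'\rangle\big)$, bound $\zeta_1\le\|\mu(p)\|_1^2\le n\|\mu(p)\|_2^2$ and $\zeta_2\le\sum_{i,j}\mathbb{E}[x_ix_j]^2=\|\mathbb{E}_{x\sim p}[xx^{\mathsf T}]\|_F^2$ (at most $n^2$ since the second-moment matrix is PSD with unit diagonal, and much smaller when the individual biases are small), and compare these against the gap $(\mathbb{E}[Z])^2=\|\mu(p)\|_2^4$ via Chebyshev to force rejection.

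The main obstacle is precisely this last comparison: a crude use of the above bounds only gives a $\epsilon^{-2}$-type sample complexity, whereas the theorem asserts $\max\{\epsilon^{-2}n^{-1/2},\epsilon^{-1}\}$. Closing this gap is where \cite{canonne2021random} does the real accounting --- splitting off the coordinates whose individual bias is large (an $\ell_\infty$-type contribution, cheap to detect one coordinate at a time) from the remaining coordinates whose contribution is genuinely spread out, so that on the latter the $U$-statistic variance is dominated by the $\zeta_2/m^2$ term while $\zeta_1/m$ is negligible. Getting the constants and the regime split to land exactly on the stated bound is the only nontrivial point; everything else is routine second-moment computation.
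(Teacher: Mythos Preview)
Your first sentence is exactly right and is all the paper does: Theorem~\ref{mainfromprevious} is imported from \cite{canonne2021random} as a black box, with no proof or sketch given here. The rest of your proposal is a reasonable outline of how the result is actually obtained in \cite{canonne2021random}, but it is extraneous for this paper's purposes.
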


The main idea behind \textsc{ProjectedTestMean} is to reduce the mean testing of $p$ over $\mathbb{Z}_M$ to that of the following collection of $m^2$ distributions over $\{-1,1\}^n$.

\begin{definition}\label{p_prime_distribution}
Fix an arbitrary ordering of 
  pairs $(c,d)\in \mathbb{Z}_{m_i}^2$ for each $i\in [n]$ (so that we can refer to them as the $k$-th pair, $k=1,\ldots,m_i^2$).
Let $p$ be a distribution over $\mathbb{Z}_M$.
Given any $k\in [m^2]$, we define 
  a distribution $p^{(k)}$ over $\{-1,1\}$ as follows.
For each $i\in [n]$, let $(c_i,d_i)$ be the $\min(k,m_i^2)$-th pair in $\mathbb{Z}_{m_i}^2$.
To draw $z\sim p^{(k)}$, we first draw $x\sim p$ and then set $z_i$ for each $i\in [n]$ to be $1$ if $x_i=c_i$,
$-1$ if $x_i=d_i$, and an independent and uniformly random bit from $\{-1,1\}$ if $x_i\notin \{c_i,d_i\}$.
\end{definition}

We note that sample access to $p^{(k)}$ for any $k$ can be simulated easily, sample by sample, using sample access to $p$. The following simple lemma helps connects $\|\mu(p)\|_2^2$ with $\sum_k \|\mu(p^{(k)})\|_2^2$.

\begin{lemma}\label{lemma666}
Suppose that $p$ satisfies 
\begin{equation}\label{assump}
\frac{1}{4m_i}\le \Pr_{x\sim p}\big[x_i=a\big]\le \frac{4}{m_i}
\end{equation}
for all $i\in [n]$ and $a\in \mathbb{Z}_{m_i}$. Then we have 
$$
\sum_{k\in [m^2]} \left\|\mu(p^{(k)})\right\|_2^2
\ge \frac{1}{4m^2}\cdot \big\|\mu(p)\big\|_2^2.
$$
\end{lemma}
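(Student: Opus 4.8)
The plan is to expand both sides in terms of the one-dimensional marginals and compare term by term, coordinate by coordinate. Fix $i \in [n]$. On the left-hand side, for each $k \in [m^2]$ the contribution of coordinate $i$ to $\|\mu(p^{(k)})\|_2^2$ is exactly $\big(\mu_i(p^{(k)})\big)^2$, where $\mu_i(p^{(k)}) = \Pr[z_i = 1] - \Pr[z_i = -1]$ for $z \sim p^{(k)}$. By the definition of $p^{(k)}$, if $(c_i, d_i)$ is the relevant pair in $\mathbb{Z}_{m_i}^2$, then the uniform bits assigned when $x_i \notin \{c_i, d_i\}$ cancel in expectation, so $\mu_i(p^{(k)}) = \Pr_{x \sim p}[x_i = c_i] - \Pr_{x \sim p}[x_i = d_i]$. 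On the right-hand side, the contribution of coordinate $i$ to $\|\mu(p)\|_2^2$ is $\sum_{c,d \in \mathbb{Z}_{m_i}} \big(\mu_i^{c,d}(p)\big)^2$.

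The key step is then a purely one-dimensional comparison: for each coordinate $i$ I want
$$
\sum_{c,d \in \mathbb{Z}_{m_i}} \big(\Pr_{x \sim p}[x_i = c] - \Pr_{x \sim p}[x_i = d]\big)^2 \;\ge\; \sum_{c,d \in \mathbb{Z}_{m_i}} \big(\mu_i^{c,d}(p)\big)^2,
$$
up to the factor $1/(4m^2)$ and accounting for the fact that the left side only ranges over $k \in [m^2]$ while pairs for coordinate $i$ are indexed by $k \in [m_i^2]$ (when $m_i^2 < m^2$, the pair is repeated, which only helps). Using assumption (\ref{assump}), for any $c,d \in \mathbb{Z}_{m_i}$ we have $\Pr_{x\sim p}[x_i = c] + \Pr_{x\sim p}[x_i = d] \le 8/m_i \le 8/2 = 4$ is too crude; rather the point is that $\mu_i^{c,d}(p) = \frac{\Pr[x_i=c] - \Pr[x_i=d]}{\Pr[x_i=c]+\Pr[x_i=d]}$ and the denominator is at most $8/m_i$, and since each term $\Pr[x_i = c] \ge 1/(4m_i)$, the denominator is at least $1/(2m_i)$; hence $\big(\mu_i^{c,d}(p)\big)^2 \le (2m_i)^2 \cdot \big(\Pr[x_i=c] - \Pr[x_i=d]\big)^2 \le 4m^2 \cdot \big(\Pr[x_i=c]-\Pr[x_i=d]\big)^2$. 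Summing over $c,d \in \mathbb{Z}_{m_i}$ and then over $i \in [n]$, and using that each pair $(c_i,d_i)$ appearing for coordinate $i$ is hit by at least one $k \in [m^2]$, gives the claimed bound $\sum_{k} \|\mu(p^{(k)})\|_2^2 \ge \frac{1}{4m^2}\|\mu(p)\|_2^2$.

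I expect the main obstacle to be purely bookkeeping: making sure the indexing of pairs across coordinates of differing sizes $m_i$ lines up correctly, i.e., that summing $\sum_{k \in [m^2]} \big(\mu_i(p^{(k)})\big)^2$ recovers (at least once) every term $\big(\mu_i^{c,d}(p)\big)^2$ with $c,d \in \mathbb{Z}_{m_i}$, given that for $k > m_i^2$ the construction reuses the $m_i^2$-th pair. Since reused pairs only add nonnegative terms to the left-hand side, this causes no loss, but it must be stated carefully. The analytic content — bounding the bias by a multiple of the probability difference via (\ref{assump}) — is a one-line estimate.
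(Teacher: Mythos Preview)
Your proposal is correct and follows essentially the same argument as the paper: compute $\mu_i(p^{(k)}) = \Pr[x_i=c_i]-\Pr[x_i=d_i]$, rewrite this as $\mu_i^{c_i,d_i}(p)\cdot(\Pr[x_i=c_i]+\Pr[x_i=d_i])$, use assumption~(\ref{assump}) to lower bound the sum of the two probabilities by $1/(2m_i)\ge 1/(2m)$, and observe that every pair $(c,d)\in\mathbb{Z}_{m_i}^2$ is hit by at least one $k\in[m^2]$ (with repetitions only helping). The paper's proof is in fact terser than yours and omits the bookkeeping discussion you include, so your write-up is if anything more complete.
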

\begin{proof}
Note that every term in $\|\mu(p)\|_2^2$ appears at least once on the LHS, except that it is multiplied by $|\Pr_{x\sim p}[x_i=c]+\Pr_{x\sim p}[x_i=d]|^2$ for some $i\in [n]$ and $c,d\in \mathbb{Z}_{m_i}$.
The latter (without squaring) is at least $1/(2m_i)\ge 1/(2m)$ given the assumption, and the lemma follows.
\end{proof}

The assumption (\ref{assump}) of Lemma \ref{lemma666} can be easily checked by \textsc{CoarseTest} 
(Algorithm \ref{algo:coarsetest}).
The proof of its performance guarantee below is standard using the Chernoff bound.

\begin{lemma} \label{coarsetest-performancelemma}
There is an algorithm (Algorithm \ref{algo:coarsetest}: \textsc{CoarseTest}) which, given $n, M=(m_1,\ldots,m_n)$, and sample access to a distribution $p$ over $\mathbb{Z}_M$, draws $O ( m \log(m n) )$ samples $x \sim p$ and satisfies:
\begin{enumerate}
    \item If $p$ is the uniform distribution, the algorithm outputs $\accept$ with probability at least $1-1/n$.
    \item If there exists an $i \in [n]$ and an $a \in \mathbb{Z}_{m_i}$ such that either$$\Pr_{x \sim p} \left[x_i = a\right] < \frac{1}{4m_i} ~~~ \text{or} ~~~ \mathbb{P}_{x \sim p} \left[x_i = a \right] > \frac{4}{m_i},$$
    then the algorithm outputs \textit{reject} with probability at least $1 - 1/n$.
\end{enumerate}
\end{lemma}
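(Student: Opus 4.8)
The plan is a routine Chernoff-bound plus union-bound argument. I would set $N = C\, m \log(mn)$ for a sufficiently large absolute constant $C$, which already matches the claimed sample complexity $O(m\log(mn))$. For each $i \in [n]$ and $a \in \mathbb{Z}_{m_i}$, write $q_{i,a} = \Pr_{x\sim p}[x_i = a]$ and let $C_{i,a}$ denote the number of the $N$ drawn samples with $x_i = a$, so that $C_{i,a} \sim \mathrm{Bin}(N, q_{i,a})$. I would analyze one pair $(i,a)$ at a time and finish with a union bound over the $\sum_{i} m_i \le nm$ such pairs.

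For the completeness part, if $p$ is uniform then $q_{i,a} = 1/m_i$ and $\mathbb{E}[C_{i,a}] = N/m_i \ge N/m$. A multiplicative Chernoff bound gives $\Pr[C_{i,a} > 2N/m_i] \le e^{-N/(3m_i)} \le e^{-N/(3m)}$ and $\Pr[C_{i,a} < N/(2m_i)] \le e^{-N/(8m_i)} \le e^{-N/(8m)}$; for $C$ large enough each of these is at most $(mn)^{-2}$, so a union bound over the $\le nm$ pairs shows that \textsc{CoarseTest} fails to output $\accept$ with probability at most $1/n$.

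For the soundness part, I would let $(i^\ast, a^\ast)$ be a pair witnessing the hypothesis and set $q = q_{i^\ast,a^\ast}$ and $m^\ast = m_{i^\ast}$. If $q < 1/(4m^\ast)$, then since the binomial tail is monotone nondecreasing in the success probability, $C_{i^\ast,a^\ast}$ is stochastically dominated by $Y \sim \mathrm{Bin}(N, 1/(4m^\ast))$, and hence $\Pr[C_{i^\ast,a^\ast} \ge N/(2m^\ast)] \le \Pr[Y \ge 2\,\mathbb{E}[Y]] \le e^{-\mathbb{E}[Y]/3} = e^{-N/(12m^\ast)} \le 1/n$ for $C$ large; on the complementary event $C_{i^\ast,a^\ast} < N/(2m^\ast)$ and \textsc{CoarseTest} outputs $\reject$. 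The case $q > 4/m^\ast$ (which forces $m^\ast \ge 5$) is handled symmetrically: $C_{i^\ast,a^\ast}$ stochastically dominates $Z \sim \mathrm{Bin}(N, 4/m^\ast)$, so $\Pr[C_{i^\ast,a^\ast} \le 2N/m^\ast] \le \Pr[Z \le \mathbb{E}[Z]/2] \le e^{-\mathbb{E}[Z]/8} = e^{-N/(2m^\ast)} \le 1/n$, and otherwise \textsc{CoarseTest} outputs $\reject$.

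I do not expect any real obstacle. The one point that needs a little care is in the soundness part: $q$ may be far from the threshold, so the mean $Nq$ of $C_{i^\ast,a^\ast}$ could be very small (or very large) and a multiplicative Chernoff bound applied directly to $\mathrm{Bin}(N,q)$ is not immediately the bound we want; instead I would first use monotonicity of the binomial tail to pass to the boundary value $1/(4m^\ast)$ (resp.\ $4/m^\ast$) and only then invoke Chernoff.
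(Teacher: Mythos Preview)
Your proposal is correct and is precisely the standard Chernoff-plus-union-bound argument the paper has in mind; the paper itself does not spell out a proof but merely states that ``the proof of its performance guarantee below is standard using the Chernoff bound.'' Your extra care in the soundness case---passing to the boundary value via stochastic domination before applying Chernoff---is exactly the right way to handle it.
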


We are now ready to finish the proof of Theorem \ref{projectedtestmean-performancelemma}  on \textsc{ProjectedTestMean}:

\begin{proof}[Proof of Theorem \ref{projectedtestmean-performancelemma}]
The number of samples used by \textsc{ProjectedTestMean} is
$$
O\big(m\log(mn)\big)+O\big(m^2\log m\big)\cdot \max\left(\frac{1}{\sqrt{n}}\cdot \frac{4m^2}{\eps^2},\frac{2m}{\eps}\right).
$$

When $p$ is the uniform distribution, it follows by Lemma \ref{coarsetest-performancelemma} that it is rejected by \textsc{CoarseTest} with probability $o_n(1)$.
Given that $p^{(k)}$ is uniform for every $k\in [m^2]$, by setting the constant hidden in the $O(\log m)$ large enough, it follows from Theorem \ref{mainfromprevious}, Chernoff bound and a union bound over all $k\in [m^2]$ that it is rejected by calls to \textsc{MeanTester} with probability $o_n(1)$.

When $p$ satisfies $\|\mu(p)\|_2\ge \eps m\sqrt{n}$, we consider two cases.
If there exist $i\in [n]$ and $a\in \mathbb{Z}_{m_i}$ such that either $\Pr_x[x_i=a]< 1/(4m_i)$ or $\Pr_x[x_i=a]> 4/m_i$, then $p$ is rejected by \textsc{CoarseTest} with probability at least $1-o_n(1)$.
On the other hand, if this is not the case, then by Lemma \ref{lemma666}, there exists a $k\in [m^2]$ such that $\|\mu(p^{(k)})\|_2\ge \eps\sqrt{n}/(2m)$.
It follows from Theorem \ref{mainfromprevious} and Chernoff bound that with probability at least $1-o_n(1)$, the majority of calls to \textsc{MeanTest} reject $p^{(k)}$. 
\end{proof}

\section{Discussion and Open Problems} \label{section:openproblems}

In this paper, we study uniformity testing over extended high-dimensional domains $[m_1] \times \dots \times [m_n]$ under the subcube conditional query model. In doing so, we prove a robust version Pisier's inequality over hypergrids, which is a result of independent interest. We give an algorithm which makes $\tilde{O}(\text{poly}(m)\sqrt{n}/\epsilon^2)$ queries to a subcube conditional sampling oracle, where $m = \max_i m_i$. This algorithm has nearly optimal sample complexity when $m$ is a constant. The algorithm is a modification of the algorithm of \cite{canonne2021random}, where additional steps are needed in our setting to properly draw the connection to a subroutine performing mean testing over the hypercube. 

We now highlight several compelling open problems related to distribution testing over extended high-dimensional domains $[m_1] \times \dots \times [m_n]$.
\\\\
\textbf{Lower bounds:} There is currently a lack of techniques for studying lower-bounds in the subcube conditional query model setting. To the best of our knowledge, all the known lower-bounds in the subcube conditional query model setting are transferred over from lower-bounds in the standard sampling setting. For example, the lower-bound for uniformity testing in the hypercube setting from \cite{DBLP:journals/corr/CanonneDKS16, DBLP:journals/corr/DaskalakisDK16, bc18} is a consequence of lower-bounds for testing uniformity of product distributions. This lower-bound, which matches the upper-bound given in \cite{canonne2021random} up to poly-logarithmic factors, utilizes the fact that subcube conditional queries do not provide stronger access to product distributions than standard samples do. In the hypergrid setting considered in this paper, the best known lower-bound is also carried over from the standard sampling setting.
\\\\
\textbf{Dependence on $m$:} It remains an interesting question to pin down the dependency on $m$ in the query complexity of uniformity testing of distributions over hypergrids. In this paper, we did not optimize the dependency on $m$. We imagine that, with some work, the exponent in the dependency on $m$ could be brought down to around half of its current value (for example, from $m^{21}$ to $m^{10}$, perhaps). It remains an interesting open question to obtain tight dependence on $m$, and a more challenging open question to obtain tight bounds on all three parameters $n$, $m$ and $\epsilon$. Although  $m^{21}$ may not be optimal, the polynomial dependence on $m$ is a meaningful step in analyzing distribution testing over extended domains. Our analysis demonstrates how to extend and modify techniques from the hypercube domain, reveals new technical challenges, and develops new technical lemmas like the extended Pisier's inequality suitable for the hypergrid domain.
\\\\
\textbf{Identity testing in high dimensions with subcube conditional queries:} There is no direct reduction from identity testing of product distributions or general distributions to uniformity testing. While identity testing in high dimensions has been explored under weaker oracle assumptions (\cite{BCSV}), query complexity bounds for identity testing in the subcube conditional setting are unknown. This is true even for distributions over hypercubes.

\bibliographystyle{alpha}
\bibliography{main}

\appendix

\section{Additional Proofs}\label{appendix-additionalproofs}

\begin{proof}[Proof of Lemma \ref{Lemma1.4}]: Fix any subset $S\subseteq [n]$ of size $t$. Given $u\in 
\bigtimes_{i \in \overline{S}} \mathbb{Z}_{m_i}$, we write $p_{|\rho({S}, u)}$ to denote the distribution supported on $\smash{\bigtimes_{i \in {S}} \mathbb{Z}_{m_i}}$ given by drawing $x \sim p$ conditioned on $x_{\overline{{S}}} = u$. 

We expand the definition of total variation distance to obtain the following expressions:
\begin{align*}
    &2 d_{TV}(p, \mathcal{U})\\ &= \sum_{x \in \mathbb{Z}_M} \left|p(x) - \prod_{i \in [n]} \frac{1}{m_i}\right|\\
&= \sum_{u \in \bigtimes_{i \in \overline{S}} \mathbb{Z}_{m_i}} \sum_{v \in \bigtimes_{i \in S} \mathbb{Z}_{m_i}} \abs{\Pr_{x \sim p} \big[x_{\overline{{S}}} = u \land x_{{S}} = v \big] - \prod_{i \in [n]} \frac{1}{m_i}}\\
&= \sum_{u \in \bigtimes_{i \in \overline{{S}}} \mathbb{Z}_{m_i}} \sum_{v \in \bigtimes_{i \in {S}} \mathbb{Z}_{m_i}} \abs{p_{\overline{S}}(u) \cdot \Pr_{x \sim p} \big[x_{{S}} = v |  x_{\overline{S}} = u\big] - \prod_{i \in [n]} \frac{1}{m_i} }\\
&\leq \sum_{u \in \bigtimes_{i \in \overline{{S}}} \mathbb{Z}_{m_i}} \sum_{v \in \bigtimes_{i \in {S}} \mathbb{Z}_{m_i}}  \abs{p_{\overline{{S}}}(u) \cdot \Pr_{x \sim p}\big[x_{{S}} = v | x_{\overline{{S}}} = u\big] - p_{\overline{{S}}}(u) \cdot \prod_{i \in S} \frac{1}{m_i} } \\&
\hspace{4cm}+ \abs{ p_{\overline{{S}}}(u) \cdot \prod_{i \in S} \frac{1}{m_i} -  \left(\prod_{j \in \overline{{S}}} \frac{1}{m_j}\right) \cdot \left(\prod_{i \in S} \frac{1}{m_i}\right)}\\
&= \sum_{u \in \bigtimes_{i \in \overline{{S}}} \mathbb{Z}_{m_i}} p_{\overline{{S}}}(u) \cdot 2 d_{TV}(p_{|\rho(S, u)}, \mathcal{U}) + \sum_{v \in \bigtimes_{i \in {S}} \mathbb{Z}_{m_i}} \left(\prod_{i \in S} \frac{1}{m_i} \right) \cdot 2 d_{TV}(p_{\overline{{S}}}, \mathcal{U})\\
&= \sum_{u \in \bigtimes_{i\in \overline{{S}}} \mathbb{Z}_{m_i}} p_{\overline{{S}}}(u) \cdot 2 d_{TV}(p_{|\rho({S}, u)}, \mathcal{U}) + 2 d_{TV}(p_{\overline{{S}}}, \mathcal{U}).
\end{align*}
Take the expectation of this inequality over the choice of ${S} \sim {S}_{\sigma}$. We obtain the lemma.
\end{proof}

\begin{proof}[Proof of Theorem \ref{Theorem1.5} assuming Lemma \ref{Lemma3.1}]
The proof is the same  as  \cite{canonne2021random} except for a minor~change.
If $\sigma n \not \in [5, n-5]$, then Theorem \ref{Theorem1.5} is trivially satisfied. Consider $5 \leq \sigma n \leq n - 5$ and let $\delta = \min(\sigma, 1 - \sigma)/2 > 0$.
For notational simplicity, for each $t \in [n-1]$, we write:
$$\alpha_t := \mathbb{E}_{T \sim \mathcal{S}(t)} \big[d_{TV} (p_{\overline{T}}, \mathcal{U}) \big].$$
Note that we can sample $\rho \sim \mathcal{D}_\sigma (p)$ by drawing $k \sim \text{Bin}(n, \sigma)$ and $\rho \sim \mathcal{D}(k, p)$. Let $\beta_t$ be the probability that $k = t$, for $k \sim \text{Bin}(n, \sigma)$. Let
$$
B := \big\{t \in [n-1] : t/n \in [\sigma - \delta, \sigma + \delta] \big\}.
$$
Using a Chernoff bound, we see that $\sum_{t \in B} \beta_t \geq 1 - 2 e^{- \delta n / 5}$. Therefore:
\begin{equation} \label{equation18}
\sum_{t \in B} \beta_t \cdot \alpha_t \geq \mathbb{E}_{T \sim \mathcal{S}_\sigma} \big[ d_{TV}(p_{\overline{T}}, \mathcal{U}) \big] - 2 e^{- \delta n / 5}.
\end{equation}
We then can derive the following series of inequalities:
\begin{align}
\nonumber 2 \cdot &\mathbb{E}_{\rho \sim \mathcal{D}_\sigma(p)} \left[ \left\|\mu(p_{|\rho})\right\|_2\right] \\[1ex]
\nonumber &\geq \sum_{t \in B} \Big( \beta_t \cdot \mathbb{E}_{\rho \sim \mathcal{D}(t, p)} \left[\left\|\mu(p_{|\rho})\right\|_2 \right] + \beta_{t + 1} \cdot \mathbb{E}_{\rho \sim \mathcal{D}(t, p)} \left[ \left\|\mu(p_{|\rho})\right\|_2 \right] \Big)\\[0.5ex] \label{equation19}
&\gtrsim \sum_{t \in B} \beta_t \Big( \mathbb{E}_{\rho \sim \mathcal{D}(t, p)} \left[ \left\|\mu(p_{|\rho})\right\|_2 \right] + \mathbb{E}_{\rho \sim \mathcal{D}(t, p)} \left[ \left\|\mu(p_{|\rho})\right\|_2 \right]\Big)
\\[0.5ex] \label{equation20}
&\gtrsim \frac{\sigma}{m^{7.5} \log^4(nm)} \cdot \sum_{t \in B} \beta_t \cdot \frac{\alpha_t}{\log^2(nm/\alpha_t) \log(1/\alpha_t)}
\\[1ex]
\label{equation21}
&\gtrsim \frac{\sigma}{m^{7.5} \text{poly}(\log (nm))} \cdot \widetilde{\Omega}\left(\mathbb{E}_{S \sim \mathcal{S}_\sigma}\left[d_{TV}(p_{\overline{S}}, \mathcal{U}) \right] - 2 e^{- \min (\sigma, 1 - \sigma) n / 10} \right).
\end{align}
To obtain (\ref{equation19}), we used $t/n \in [\sigma - \delta, \sigma + \delta]$, $\delta = \min(\sigma, 1 - \sigma)/2$ and $\sigma \geq 5/n$. This gives us:
$$\frac{\beta_{t+1}}{\beta_t} = \frac{n-t}{t + 1} \cdot \frac{\sigma}{1 - \sigma} \geq \frac{(1 - \sigma)/2}{(3 \sigma / 2) + (1/n)} \cdot \frac{\sigma}{1 - \sigma} \gtrsim 1.$$
To obtain  (\ref{equation20}) we apply Lemma \ref{Lemma3.1} on each $t \in B$. For (\ref{equation21}) we apply Jensen's inequality (as the function $f(a) = a/(\log^2(nm/a)\log(1/a))$ when $a \neq 0$ and $f(0) = 0$ is convex in $[0, 1]$) and use (\ref{equation18}). 

We can now conclude that Theorem \ref{Theorem1.5} holds, assuming Lemma \ref{Lemma3.1}.
\end{proof}

\begin{proof}[Proof of Theorem \ref{thm:subcube-alg-intro}]
\textit{Proof of (i) (completeness):} 
For the completeness proof, we prove by induction on $n$ that, when $p$ is uniform, \textsc{SubCondUni}$(n, M, \epsilon,p)$ returns $\accept$ with probability at least $2/3$. For the base case when $n = 1$, since (\ref{equation11}) is violated, we just run an algorithm (Lemma 4.20) from \cite{BCSV}, and the completeness of the base case comes from the completeness of this algorithm.

Inductively, assume that the statement holds for dimensions $1$ through $n - 1$. If (\ref{equation11}) is violated,~then the analysis is trivial. For the case when  (\ref{equation11}) is satisfied, we note that the restriction $p_{|\rho}$
is uniform for any $\rho$. Since the total number of restrictions $\rho$ drawn in line \ref{scalgo:sample-restriction} is $\smash{O(L \log L) = \tilde{O}(m^{8.5} \sqrt{n}/ \epsilon^2)}$, 
we may set the constant hidden in the choice of $r$ in line \ref{scalgo:run-ptmean} to be sufficiently large so that line \ref{scalgo:reject-ptmean-majority} returns $\reject$ with probability no larger than $1/6$. Using the inductive hypothesis, we can also say that \textsc{SubCondUni} rejects in line \ref{scalgo:recursive-call-reject} with probability no larger than $1/6$. 
The induction step follows from a union bound.\medskip

\noindent\textit{Proof of (ii) (soundness):}  
Assume that $d_{TV}(p, \mathcal{U}) \geq \epsilon$. We prove by induction on $n$ that \textsc{SubCondUni} rejects with probability at least $2/3$. 
For the general case of the induction step, we know that
either the first case of 
(\ref{equation13}) holds and thus, 
(\ref{equation14}) holds, or the second case of (\ref{equation13}) holds. 

For the first case when (\ref{equation14}) holds, we recall the choice of $L$ and notice that the LHS of (\ref{equation14}) is the expectation of a random variable with values in $[0, 1]$ while the RHS is $1/L$. Using a bucketing argument, we find that there exists a $j \in [ \lceil \log (2L)\rceil]$ such that 
$$\Pr_{\rho \sim \mathcal{D}_\sigma (p)} \left [\frac{\left\|\mu(p_{|\rho}) \right\|_2}{m\sqrt{n}} \geq \frac{1}{2^j} \right]\geq \frac{2^{j-1}}{L \lceil \log (2L) \rceil} \geq \frac{2^j}{4 L \log (2 L)}.$$
So, for one of the restrictions $\rho \sim \mathcal{D}_{\sigma}(p)$ that we sample, the condition in the event above holds, with probability at least $1 - e^{-2} > 5/6$. When this holds, each of the calls to \textsc{ProjectedTestMean} on line \ref{scalgo:run-ptmean} rejects with probability at least $2/3$. So on line \ref{scalgo:reject-ptmean-majority}, \textsc{SubCondUni} rejects with probability at least $2/3$.

For the second case, using bucketing again, there exists a $j \in [ \lceil \log(4/\epsilon)\rceil]$ such that $$\text{Pr}_{\rho \sim \mathcal{D}_\sigma (p)} \left[ d_{TV}(p_{| \rho}, \mathcal{U}) \geq 2^{-j} \right] \geq \frac{\epsilon 2^j}{4 \lceil \log(4/\epsilon)\rceil} \geq\frac{\epsilon 2^j}{8 \log(4/\epsilon)}.$$
Using (\ref{equation11}) and   Chernoff bound, the probability of $|\text{stars}(\rho)| > 2 \sigma n$ is at most $e^{- \sigma n/3} < (\epsilon / 8)^3$. Therefore,
$$\text{Pr}_{\rho \sim \mathcal{D}_\sigma (p)} \left[ d_{TV}(p_{| \rho}, \mathcal{U}) \geq 2^{-j} \text{ and } 0 < |\text{stars}(\rho)| \leq 2 \sigma n \right] \geq \frac{\epsilon2^j}{16 \log(4/\epsilon)}.$$
Since we set $s_j' = (32/\epsilon) \log(4/\epsilon) \cdot2^{-j}$, the probability that at least one restriction $\rho$ satisfies the condition above is at least $5/6$. The probability that majority of the calls to \textsc{ProjectedTestMean} reject this $\rho$ is also at least $5/6$. Therefore,  \textsc{SubCondUni} rejects with probability at least $2/3$.\medskip

\noindent\textit{Query complexity.}
Let $\Phi(n, m,\epsilon)$ denote its query complexity. Using induction on $n$, we will show that
\begin{equation} \label{equation16}
    \Phi(n, \epsilon) \leq C \cdot \frac{m^{21} \sqrt{n}}{\epsilon^2} \cdot \log^c \left(\frac{nm}{\epsilon} \right)
\end{equation}
for some absolute constants $C, c > 0$. Pick $C_1$ and $c_1$ to be two constants such that upon running the algorithm from Lemma 4.20 of \cite{BCSV} on $n, m, \epsilon$ that violate  (\ref{equation11}), the query complexity is at most
$$C_1 \cdot \frac{\sqrt{m}}{\epsilon^2} \cdot \log^{c_1} \left( \frac{1}{\epsilon}\right).$$
Let $C_2$ and $c_2$ be  constants such that the  complexity of the non-recursive componen of \textsc{SubCondUni} (line \ref{scalgo:first-for-start} to line \ref{scalgo:first-for-end}) is bounded by:
$$C_2 \cdot \frac{m^{21} \sqrt{n}}{\epsilon^2} \cdot \log^{c_2} \left(\frac{nm}{\epsilon} \right).$$
Recall that $L = \tilde{O}({m^{8.5} \sqrt{n}}/\eps)$. The expression above follows from the following calculation: 
$$\sum_{j=1}^{\lceil \log(2 L) \rceil} \frac{L \log L}{2^j} \cdot \left(O\big(m^{4}\log (mn)\big)\cdot 
\max\left\{\frac{2^{2j}}{\sqrt{n}},2^j\right\}.\right) \cdot \log\left( \frac{nm}{\epsilon}\right) = \tilde{O}\left(\frac{m^{21} \sqrt{n}}{\epsilon} \right).$$

Finally, set $C := 2 \max(C_1, C_2)$ and $c := \max (c_1, c_2)$.
We are now ready to prove (\ref{equation16}). The base case of $n = 1$ is trivial. In the inductive step the case when (\ref{equation11}) is violated is also trivial. For the general case, with our chosen $C$ and $c$, we have the following bound:
$$\Phi(n, \epsilon) \leq \frac{C}{2} \cdot \frac{m^{21} \sqrt{n}}{\epsilon^2} \cdot \log^c\left(\frac{nm}{\epsilon} \right) + \sum_{j = 1}^{\lceil \log(4/\epsilon) \rceil} s_j' \cdot 100 \log\left(\frac{16}{\epsilon} \right) \cdot \Phi(2 \sigma n,2^{-j}).$$
By the inductive hypothesis and the choice of $\sigma$, we can write each term in the second sum as:
$$C \cdot \frac{32}{\epsilon} \cdot \log \left(\frac{4}{\epsilon} \right) \cdot 100  \log \left(\frac{16}{\epsilon} \right) \cdot m^{21}\sqrt{2 \sigma n}  \cdot 2^j \cdot \log^{c} \left( \sigma n m 2^{j+1} \right) \leq \frac{C}{32} \cdot \frac{m^{21}\sqrt{n}}{\epsilon} \cdot 2^j \cdot \log^c \left(\frac{n m}{\epsilon} \right)$$
using $C_0 \geq (32^2 \cdot 100)^2 \cdot 2$ (from the choice of $\sigma$ in (\ref{eq:defc0})). Lastly, use the following inequality:
$$\sum_{j = 1}^{\lceil \log(4/\epsilon) \rceil} 2^j < 2^{\lceil \log(4/\epsilon) \rceil + 1} \leq \frac{16}{\epsilon}.$$
By induction, we have now proven the query complexity.
\end{proof}

\end{document}